\documentclass[a4paper,USenglish,cleveref, autoref, thm-restate,numberwithinsect]{lipics-v2021}

\nolinenumbers

\bibliographystyle{plainurl}%

\title{The theory of concatenation over finite models}

\author{Dominik D. Freydenberger}{Loughborough University, Loughborough, United Kingdom}{}{0000-0001-5088-0067}{Supported by EPSRC grant EP/T033762/1.}
\author{Liat Peterfreund}{DI ENS, ENS Paris, CNRS, PSL University, INRIA, Paris, France}{}{}{Supported by Fondation des Sciences Math\'{e}matiques de Paris (FSMP). A part of this work was done while affiliated with IRIF, CNRS, Universit\'{e} de Paris and with Edinburgh University.}

\authorrunning{D.\,D. Freydenberger and L. Peterfreund} 
\Copyright{Dominik D. Freydenberger and Liat Peterfreund} 

\ccsdesc[500]{Theory of computation~Database query languages (principles)}
\ccsdesc[500]{Theory of computation~Logic and databases}
\ccsdesc[500]{Theory of computation~Finite Model Theory}

\keywords{finite model theory,  word equations, descriptive complexity,  model checking, document spanners}

\relatedversiondetails[cite=thisPaper]{Conference Version}{} 

\acknowledgements{	The authors thank Joel D.\ Day, Manfred Kufleitner, Leonid Libkin, Sam M.\ Thompson, and the reviewers of the current and  previous versions  for their helpful comments.}

\hideLIPIcs  %

\EventEditors{Nikhil Bansal, Emanuela Merelli, and James Worrell}
\EventNoEds{3}
\EventLongTitle{48th International Colloquium on Automata, Languages, and Programming (ICALP 2021)}
\EventShortTitle{ICALP 2021}
\EventAcronym{ICALP}
\EventYear{2021}
\EventDate{July 12--16, 2021}
\EventLocation{Glasgow, Scotland (Virtual Conference)}
\EventLogo{}
\SeriesVolume{198}
\ArticleNo{129}
\usepackage{xspace}
\usepackage{tikz}
\usepackage{booktabs}
\usepackage{stmaryrd}
\usepackage{mathtools}

\newcommand{\bind}[2]{#1\{#2\}}
\newcommand{\wroot}[1]{\varrho(#1)}
\newcommand{\fovop}[1]{#1^{o}}
\newcommand{\fovcl}[1]{#1^{c}}
\newcommand{\conc}{\,}
\newcommand{\weqeq}{\mathbin{\dot{=}}}
\newcommand{\foeq}{\mathbin{\dot{=}}}
\newcommand{\lettPred}[1]{\mathsf{P}_{#1}}
\newcommand{\nextPred}{\mathsf{succ}}

\newcommand{\strEqPred}{\mathsf{Eq}}
\newcommand{\strEqPredSet}{\mathsf{EQ}}

\newcommand{\minConst}{\mathsf{min}}
\newcommand{\maxConst}{\mathsf{max}}
\newcommand{\constr}[2]{#1\mathbin{\dot{\in}}#2}

\newcommand{\formulaWidth}{\mathsf{wd}}
\newcommand{\treeWidth}[1]{\mathsf{tw}(#1)}
\newcommand{\struct}[1]{\mathcal{A}'_{#1}}
\newcommand{\structClassic}[1]{\mathcal{A}_{#1}}
\newcommand{\compl}[1]{\overline{#1}}

\newcommand{\ar}{\mathsf{ar}}

\newcommand{\VALC}{\mathsf{VALC}}
\newcommand{\INVALC}{\mathsf{INVALC}}

\newcommand{\fun}[1]{\llbracket #1 \rrbracket}

\newcommand{\df}{:=}
\newcommand{\logC}{\ensuremath{\mathsf{C}}}

\newcommand{\logCcon}[1]{\logC[#1]}
\newcommand{\logCreg}{\logCcon{\REG}}
\newcommand{\REG}{\mathsf{REG}}

\newcommand{\FC}{\ensuremath{\mathsf{FC}}}
 
\newcommand{\FCcon}[1]{\FC[#1]}

\newcommand{\FCreg}{\ensuremath{\FCcon{\REG}}}

\newcommand{\FCtxt}{\texorpdfstring{\textsf{FC}\xspace}{FC}}
\newcommand{\FCregtxt}{\texorpdfstring{\textsf{FC$[$REG$]$}\xspace}{FC[REG]}}
\newcommand{\FOstrtxt}{\texorpdfstring{\textsf{FO[Eq]}\xspace}{FO[Eq]}}

\newcommand{\FOord}{\ensuremath{\FO[<]}}
\newcommand{\FO}{\mathsf{FO}}
\newcommand{\MSO}{\mathsf{MSO}}

\newcommand{\FOstrcon}[1]{\FO[\strEqPredSet, #1]}
\newcommand{\FOstr}{\FO[\strEqPredSet]}

\newcommand{\FOstrReg}{\FOstrcon{\REG}}
\newcommand{\FOordtxt}{\texorpdfstring{\textsf{FO}$[\textsf{<}]$\xspace}{FO[<]\xspace}}

\newcommand{\Datalogtxt}{\textsf{Datalog}\xspace}

\newcommand{\DataSplog}{\mathsf{FC}\mhyphen\mathsf{Datalog}}
\newcommand{\DataSplogtxt}{\textsf{FC-Datalog}\xspace}

\newcommand{\ruleset}{\Phi}
\newcommand{\rulesymbs}{\mathcal{R}}
\newcommand{\splog}{\ifmmode{\mathsf{SpLog}}\else{\textsf{SpLog}}\xspace\fi}
\newcommand{\splogneg}{\mathsf{SpLog}^{\neg}}
\newcommand{\outDatalog}{\textsf{Ans}}
\mathchardef\mhyphen="2D
\newcommand{\fragEx}[1]{\mathsf{E}\mhyphen#1}
\newcommand{\fragExPos}[1]{\mathsf{EP}\mhyphen#1}

\newcommand{\ta}{\mathtt{a}}
\newcommand{\tb}{\mathtt{b}}
\newcommand{\pow}{\mathcal{P}}

\newcommand{\EF}{Ehrenfeucht–Fra\"{i}ss\'{e}\xspace}

\newcommand{\join}{\bowtie}
\newcommand{\sel}{\zeta^=}
\newcommand{\proj}{\pi}
\newcommand{\emptyword}{\varepsilon}
\newcommand{\ror}{\cup} 
  
\newcommand{\compclass}[1]{\mathsf{#1}}
\newcommand{\cc}{\mathbb{C}}
\newcommand{\setdiff}{-}
\newcommand{\patgraph}[1]{\mathcal{G}_{#1}}

\newcommand{\bigland}{\bigwedge}
\newcommand{\biglor}{\bigvee}

\DeclareMathOperator{\subword}{\sqsubseteq}
\DeclareMathOperator{\factor}{\sqsubseteq}
\DeclareMathOperator{\ppref}{\sqsubset_{\mathsf{p}}}
\DeclareMathOperator{\pref}{\sqsubseteq_{\mathsf{p}}}
\DeclareMathOperator{\supword}{\sqsupseteq}

\newcommand{\Lang}{\mathcal{L}} 
 
\newcommand{\Sub}{\mathsf{Fac}} 

\newcommand{\avs}[1]{(#1)}

\newcommand{\polyeq}{\mathbin{\equiv_{\mathsf{poly}}}}
 
\newcommand{\quot}[1]{``#1''}
\newcommand{\spn}[1]{[#1\rangle}
\newcommand{\SVars}[1]{\mathsf{Var}(#1)}

\newcommand{\fvar}{\mathsf{free}}

\newcommand{\ex}[2]{\exists#1\colon#2}
\newcommand{\fa}[2]{\forall#1\colon#2}
\newcommand{\substsubst}[3]{#1_{#2\mapsto#3}}
\newcommand{\var}{\mathsf{Var}}
\newcommand{\dom}{\mathsf{Dom}}
\newcommand{\mv}{\mathsf{w}}
\newcommand{\sv}{\mathfrak{u}}

\newcommand{\relsym}{\dot{R}}

\newcommand{\ie}{i.\,e.\xspace}
\newcommand{\eg}{e.\,g.\xspace}

\newcommand{\DTC}[1]{#1^{\mathsf{dtc}}}
\newcommand{\TC}[1]{#1^{\mathsf{tc}}}
\newcommand{\logIter}[5]{[#1\: #2, #3\colon #4](#5)}
\newcommand{\tcOp}[3]{[\tc\: #1,#2\colon #3]}
\newcommand{\dtcOp}[3]{[\dtc\: #1,#2\colon #3]}
\newcommand{\LFP}[1]{#1^{\mathsf{lfp}}}
\newcommand{\LFPcon}[2]{#1^{\mathsf{lfp}}[#2]}

\newcommand{\PFP}[1]{#1^{\mathsf{pfp}}}
\newcommand{\dtc}{\mathsf{dtc}}
\newcommand{\tc}{\mathsf{tc}}
\newcommand{\lfp}{\mathsf{lfp}}
\newcommand{\pfp}{\mathsf{pfp}}
\newcommand{\core}{\mathsf{core}}
\newcommand{\cored}{\mathsf{gcore}}
\newcommand{\RGX}{\mathsf{RGX}}
\newcommand{\RGXcore}{\RGX^{\core}}
\newcommand{\RGXcored}{\RGX^{\cored}}

\let\phi\varphi

\theoremstyle{plain}

\newtheorem*{fvtheorem}{Feferman-Vaught theorem}

\begin{document}
	\maketitle
	\begin{abstract}
We propose $\FC$, a new logic on words that combines finite model theory with the theory of concatenation -- a first-order logic that is based on word equations.
Like the theory of concatenation,  $\FC$ is built around word equations; in contrast to it, its semantics are defined to only allow finite models, by limiting the universe to a word and all its factors. As a consequence of this, $\FC$ has many of the desirable properties of $\FO$ on finite models,  while being far more expressive than $\FOord$. Most noteworthy among these desirable properties are sufficient criteria for efficient model checking, and capturing various complexity classes by adding operators for transitive closures or fixed points.	

Not only does $\FC$ allow us to obtain new insights and techniques for expressive power and efficient evaluation of document spanners, but it also provides a general framework for logic on words that also has potential applications  in other areas.   	
\end{abstract}		
\section{Introduction}\label{sec:intro}
This paper proposes a finite model version of the theory of concatenation: $\FC$, a new logic that is designed to describe properties of words and to query them.
While the idea of using logic on words is by no means new, the advantage of $\FC$ is its combination of expressive power and tractable model checking and evaluation.

\subparagraph*{Logic on words}
A  common way of using logic on words is \emph{monadic second-order logic} ($\MSO$)  over a linear order (\eg \cite{str:fin}). That is, a word $w$ is seen as a sequence of positions, and predicates $\lettPred{a}(x)$ express ``letter $a$ at position $x$ of $w$''.
This approach comes with two disadvantages for querying.
The first is that factors (continuous subwords)  cannot be expressed directly. Consider the query ``return all factors of $w$''.
As variables refer to positions, the query would not return a factor $u$ directly, but represent it as a set (or tuple) of positions that describe a specific occurrence of $u$ in $w$.
If $u$ occurs more than once, the query result would contain each occurrence -- unless the logic is powerful enough to prevent this.

This leads us to the second disadvantage, namely that $\MSO$ cannot compare factors of unbounded length.
That is, while $\MSO$ can express queries like ``return the positions of factors of length $k$ that occur twice in $w$'' for a fixed length $k$, it is impossible to express ``return the positions of factors that occur twice in $w$''; or non-regular languages, like that of all words $ww$ with $w\in\{\ta,\tb\}^*$. 
As a result,  many natural relations on factors of $w$ are inexpressible in $\MSO$, in particular the concatenation $x=y z$.

Another approach to logic on words is the \emph{theory of concatenation} (short: $\logC$). First defined by  Quine~\cite{qui:con}, this logic is built on \emph{word equations}, that is, equations of the form $xx\weqeq yyy$, where variables like $x$ and $y$ stand for words over a finite alphabet~$\Sigma$. 
While less prominent than $\FO$ or $\MSO$, the theory of concatenation has been studied extensively since the 1970s, with particular emphasis on word equations. A fairly recent survey on the satisfiability of word equations is~\cite{die:mor}. More current research on   word equations and the theory of concatenation   can be found in \eg~\cite{cio:sol, DBLP:conf/rp/DayGHMN18,  DBLP:conf/icalp/DayM20, DBLP:conf/mfcs/DayMN17, DBLP:conf/icalp/NowotkaS18,DBLP:conf/icalp/Saarela20}.

In contrast to $\MSO$, the logic $\logC$ allows us to treat words as words (instead of intervals of positions) and a position in $w$ can be expressed as the corresponding prefix of $w$.
More importantly, $\logC$ can express properties like ``$u$ is a factor of $v$'' or concatenation like $x=yz$.
This expressive power comes at a price -- even limited use of  negation leads to an undecidable theory (\ie, satisfiability is undecidable, see~\cite{dur:und,qui:con}). 
Contrast this to \emph{first-order logic} ($\FO$) over finite models: By Trakhtenbrot's theorem, satisfiability is undecidable; but the model checking problem is not just decidable, but can even be made tractable (see \eg~\cite{ebb:fin,lib:ele}).

Another situation where using queries for words together with an open universe causes problems occurs in \emph{string databases}, see~\cite{DBLP:journals/jacm/BenediktLSS03, DBLP:journals/jcss/BonnerM98, DBLP:journals/jcss/GinsburgW98, DBLP:journals/jcss/GrahneNU99}.
These query languages treat words as entries of the database instead of  operating on a single word. 
Furthermore, they offer transformation operations that assume an infinite universe. As a result, these  query language usually express Turing-complete functions from words to words.

\subparagraph{Introducing \FCtxt}
The new logic $\FC$ aims to bring the advantages of $\FO$ on finite models to the theory of concatenation. The universe for $\logC$ is usually assumed to be $\Sigma^*$, which means that there is no meaningful distinction between satisfiability and model checking.
The key idea of $\FC$ is  changing  universe from $\Sigma^*$ to the set of all factors of a word $w$; comparable to how the universe for $\MSO$ consists of all positions of a word $w$. 

As $\FC$-formulas are based on word equations, concatenation is straightforward to use. For example,
 ``return all factors that occur   twice in $w$'' can be expressed  as
\[\phi_1(x)\df \ex{p_1,p_2,s_1,s_2}{\bigl(\sv\weqeq p_1\conc x\conc s_1 \land \sv\weqeq p_2\conc x\conc s_2 \land\neg p_1\weqeq p_2\bigr)},\]
where $\sv$ represents $w$. In detail, $\phi_1$ expresses that there are two different ways of decomposing $w$ into $w=p\conc x\conc s$. If we also wanted to know the positions of these occurrences, we could return $p_1$ and $p_2$ (by removing their quantifiers), as these encode the start of each occurrence in $w$. This formula does not rely on the requirement that 
variables can only be mapped to factors of $w$, as the $\sv$ on the left side of the equations ensures this already. Instead, consider
\[\phi_2(x)\df \ex{y,z}{y\weqeq x\conc z\conc x},\]
which returns all factors $x$ that have two non-overlapping occurrences in $w$.
As we not need to know where in $w$ the factor $xzx$ occurs, $\sv$ is not needed in the formula.

The restriction to a finite universe allows us to translate various classical results from $\FO$ to $\FC$. Most importantly, model checking becomes not only decidable, but upper bounds can be lowered in the same way as for $\FO$ (Section~\ref{sec:mc}). 
In fact, $\FC$ can be extended with iteration operators to characterize complexity classes from $\compclass{L}$ to $\compclass{PSPACE}$, 
analogously to $\FO$  on ordered structures; which allows us to define a version of \Datalogtxt on words that includes concatenation (Section~\ref{sec:iterationRecursion}). 
Furthermore, Section~\ref{sec:logicForSpanners} also describes how $\FC$ can be extended with \emph{constraints} (aka \emph{predicates}), while still keeping model checking tractable. 
\subparagraph*{Spanners} An immediate application of $\FC$  is as a logic for  
\emph{document spanners} (or just \emph{spanners}). Spanners are  a rule-based framework for information extraction that was proposed by 
Fagin, Kimelfeld, Reiss, and Vansummeren~\cite{fag:spa} to study the formal properties of the query language AQL of IBM's SystemT for information extraction. 
They can be understood as a combination of regular expressions and relational algebra.

In the last years, spanners have received considerable attention in the database theory community. 
The two main areas of interest are 
expressive power~\cite{fag:spa, fre:splog, fre:doc,  mat:doc, nah:inc,pet:gra, pet:rec,  DBLP:conf/icdt/SchmidS21} and
efficient evaluation~\cite{ama:con, flo:con, fre:joi,  mat:doc, mor:eng, pet:gra, pet:com, DBLP:conf/icdt/SchmidS21};
further topics include
updates~\cite{ama:con,fre:dyn,los:fou}, 
cleaning~\cite{fag:dec}, 
distributed query planning~\cite{dol:split}, 
and a weighted variant~\cite{dol:wei}.

But most of these articles do not focus on the full class of spanners that was introduced by Fagin et al. (called \emph{core spanners}, as they describe the core of AQL), but a much smaller subclass, the \emph{regular spanners}. 
The difference between these is that regular spanners cannot express equality of factors. Hence, techniques for finite automata and $\MSO$  often work on regular spanners; but  they rarely work for core spanners.
Furthermore, although spanners are conceptually similar to relational algebra, many canonical approaches for relational databases and the underlying $\FO$ are not viable in the spanner setting. 
In particular, while \emph{acyclic conjuctive queries} are well-known to be tractable for $\FO$ (see \eg~\cite{abi:fou}), this does not hold for the corresponding class of spanners (see~\cite{fre:joi}).

Although ``pure'' $\FC$ is not powerful enough to express core spanners, extending it with constraints that decide regular languages results in a logic that captures core spanners (Section~\ref{sec:FCandSpanners}). In addition to providing us with a rich and natural class of tractable spanners, this connection also 
allows us to develop a new inexpressibility method (Section~\ref{sec:inexpressibility}).

\section{Preliminaries}\label{sec:prelim}
Let $\emptyword$ denote the \emph{empty word}.
We use $|x|$ for the length of a word, a formula, or a regular
expression $x$, or the number of elements of a finite set $x$.
A word $v$ is a \emph{factor} of a word $w$, written $v\factor w$, if there exists (possibly empty) words $p,s$ with $w=p v s$.
For words $x$ and $y$, let $x\pref y$  ($x$ is a \emph{prefix} of $y$) if $y=xs$ for some $s$, 
and $x\ppref y$ if $x\pref y$ and $x\neq y$.

For alphabets $A,B$, a \emph{morphism} is a function $h\colon A^*\to B^*$ with $h(u\cdot v)=h(u)\cdot h(v)$ for all $u,v\in A^*$. To define $h$, it suffices to define $h(a)$ for all $a\in A$.
Let $\Sigma$ be a finite \emph{terminal alphabet}, and let $\Xi$ be an infinite \emph{variable alphabet} with  $\Sigma\cap\Xi=\emptyset$. We assume $\Sigma$ is fixed and~$|\Sigma|\geq 2$, unless stated otherwise. As a convention, we use typewriter letters (like $\ta$ and~$\tb$) for terminals.
\subparagraph{Patterns and the theory of concatenation}
A \emph{pattern} is a word from $(\Sigma\cup\Xi)^*$. For every pattern $\eta\in(\Sigma\cup\Xi)^*$, let $\var(\eta)$ denote the set of variables that occur in $\eta$.
A \emph{pattern substitution} (or just \emph{substitution}) is a partial morphism $\sigma\colon (\Sigma\cup\Xi)^*\to\Sigma^*$ with $\sigma(\ta)=\ta$ for all $\ta\in\Sigma$. When applying a substitution $\sigma$  to a pattern $\eta$, 
we assume $\sigma$ is defined on $\var(\eta)$, that is, $\dom(\sigma)\supseteq \var(\eta)$.
A \emph{word equation} is a pair of patterns, that is, a pair  $(\eta_L,\eta_R)$ with $\eta_L,\eta_R\in(\Sigma\cup\Xi)^*$. We also write $\eta_L \weqeq \eta_R$, and call~$\eta_L$ and~$\eta_R$  the \emph{left side} and the \emph{right side} of the  equation. 
A \emph{solution} of $\eta_L \weqeq \eta_R$ is a substitution $\sigma$ with $\sigma(\eta_L)=\sigma(\eta_R)$. 

The theory of concatenation  combines word equations with first-order logic. First the syntax: 
	The set~$\logC$ of formulas of the \emph{theory of concatenation} uses 
	word equations $(\eta_L \weqeq \eta_R)$ with $\eta_L,\eta_R\in(\Sigma\cup\Xi)^*$ as atoms. 
	The connectives  are  conjunction, disjunction, negation, and quantifiers with variables from $\Xi$.
	For every $\phi\in\logC$, we define its set of \emph{free variables} $\fvar(\phi)$ by $\fvar(\eta_L \weqeq \eta_R)\df \var(\eta_L)\cup\var(\eta_R)$; extending this canonically.

The semantics build on  solutions of word equations:
	For all $\phi\in\logC$ and all pattern substitutions $\sigma$ with $\dom(\sigma)\supseteq\fvar(\phi)$, we define $\sigma\models\phi$ as follows:
Let	$\sigma\models(\eta_L \weqeq \eta_R)$ if $\sigma(\eta_L) = \sigma(\eta_R)$.  
For the quantifiers, we say $\sigma\models \ex{x}{\phi}$ (or $\sigma\models \fa{x}{\phi}$) if $\substsubst{\sigma}{x}{w}\models\phi$ holds for an (or all) $w\in\Sigma^*$, where  $\substsubst{\sigma}{x}{w}$ is defined by $\substsubst{\sigma}{x}{w}(x)\df w$ and $\substsubst{\sigma}{x}{w}(y)\df \sigma(y)$ for all $y\in (\Sigma\cup\Xi)\setdiff\{x\}$. 
The connectives' semantics are defined canonically. 
\begin{example}
 Let $\phi \df x\mathtt{abc} y  \weqeq y \mathtt{bca} x \land \neg(x\weqeq \emptyword \lor y\weqeq \emptyword)$. Then $\sigma\models \phi$ if and only if $\sigma(x\mathtt{abc} y)=\sigma(y \mathtt{bca} x)$, $\sigma(x)\neq \emptyword$, and $\sigma(y)\neq \emptyword$. For example, if $\sigma(x)=\mathtt{abca}$ and $\sigma(y)=\ta$.
\end{example}
We freely add and omit parentheses as long as the meaning stays clear.
$\fragEx{\logC}$, the \emph{existential fragment} of $\logC$, consists of those formulas that do not use universal quantifiers and that apply negation only to word equations. The \emph{existential-positive fragment} $\fragExPos{\logC}$ allows neither universal quantifiers, nor negation. We also use this notation for other logics that we define.
\section{Finite models in the theory of concatenation}\label{sec:theLogic}
The new logic \emph{finite model version of the theory of concatenation}, namely $\FC$,
is built around word equations; similarly to the theory of concatenation $\logC$.
The latter can be understood as first-order logic over the universe $\Sigma^*$ with  concatenation  -- see for example~\cite{DBLP:conf/lics/HalfonSZ17}, which refers to~$\logC$ as $\mathsf{FO}(A^*,\cdot)$. 
In other words, for $\logC$, we can consider the universe to be fixed (for a given terminal alphabet $\Sigma$).
The key idea of $\FC$ is to replace the universe $\Sigma^*$ with a single word and all its factors. In the formulas, this  word is represented by  a distinguished variable:
\begin{definition}
	We distinguish a variable ${\sv}\in\Xi$ and call it the \emph{universe variable}.
\end{definition}
As the universe variable represents the  universe (hence its name), it has a special role in both syntax and semantics of $\FC$. The \emph{syntax of~\FCtxt} restricts the syntax of $\logC$ in two ways: 
\begin{definition}\label{def:fc}\label{def:FCsyntax}
	The set $\FC$ of  \emph{\FC-formulas} is defined recursively: The atoms are word equations $(\eta_L \weqeq \eta_R)$ with $\eta_L\in \Xi$ and $\eta_R\in(\Sigma\cup\Xi)^*$. These can be combined using disjunction $(\phi\lor\psi)$, conjunction $(\phi\land\psi)$, negation $\neg\phi$, and quantifiers $\ex{x}{\phi}$ and $\fa{x}{\phi}$ with $x\in\Xi\setdiff\{\sv\}$.
\end{definition}
In other words, firstly, every word equation has a single variable on its left side.
 Secondly, the universe variable $\sv$ may not be bound by quantifiers.  
 The reason for the first restriction is a bit  subtle; we shall discuss it  after defining the semantics. 
But the other follows immediately from the intuition that $\sv$ shall represent the universe -- hence, binding it would make no sense.
For the same reason, we also exclude $\sv$ from the free variables of $\FC$-formulas:
\begin{definition}
The set $\fvar(\phi)$ of free variables of an $\FC$-formula $\phi$ is defined as for $\logC$-formulas, with the exception that  $\sv$ is not considered a free variable.  
\end{definition}
The \emph{semantics of~\FCtxt} combine those of $\logC$ with the additional condition that the universe consists only of factors of the content of the universe variable~$\sv$:
\begin{definition}\label{def:compat}
For $\phi\in\FC$ and a pattern substitution~$\sigma$ with $\dom(\sigma)\supseteq \fvar(\phi)\cup\{\sv\}$, we define 
$\sigma\models\phi$ as for~$\logC$, but
with  the additional condition that $\sigma(x)\factor\sigma(\sv)$ for all $x\in\dom(\sigma)$.
\end{definition}
To highlight the special role of $\sv$, we also write  $(w,\sigma)\models\phi$ if $\sigma\models\phi$ and $w=\sigma(\sv)$. 
We may shorten this to $w\models\phi$ if $\phi$ is a \emph{sentence} -- that is, if $\fvar(\phi)=\emptyset$.
We write $\phi(\vec{x})$ to denote that $\vec{x}$ is a tuple of free variables of $\phi$. 
\begin{example}
Define $\phi_1\avs{y} \df \ex{x}{x \weqeq\mathtt{papaya}\conc y\conc \mathtt{banana}}$
and $\phi_2 \df \ex{x}{(x \weqeq \mathtt{papaya}\lor x\weqeq\mathtt{banana})}$.
Then $(w,\sigma)\models \phi_1$ if and only if $\sigma(y)$ occurs in $w$ between $\mathtt{papaya}$ and $\mathtt{banana}$, and
$w\models \phi_2$ if and only if $w$ contains $\mathtt{papaya}$ or $\mathtt{banana}$ as factor. 
Finally, let $\phi_3(x)\df \ex{p,s}{\bigl(\sv\weqeq p\conc x\conc s \land \lnot\ex{\hat{p},\hat{s}}{(\sv\weqeq \hat{p}\conc x\conc \hat{s} \land\neg\hat{p}\weqeq p)}\bigr)} $. Then $(w,\sigma)\models\phi_3$ if and only if $\sigma(x)$ occurs exactly once in $w$.
\end{example}
When applying $\sigma$ to an $\FC$-formula, $\sigma(\sv)$ always needs to be defined -- otherwise, we would have no universe to work with. 
But $\FC$-formulas do not need to contain $\sv$. As a rule of thumb, $\sv$ is only required when referring to some ``global'' property of $w$.
If we describe properties that are more ``local'' (as in the next example), we usually do not need to use $\sv$.
\begin{example}\label{example:usefulRelations}
	Let $\phi^{\pref}(x,y)\df \ex{z}{y\weqeq xz}$. Then $\sigma\models\phi$ if and only if $\sigma(x)$ and $\sigma(y)$ are factors of $\sigma(\sv)$  with $\sigma(x)\pref \sigma(y)$. In other words, $\phi^{\pref}$ expresses $x\pref y$. 
	Consequently, we can express $x\ppref y$ through $\phi^{\ppref}(x,y)\df \phi^{\pref}(x,y)\land\neg x\weqeq y$.
	
	In fact, $\ppref$ and  inequality  can be expressed without  negation (or universal quantifiers). 
	First, define 
	$\phi(x)^{\neq\emptyword}\df \ex{y}{\biglor_{\ta\in\Sigma}x\weqeq \ta\conc y}$ to express $x\neq \emptyword$ -- that is, $\sigma\models\phi^{\neq\emptyword}$ if and only if $\sigma(x)\factor\sigma(\sv)$ and $\sigma(x)\neq\emptyword$.
	We use this in
	$\psi^{\ppref}(x,y)\df \ex{z}{(y\weqeq x\conc z\land \phi^{\neq\emptyword}(z))}$.
	Like $\phi^{\ppref}$, this expresses 
	$x\ppref y$;
	but without   negation.
	
	Finally, let $\phi^{\neq}(x,y)\df \psi^{\ppref}(x,y)\lor \psi^{\ppref}(y,x) \lor \bigvee_{\ta,\tb\in \Sigma, \ta\neq \tb} \ex{x_1,y_1,z}{(x\weqeq z\conc \ta\conc x_1\land y\weqeq z\conc \tb\conc y_1)}$. This states that $x\ppref y$, $y\ppref x$, or $x$ and $y$ differ after a common prefix $z$ -- that is, $x\neq y$.
\end{example}
We say  $\phi,\psi\in\FC$ are \emph{equivalent}, written $\phi\equiv\psi$, if for all $\sigma$ with $\dom(\sigma)\supseteq \fvar(\phi)\cup\fvar(\psi)\cup\{\sv\}$, we have that  $\sigma\models\phi$ holds if and only if $\sigma\models\psi$. 
Thus, in Example~\ref{example:usefulRelations}, we have $\phi^{\ppref}\equiv\psi^{\ppref}$.
If $\phi\in\FC$ is a sentence, we define its language  as $\Lang(\phi)\df\{w\mid w\models \phi\}$.
\begin{example}\label{ex:starfree}
	A language is called \emph{star-free} if it is defined by a regular expression $\alpha$ that is constructed from the empty set $\emptyset$,  terminals $\ta\in\Sigma$,  concatenation~$\cdot$,  union~$\ror$, and  complement~$\compl{\alpha}$.
	Given such an $\alpha$, we define $\phi^{\alpha}\df\ex{x}{(\sv\weqeq x \land \psi^{\alpha}\avs{x})}$, where  $\psi^{\alpha}\avs{x}$ is defined recursively by
	$\psi^{\emptyset}\avs{x}\df \lnot (x\weqeq x)$, 
	$\psi^{\ta}\avs{x}\df (x\weqeq \ta)$,
	$\psi^{(\alpha_1\cdot\alpha_2)}\avs{x}\df \ex{x_1,x_2}{\bigl(x\weqeq x_1\conc x_2 \land \psi^{\alpha_1}\avs{x_1}\land \psi^{\alpha_2}\avs{x_2}\bigr)},$ 
	$\psi^{(\alpha_1\ror\alpha_2)}\avs{x}\df \psi^{\alpha_1}\avs{x}\lor \psi^{\alpha_1}\avs{x}$, and
	$\psi^{\compl{\alpha}}\avs{x}\df \lnot\psi^{\alpha}\avs{x}.$
	Then $\sigma\models\psi^{\alpha}$ if and only if $\sigma(x)\in\Lang(\alpha)$ and $\sigma(x)\factor\sigma(\sv)$. Thus, $\Lang(\phi^{\alpha})=\Lang(\alpha)$. 
\end{example}
We are now ready to discuss why Definition~\ref{def:FCsyntax} restricts the left sides of word equations to single variables. 
Assume we allowed, for instance, the word equation $xy\weqeq yx$ in an $\FC$-formula, and consider the case of $\sigma(\sv)=\ta^3$ and $\sigma(x)=\sigma(y)=\ta^2$. 
Then  $\sigma(x)\factor\sigma(\sv)$,  $\sigma(y)\factor\sigma(\sv)$, and $\sigma(xy)=\sigma(yx)$ hold, but $\sigma(xy)=\ta^4$ is not a factor of $\sigma(\sv)$, which means that it is not in the  universe.

There are two straightforward ways of allowing arbitrary word equations $\eta_L\weqeq\eta_R$ in $\FC$ without changing the underlying universe.
The first is adding the additional requirements $\sigma(\eta_L)\factor\sigma(\sv)$ and $\sigma(\eta_R)\factor\sigma(\sv)$ to the definition of $\sigma\models (\eta_L\weqeq\eta_R)$.
This can also be understood as declaring the concatenation as undefined if its result is not a factor of $\sigma(\sv)$.
The second is interpreting $\eta_L\weqeq\eta_R$ as syntactic sugar for $\ex{x}{(x\weqeq\eta_L\land x\weqeq\eta_R)}$, where $x$ is a new variable. 

On the other hand, this re-interpretation of the solutions of word equations can be considered non-intuitive, which makes formulas that rely on these easy to misunderstand. 
To avoid these issues, this paper restricts every left side to a single variables, even though this is not strictly necessary.

\section{Properties of \FCtxt}\label{sec:properties}
In this section we analyze $\FC$ dynamically by discussing 
its \emph{evaluation problem}  -- given a formula $\phi\in \FC$ and a pattern substitution $\sigma$, decide whether $\sigma\models\phi$. We call the special case where $\phi$ is a sentence the  \emph{model checking problem}. 
We also consider  
the \emph{satisfiability problem} -- given $\phi\in\FC$, decide whether there is a pattern substitution $\sigma$ with $\sigma\models\phi$.
After that, we also consider aspects of optimization of formulas.
\subsection{Model checking vs satisfiability}\label{sec:mc}
For $\logC$, the model checking problem is undecidable. This is due to two reasons. Firstly,   the satisfiability problem for $\logC$ is undecidable by Quine~\cite{qui:con}. Secondly, for $\logC$, satisfiability reduces to model checking --  from a given $\phi\in\logC$, we can construct a $\logC$-sentence $\phi'$ by binding all free variables of $\phi$ existentially. Then $\phi$ is satisfiable if and only if $\sigma\models\phi'$, no matter which substitution $\sigma$ we choose.
In contrast to this, the finite universe of $\FC$ drastically reduces the complexity of model checking.
\begin{theorem}[label=thm:recog,restate=restateThmRecog]
Evaluation  is $\compclass{PSPACE}$-complete for  $\FC$ and $\compclass{NP}$-complete for  $\fragExPos{\FC}$. 
\end{theorem}
In fact, the proof shows that the lower bounds hold even in the special case of model checking $\ta\models\phi$.
Both the drop in complexity and the fact a very simple structure suffice
are comparable to $\FO$ on finite relations (see \eg~\cite{lib:ele}), and the proofs are equally straightforward. 
For both logics, the hardness of the problem comes from parameters of the formula and not of the word or the relational structure.
$\FO$ provides us with another parameter to lower the complexity of model checking. 
We define  the \emph{width} $\formulaWidth(\phi)$ of a formula $\phi$ as  the maximum number of free variables in any of its subformulas.
\begin{theorem}[label=thm:mc,restate=restateThmMC]
	Model checking for  $\FC$   can be solved in 
		time 
		$O(k|\phi|n^{2k})$, for $k\df\formulaWidth(\phi)$ and 
		$n\df |\sigma(\sv)|$.
\end{theorem} 
The proof also shows that this is only a rough upper bound; taking  properties of variables into account lowers the  exponent.
In principle, we can  apply various structure parameters for first-order formulas (see \eg Adler and Weyer~\cite{adl:tree})
to $\FC$.
This assumes that we treat word equations as atomic formulas, which is certainly possible -- but we can do better than that.
\subparagraph*{Decomposing patterns}
Using a word equation $x\weqeq \alpha$ as an atom results in a formula that has a width of at least $|\var(\alpha)|$.
Our goal is to lower that bound, by decomposing the pattern into a formula.
Technically, a pattern $\alpha=\alpha_1\cdots\alpha_n$ with $\alpha_i\in(\Xi\cup\Sigma)$  is a term $f(\alpha_1,\ldots,\alpha_n)$, where the function $f$ is the $n$-ary concatenation.
But there is a syntactic criterion that allows us to decompose $\alpha$ into a conjunction of binary concatenations.
This builds on  a result from combinatorics on words and formal languages, where
a pattern $\alpha$ is also treated as generators of the \emph{pattern languages} $\Lang(\alpha)$; the set of images of $\alpha$ under pattern substitutions. In this context, 
Reidenbach and Schmid~\cite{rei:pat} started a series of articles on classes of \emph{pattern languages} with a polynomial time membership problem (surveyed in~\cite{man:mat}), most of which rely on the following definition (see 
\cite{cyg:par} or \cref{app:thm:patToFC}
for the definition of treewidth).
\begin{definition}\label{def:patGraph}
The \emph{standard graph} of a pattern $\alpha=\alpha_1\cdots\alpha_n$ with $n\geq 1$ and $\alpha_i\in(\Sigma\cup\Xi)$ is  $\patgraph{\alpha}\df (V_{\alpha},E_{\alpha})$ with $V_\alpha\df\{1,\dots,n\}$ and $E_{\alpha}\df E^{<}_{\alpha}\cup E^{=}_{\alpha}$, where $E^{<}_{\alpha}$ is the set of all $\{i,i+1\}$ with $1\leq i<n$, and $E^{=}_{\alpha}$ is the set of all $\{i,j\}$ such that $\alpha_i$ is some $x\in\Xi$, and $\alpha_j$ is the next occurrence of $x$ in $\alpha$. Then $\treeWidth{\alpha}$, the \emph{treewidth} of $\alpha$, is the treewidth of $\patgraph{\alpha}$.
\end{definition}
As  artificial (but simple) example, consider the sequence of patterns $\alpha_n\df x_1x_1 x_2x_2 \cdots x_nx_n$.
Then $|\var(\alpha_n)|=n$,  affecting the width of formulas that use $\alpha_n$ accordingly, but $\treeWidth{\alpha_n}=1$.
Using tree decompositions, we can rewrite  patterns of bounded treewidth into formulas of bounded width
(similar to the proof of Kolaitis and Vardi~\cite{kol:con} for variable bounded $\FO$).
\begin{theorem}[label=thm:patToFC,restate=restateThmPatToFC]
Let $\phi\df \ex{x_1,\ldots,x_m}{y\weqeq \alpha}$. Then there exists $\psi\in\fragExPos{\FC}$  with  $\psi\equiv\phi$ and  $\formulaWidth(\psi)\leq 2\treeWidth{\alpha}+v$, where $v=2+|\fvar(y\weqeq\alpha)\setdiff\{x_1,\dots,x_m\}|$. 

For every fixed $k$, given  $\phi$ with $\treeWidth{\alpha}\leq k$, we can compute $\psi$ in polynomial time.
\end{theorem}
Combining Theorems~\ref{thm:mc} and~\ref{thm:patToFC} yields a (slightly) different proof of the
polynomial time decidability of the membership problem for classes of patterns with bounded treewidth from~\cite{rei:pat}.  
As pointed out in~\cite{day:mat}, 
bounded treewidth does not cover all pattern languages with a polynomial time membership problem, like \eg patterns $\alpha^k$ 
where $\treeWidth{\alpha}$ is bounded. But these languages can be expressed by $\ex{x}{(\sv\weqeq x^k \land \phi_{\alpha}(x))}$, where $\phi_{\alpha}(x)$ is a formula that expresses $x\in\Lang(\alpha)$, thus increasing the width by one.
We leave a systematic examination whether all criteria for patterns with a tractable membership problem map to $\FC$-formulas of bounded width for future work.

\subparagraph*{Satisfiability}
Another parallel to $\FO$ is that satisfiability is undecidable for $\FC$, even if we use only few variables. 
Let $\FC^k$ denote the set of formulas with width at most $k$.
\begin{proposition}[label=prop:FcSat,restate=restatePropFcSat]
	Satisfiability for $\FC^{3}$ is undecidable if $|\Sigma|\geq 2$.
\end{proposition}
The problem is trivial for $\FC^0$ (see the proof of Theorem~\ref{thm:nonrec}) and open for $\FC^1$ and $\FC^2$.

\subsection{Static optimization}\label{sec:nonrecto}
Apart from the width, Theorem~\ref{thm:mc} highlights the length of a formula as another parameter that influences the complexity of model checking.
While the length of the patterns in the word equations might not seem to be factor that is overly important, there are patterns where straightforward optimizations can lead to an exponential advantage.
\begin{example}\label{example:patCompress}
	For  $k\geq 1$, let $\phi_k(y)\df \ex{x}{y\weqeq x^{2^k}}$. Then $\phi_k\equiv \psi_k\df \ex{x_1,\ldots,x_k}{\bigl(y\weqeq x_1x_1 \land \bigland_{i=1}^{k-1} x_i \weqeq x_{i+1} x_{i+1}\bigr)}$, and $|\phi_k|$ is exponential in $k$, while $|\psi_k|$ is linear in $k$.   We can also rewrite $\psi_k$ into a formula of width 3 by pulling quantifiers inwards and reusing variables.
	More specifically, we first rewrite each $\psi_k$ into the equivalent formula 
	\[
	\ex{x_1}{(y\weqeq x_1x_1 \land 
		(
		\ex{x_2}{
			x_1 \weqeq x_2x_2	
			\land \cdots (\ex{x_{k-1}}{x_k\weqeq x_{k-1}x_{k-1}})
			\cdots }))}.\]
	Then we replace every variable $x_i$ with $x_1$ if $i$ is odd or $x_2$ if $i$ is even. The resulting formula has width 3 (due to $y$), is equivalent to $\phi_k$, and has the same length.
\end{example}
This raises the  questions whether we can computably minimize formulas and whether some fragments are more succinct than others.
We address these questions in order.
\begin{theorem}[label=thm:undecShort,restate=restateThmUndecShort]
There is no algorithm that, given $\phi\in\FC$, computes an equivalent $\psi$ such that $|\psi|$ is minimal.
This holds even if we restrict this to minimization within $\fragExPos{\FC^{4}}$.
\end{theorem}
This
 leaves open the decidability of, given $\phi\in\FC$ (or $\phi\in\fragExPos{\FC}$) and  $k>0$, is there an equivalent $\psi\in\FC^{k}$. 
But without suitable inexpressibility methods (see Section~\ref{sec:inexpressibility}), we cannot  even show that a  language is inexpressible in $\FC^k$ for some $k>0$, which complicates tackling this problem.
The proof of \cref{thm:undecShort} is actually more general and also demonstrates the undecidability of other common problems, like containment and equivalence.

Via Hartmanis'~\cite{har:god} meta theorem, certain undecidability results  provide insights into the relative succinctness of models (see~\cite{kut:phe} or \eg~\cite{fre:doc} for details). For two logics $\mathcal{F}_1$ and $\mathcal{F}_2$,  the \emph{tradeoff} from $\mathcal{F}_1$ to $\mathcal{F}_2$ is \emph{non-recursive} if, for every  computable  $f\colon\mathbb{N}\to\mathbb{N}$, there exists some $\phi\in\mathcal{F}_1$ that is expressible in $\mathcal{F}_2$, but $|\psi|\geq f(|\phi|)$ holds for every $\psi\in\mathcal{F}_2$ with $\psi\equiv\phi$.
\begin{theorem}[label=thm:nonrec,restate=restateThmNonrec]
	There are non-recursive tradeoffs 
from $\fragExPos{\FC^4}$ to regular expressions and~$\FC^0$; and
  from $\FC^{4}$ to $\fragExPos{\FC}$, 
 patterns, and  singleton sets $\{w\}$.
\end{theorem}
Note in particular that patterns can be parts of word equations. Hence, where Example~\ref{example:patCompress} showed an exponential advantage in the rewriting, Theorem~\ref{thm:nonrec} shows $\FC^{4}$ can obtain far larger advantages on certain classes of patterns.
\subsection{Iteration and recursion}\label{sec:iterationRecursion}
Iteration and recursion have been extensively studied in finite model theory and database theory. 
In particular,  $\FOord$ that is extended with operators for transitive closure or fixed points  captures various complexity classes (see \eg \cite{ebb:fin,lib:ele}).
This is also closely connected to the recursive query language  \Datalogtxt (see \eg \cite{abi:fou}).
In this section, we shall define \DataSplogtxt, an  $\FC$-analog of \Datalogtxt. 
On the way, we shall also see that using transitive closures or fixed points on $\FC$ instead of $\FOord$ characterizes the same complexity classes.

\subparagraph*{Iteration} The definitions of these operators and the resulting extensions of $\FC$ are straightforward adaptions of their $\FOord$-versions (see \eg \cite{lib:ele} or~\cite{ebb:fin}).   
But as they are also rather lengthy,  we only give the intuitions behind them (detailed definitions can be found in \cref{app:def:tcfp}).

For $k\geq 1$, an $\FC$-formula with $2k$ free variables can be viewed as generator of a relation over $R\subseteq S^k\times S^k$, where $S$ is the universe (the set of factors of $\sv$).
The operator $\tc$  computes the \emph{transitive closure} $\tc(R)$ of $R$.
If we view $R$ as edge set of a directed graph over $S^k$, then $\tc$ computes the reachability relation in this graph.
The \emph{deterministic transitive closure}  $\dtc$ is defined analogously, with the additional restriction that $\dtc$ stops at nodes with more than one outgoing edge.
$\TC{\FC}$ and $\DTC{\FC}$ extend $\FC$ with $\tc$ and $\dtc$,  respectively. 

For fixed points, we introduce special relation symbols as part of inductive definitions. 
Inside a fixed point operator, a formula $\phi$  may use a symbol $\relsym$  and at the same time also define the  relation $R$ inductively. 
We start with $R_0\df\emptyset$ and let $R_1$ be the relation that is defined by $\phi$ if $\relsym$ represents $R_0$. 
This is repeated,  each $R_i$ giving rise to $R_{i+1}$, until a fixed point is reached.
For \emph{least fixed points}, we ensure  $R_i\subseteq R_{i+1}$  for all $i$. 
For \emph{partial fixed points}, this  is not required.
We use $\LFP{\FC}$ and
$\PFP{\FC}$ for the respective extensions of $\FC$.
 
Complexity classes are commonly defined as classes of languages; and as we can treat $\FC$ and its extensions as language generators, connecting these two worlds is straightforward.
We say that a logic $\mathcal{F}$  \emph{captures} a complexity class $\cc$ if $\cc$  is the class of languages that are $\mathcal{F}$-definable -- that is, $\cc=\{\Lang(\phi)\mid\phi\in\mathcal{F}\}$.

The following result mirrors that for the respective extensions of $\FOord$:
\begin{theorem}[label=thm:capture,restate=restateThmCapture]
$\DTC{\FC}$, 
$\TC{\FC}$, 
$\LFP{\FC}$, 
$\PFP{\FC}$
capture
$\compclass{L}$,
$\compclass{NL}$,
$\compclass{P}$, 
$\compclass{PSPACE}$, respectively.
\end{theorem}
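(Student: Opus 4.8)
The plan is to reduce the statement to the classical descriptive-complexity theorems for ordered structures, routing everything through the equivalence between $\FC$ and $\FOstr$ from Section~\ref{sec:fo}. First I would lift Theorem~\ref{thm:FCvsFO} from the base logics to their extensions with iteration operators. Since $\DTC{\FC}$, $\TC{\FC}$, $\LFP{\FC}$, $\PFP{\FC}$ are obtained from $\FC$ by exactly the operator schemes that produce $\DTC{\FOstr}$, $\TC{\FOstr}$, $\LFP{\FOstr}$, $\PFP{\FOstr}$ from $\FOstr$ (see Appendix~\ref{app:def:tcfp}), it suffices to check that the polynomial-time, realization-based translations underlying Lemmas~\ref{lem:FOtoFC} and~\ref{lem:FCtoFO} act homomorphically on these operators. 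The $\FOstr$-to-$\FC$ direction is unproblematic, as it essentially renames variables. The $\FC$-to-$\FOstr$ direction is the delicate one: a word variable $x$ is split into a pair $\fovop{x},\fovcl{x}$ of position variables, so an operator $\dtc$ / $\tc$ / $\lfp$ / $\pfp$ applied to a relation on $m$-tuples of subwords of $w$ is sent to the same operator applied to a relation on the corresponding $2m$-tuples of positions. One then verifies that the map sending a tuple of subwords of $w$ to the associated tuple of spans is a bijection onto the position-tuples on which the realizing formula can be satisfied (the realization already forces $\fovop{x}\le\fovcl{x}$), and is empty elsewhere; hence the graph whose (deterministic) transitive closure, or whose least/partial fixed point, is formed is merely renamed on the relevant part and edge-free on the rest, so the operator's semantics is preserved stage by stage. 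This yields $\DTC{\FCvc{1}{}{\constrSet}}\polyeq\DTC{\FOstrcon{\constrSet}}$ and likewise for $\TC$, $\LFP$, $\PFP$.

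Next I would strip away the extra vocabulary of $\FOstr$. The constants $\minConst,\maxConst$ and the relation $\nextPred$ are first-order definable from $<$, so they add nothing. For each constraint symbol $\constrSym$ the relation $\constrSymFO$ on $\struct{w}$ is, by the paper's standing assumption, decidable in space logarithmic in $||\vec{w}||$; in particular substring equality $\strEqPred$ is in $\compclass{L}$. By Immerman's theorem that $\DTC{\FOord}$ captures $\compclass{L}$ over ordered structures, each of these relations is definable by a $\DTC{\FOord}$-formula on $\struct{w}$ (note $\struct{w}$ is ordered, since the vocabulary contains $<$; the extra ``blank'' node $|w|+1$ is an FO-definable modification of the standard word structure and is harmless). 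Since, over ordered structures, $\DTC{\FOord}\subseteq\TC{\FOord}\subseteq\LFP{\FOord}\subseteq\PFP{\FOord}$, each such relation is in fact definable in all four logics. Replacing every occurrence of $\strEqPred$ and of each $\constrSymFO$ by such a definition gives $\DTC{\FOstrcon{\constrSet}}\equiv\DTC{\FOord}$ and, analogously, $\TC{\FOstrcon{\constrSet}}\equiv\TC{\FOord}$, $\LFP{\FOstrcon{\constrSet}}\equiv\LFP{\FOord}$, $\PFP{\FOstrcon{\constrSet}}\equiv\PFP{\FOord}$ (the constraint-free case $\FCcon{\nc}$ is a fortiori covered).

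Finally I would invoke the classical capturing theorems over ordered structures (see \eg \cite{lib:ele}): $\DTC{\FOord}$ captures $\compclass{L}$, $\TC{\FOord}$ captures $\compclass{NL}$, $\LFP{\FOord}$ captures $\compclass{P}$, and $\PFP{\FOord}$ captures $\compclass{PSPACE}$. Chaining the two equivalences established above gives $\DTC{\FC}\equiv\DTC{\FOord}$, and similarly for the other three operators, which proves the claim. The main obstacle is the first step: carefully extending Lemmas~\ref{lem:FOtoFC} and~\ref{lem:FCtoFO} so that the realization correspondence of Section~\ref{sec:fo} commutes with the iteration operators — in particular that doubling each word variable into its span endpoints preserves the semantics of $\dtc$, $\tc$, $\lfp$, and $\pfp$ at every stage of the iteration, and that determinism and positivity in the iterated relation symbol are likewise preserved by the translation. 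Once that is in place, the rest is bookkeeping layered on top of standard finite-model theory.
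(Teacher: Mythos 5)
Your overall strategy is exactly the one the paper considers and sets aside in the opening lines of its own proof, and before even getting to that comparison there is a concrete technical gap in your first step. The claim that ``the map sending a tuple of subwords of $w$ to the associated tuple of spans is a bijection onto the position-tuples on which the realizing formula can be satisfied'' is false: a subword $u$ of $w$ may occur at many positions, and Definition~\ref{def:FCtoFO} requires that \emph{every} assignment $\alpha$ expressing a satisfying $\sigma$ satisfies the realizing formula, so the correspondence is many-to-one, not bijective. For $\tc$, $\lfp$ and $\pfp$ this is harmless (the span-level relations are saturated under replacing a span by another span of the same subword, and these monotone/iterative operators commute with that saturation stage by stage), but it breaks the $\dtc$ case: if a $k$-tuple of subwords has a \emph{unique} $R_\sigma$-successor $\vec{u}$ and some component of $\vec{u}$ occurs more than once in $w$, then the corresponding $2k$-tuple of positions has several successors in the translated relation, so the deterministic transitive closure computed at the position level collapses (essentially to the diagonal) instead of being a renaming of $\dtc(R_\sigma)$. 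Repairing this requires canonicalizing, e.g.\ forcing the realizing formula inside the $\dtc$ operator to select the leftmost occurrence of each subword and checking that this choice is definable and restores determinism; none of that is in your sketch. The remaining steps --- eliminating $\minConst$, $\maxConst$, $\nextPred$, $\strEqPred$ and the constraint relations via Immerman's theorem over the ordered structure $\struct{w}$, then invoking the classical capturing theorems for $\FOord$ with $\dtc$, $\tc$, $\lfp$, $\pfp$ --- are sound.

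Beyond that gap, your route is genuinely different from the paper's and, even once fixed, proves strictly less. The paper uses the $\FC$-to-$\FOstr$ direction only for the upper bounds (Lemmas~\ref{lem:recognizeTC}, \ref{lem:recognizeLFP}, \ref{lem:recognizePFP}); for the converse it performs direct simulations --- multi-head two-way automata for $\compclass{L}$ and $\compclass{NL}$ under $\dtc$/$\tc$, and a polynomially clocked Turing machine for $\compclass{P}$ and $\compclass{PSPACE}$ under $\lfp$/$\pfp$ --- precisely because it wants the stronger conclusion that existential-positive, constraint-free formulas with a single iteration operator suffice. That refinement is what Theorem~\ref{thm:datasplog} is built on, and it does not fall out of a black-box appeal to the classical $\FOord$ capturing results. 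So your argument, with the $\dtc$ issue repaired, would establish the bare statement of Theorem~\ref{thm:capture} but not the strengthening the paper needs downstream.
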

The result holds even if the formulas are required to be existential-positive.
Thus, $\FC$ and even $\fragExPos{\FC}$ behave  under fixed-points and transitive closures like $\FOord$. 

\subparagraph*{Recursion} This connection immediately suggests another: Recall that $\FO$ with least-fixed point operators can be used to define \Datalogtxt (see \eg Part~D of~\cite{abi:fou}). 
Analogously, we define \DataSplogtxt, a version of \Datalogtxt that is based on word equations. 

An $\DataSplog$-program is a tuple $P\df(\rulesymbs,\ruleset,\outDatalog)$, 
where $\rulesymbs$ is a set of relation symbols that contains a special output symbol \outDatalog, each $R\in\rulesymbs$ has an arity $\ar(R)$, and  $\ruleset$ is a finite set of rules 
$
R(\vec{x}) \leftarrow \phi_1(\vec{y}_1),\ldots,\phi_m(\vec{y}_m)
$
with $R\in\rulesymbs$, $m\geq 1$, each $\phi_i$ is an $\FC$-word equation, and each $x$ of $\vec{x}$ appears in some~$\vec{y}_i$.

We define $\fun{P}(w)$ incrementally, initializing the relations of all $R\in\rulesymbs$ to $\emptyset$. 
For each rule 
$R(\vec{x}) \leftarrow \phi_1(\vec{y}_1),\ldots,\phi_m(\vec{y}_m)$,  
we enumerate all $\sigma$ with   $\sigma(\sv)= w$ and check if $\sigma\models\ex{\vec{y}}{\bigland_{i=1}^{m}\phi_i}$, where $\vec{y}\df \left(\bigcup_{i=1}^m \vec{y}_i\right)\setdiff\vec{x}$. If this holds, we add $\sigma(\vec{x})$ to $R$. This is repeated until all relations have stabilized. Then $\fun{P}(w)$ is the content of the relation $\outDatalog$.
\begin{example}
Define an \DataSplogtxt-program $(\{\outDatalog,E\},\ruleset)$, where $\ar(\outDatalog)=0$, $\ar(E)=3$, and $\ruleset$ consists of the  rules
$\outDatalog()\leftarrow \sv\weqeq xyz, E(x,y,z)$, and
$E(x,y,z)\leftarrow x\weqeq\emptyword, y\weqeq\emptyword, z\weqeq\emptyword,$ and
$E(x,y,z)\leftarrow x\weqeq \hat{x}\ta, y\weqeq\hat{y}\tb,z\weqeq\hat{z}\mathtt{c},E(\hat{x},\hat{y},\hat{z})$.
This defines the language $\{\ta^n\tb^n\mathtt{c}^n \mid n\geq 0\}$.
\end{example}
\begin{theorem}[label=thm:datasplog,restate=restatethmDatasplog]
$\DataSplog$ captures $\compclass{P}$.
\end{theorem}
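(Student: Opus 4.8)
The plan is to prove the two inclusions separately. For the easy direction, $\DataSplogcon{\nc} \subseteq \compclass{P}$, I would argue that the incremental evaluation procedure described above runs in polynomial time in $n \df \|\vec{w}\|$ for a fixed program $P$. The key observation is that every relation $R \in \rulesymbs$ has a fixed arity $\ar(R)$, and every tuple added to $R$ consists of subwords of $\vec{w}$; since there are at most $O(n^2)$ subwords of $\vec{w}$ (each determined by a start and end position within some structure word), each $R$ can hold at most $O(n^{2\ar(R)})$ tuples. Hence the fixpoint stabilizes after polynomially many rounds. Within each round, for each rule $R(\vec{x}) \leftarrow \phi_1(\vec{y}_1),\ldots,\phi_m(\vec{y}_m)$, checking $\sigma \models \exists \vec{y}\colon \bigwedge_{i=1}^m \phi_i$ for a given $\sigma$ reduces to a bounded number of recognition-problem instances for atomic $\FCcon{\constrSet \cup \rulesymbs}$-formulas — where the relation symbols in $\rulesymbs$ are interpreted by their current (polynomial-size) contents — and by Theorem~\ref{thm:recog} the data complexity of the $\FCcon{\nc}$-recognition problem is in $\compclass{L} \subseteq \compclass{P}$; evaluating a relational atom $R(\vec x)$ against the current content of $R$ is trivially polynomial. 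Enumerating all candidate $\sigma$ that assign subwords of $\vec w$ to the finitely many variables is again polynomial. Summing over the polynomially many rounds gives the bound.

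For the hard direction, $\compclass{P} \subseteq \DataSplogcon{\nc}$, I would appeal to Theorem~\ref{thm:capture}: $\LFP{\FC}$ — in fact $\LFP{\fragExPos{\FCcon{\nc}}}$ — captures $\compclass{P}$. So it suffices to simulate an $\LFP{\fragExPos{\FCcon{\nc}}}$-formula by a $\DataSplogcon{\nc}$-program. This is the standard translation from least-fixed-point logic to \Datalogtxt (see Part~D of~\cite{abi:fou}), adapted to the word setting: each fixed-point operator $[\lfp\, R, \vec{x}\colon \psi]$ becomes an intensional relation symbol $R$ of the matching arity, and the body $\psi$ — which in the existential-positive fragment is built from word equations, positive occurrences of the recursion variables, existential quantifiers, conjunction, and disjunction — is put into a normal form where it is a disjunction of conjunctive queries over atomic $\FCcon{\nc}$-formulas and $\rulesymbs$-atoms; each disjunct then becomes one \Datalogtxt rule, and the single structure variable $\sv$ is threaded through unchanged. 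Nested fixed points are handled by stratification, but since we need only $\compclass{P}$ and the fixed points are all least fixed points of monotone (positive) formulas, an alternating/simultaneous fixed point suffices and can be flattened into a single unstratified program whose semantics coincides with the inflationary semantics used in Theorem~\ref{thm:capture}. The output relation $\outDatalog$ is the one corresponding to the outermost fixed point applied to the structure variable.

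The main obstacle I anticipate is making the translation respect the syntactic restrictions of $\DataSplogcon{\nc}$: each $\phi_i$ in a rule body must be an \emph{atomic} $\FCcon{\constrSet\cup\rulesymbs}\mvs{\vec{\sv}}$-formula (a single word equation with one variable on the left, or a constraint, or an $\rulesymbs$-atom), so a rule body cannot itself contain disjunctions, nested quantifiers, or compound patterns. Dealing with disjunction is routine (split into several rules). The more delicate points are (i) decomposing a word equation $y \weqeq \alpha$ with an arbitrary pattern $\alpha$ on the right into a conjunction of atomic equations of the form $z \weqeq z_1 z_2$ by introducing fresh existentially quantified variables — here one must check that all such fresh variables are bound to subwords of $\vec w$, which holds because they are subwords of $\sigma(y)$, itself a subword of $\vec w$, so the $\FC$ semantics is respected; and (ii) handling existential quantifiers in the body, which are exactly the variables $\vec{y} = (\bigcup_i \vec{y}_i) \setdiff \vec{x}$ that the \Datalogtxt rule format already projects away, so no extra work is needed there. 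A final bookkeeping point is confirming that the least-fixed-point semantics of Theorem~\ref{thm:capture} agrees with the inflationary/cumulative semantics of the \DataSplogtxt evaluation; for positive (monotone) formulas these coincide, so this is a remark rather than a real difficulty.
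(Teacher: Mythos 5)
Your proof is correct and follows the same overall skeleton as the paper's --- both directions ultimately rest on Theorem~\ref{thm:capture} --- but you execute both halves with more machinery than the paper needs. For membership in $\compclass{P}$, the paper simply rewrites the $\DataSplogcon{\nc}$-program into an $\LFPcon{\fragExPos{\FC}}{\nc}$-formula and invokes Theorem~\ref{thm:capture} (via Lemma~\ref{lem:recognizeLFP}); your direct counting argument (polynomially many subwords, hence polynomially many tuples and rounds, each round polynomial by the recognition bound) is essentially a re-proof of that lemma inlined, which is fine and arguably more self-contained. For the converse, the paper exploits a normal form that its own proof of the $\lfp$-case of Theorem~\ref{thm:capture} delivers for free: every language in $\compclass{P}$ is defined by existential quantifiers over a \emph{single} $\lfp$-operator applied to an existential-positive constraint-free body, so the translation to rules is just putting that body into a union of conjunctive queries --- no nested fixed points, no simultaneous induction, no flattening. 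Your general treatment of nested fixed points via simultaneous fixed points is sound (though ``stratification'' is the wrong word here, since that device is for negation rather than for nesting positive operators), but it is work the paper avoids. Two smaller remarks: your anticipated obstacle (i) is a non-issue, because an atomic $\FCcon{\nc}$-formula is a word equation with one variable on the left and an \emph{arbitrary} pattern on the right (cf.\ the rule $\sv\weqeq xyz$ in the paper's example program), so no decomposition into binary concatenations is required; and your closing remark that least and inflationary fixed points coincide for monotone bodies is exactly the observation underlying Lemma~\ref{lem:lfp}, so it is indeed just bookkeeping.
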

This is unsurprising, considering \Datalogtxt on ordered structures captures $\compclass{P}$, see \eg~\cite{lib:ele}, and  the analogous result for spanners with recursion~\cite{pet:rec}. But it allows us to use word equations as a  basis for \Datalogtxt on words.
This provides potential applications for future insights into acyclicity for patterns, which could be combined with existing techniques for \Datalogtxt.

\DataSplogtxt can also be seen as a generalization of \emph{range concatenation grammars (RCGs)}, see~\cite{bou:ran, kal:par}, to use outputs and relations. 
There has been some work on parsing of RCGs (see~\cite{kallmeyer2009earley} and its references).  
In the future, these might help identify tractable fragments of \DataSplogtxt. 
Vice versa, insights into the latter might lead to new approaches to RCG-parsing.

\section{\FCtxt\ as a logic for document spanners}\label{sec:logicForSpanners}
Fagin et al.~\cite{fag:spa} introduced \emph{document spanners} (or just \emph{spanners}) as a formal model of information extraction that is based on relational algebra (see \eg~\cite{abi:fou}).
This section connects spanners to \FCtxt.
After stating the necessary definitions (\cref{sec:spanners}), we extend $\FC$  into a logic for spanners (Section~\ref{sec:FCandSpanners}) and then use this for an inexpressibility proof (Section~\ref{sec:inexpressibility}). 
\subsection{Spans and document spanners}\label{sec:spanners}
 A \emph{span} of $w\df \ta_1 \cdots \ta_n$ with $n\geq 1$ is an interval $\spn{i,j}$ with $1\leq i\leq j\leq n+1$. 
It describes the factor $w_{\spn{i,j}}=\ta_i\cdots \ta_{j-1}$. 
For  finite $V\subset\Xi$ and  $w\in\Sigma^*$, a 
\emph{$(V,w)$-tuple} is a function $\mu$ that maps  each variable in $V$ to a span of~$w$.
A \emph{spanner} with variables $V$ is a function $P$ that maps every $w\in\Sigma^*$ to a set $P(w)$ of $(V,w)$-tuples. We use $\SVars{P}$
for the variables of a spanner $P$.
Accordingly,  a spanner $P$ is a function that takes an input word $w$ and computes a relation $P(w)$ of $(\SVars{P},w)$-tuples.

Like~\cite{fag:spa}, we base spanners  on
\emph{regex formulas};
 regular expressions with variable bindings~$\bind{x}{\alpha}$. 
This matches the same words as the expression $\alpha$ and assigns the corresponding span of~$w$ to the variable $x$.
For the purpose of this article, this informal definition shall suffice.
Detailed definitions of the syntax and semantics of regex formulas can be found in~\cite{fag:spa} (the original definition that uses parse trees) and~\cite{fre:splog} (a more lightweight definition that uses the ref-words from Schmid~\cite{sch:cha}).

A~regex formula is \emph{functional} if  on every word, every match has exactly one assignment for  each variable. 
The set of functional regex formulas is $\RGX$.
For $\alpha\in\RGX$, we define the spanner $\fun{\alpha}$ as follows.
Every match on $w\in\Sigma^*$ defines a $(\SVars{\alpha},w)$-tuple~$\mu$, where each $\mu(x)$ is the span assigned to~$x$; and  $\fun{\alpha}(w)$ is the set of all these $\mu$.

\begin{example}\label{ex:regexformula}
We consider the regex formula $
\alpha \df \Sigma^*  \bigl(\bind{x}{\mathtt{banana}} \ror \bind{x}{\mathtt{papaya}}\bigr) \Sigma^*,
$
which matches every word $w$ that contains an occurrence of $\mathtt{banana}$ or $\mathtt{papaya}$.  
The corresponding spanner $\fun{\alpha}(w)$ contains all spans $\spn{i,j}$ with $w_{\spn{i,j}}\in\{\mathtt{banana}, \mathtt{papaya}\}$.
Next, we define
$
\beta \df \Sigma^* \bind{x}{\Sigma^*}\Sigma^* \bind{y}{\Sigma^*}\Sigma^*.
$
For every $w\in\Sigma^*$, we have that $\fun{\beta}(w)$ contains those $\mu$ where $\mu(x)$ refers to a span  to the left of $\mu(y)$.
\end{example}
We use the spanner operations union~$\cup$, natural join~$\join$, projection~$\pi$, set difference~$\setdiff$, and  equality selection $\sel$.
For spanners $P_1$ and $P_2$ with $\SVars{P_1}=\SVars{P_2}$, \emph{union} and \emph{set difference} are defined  by $(P_1\cup P_2)(w)\df P_1(w)\cup P_2(w)$ and $(P_1\setdiff P_2)(w)\df P_1(w)\setdiff P_2(w)$ on all $w\in\Sigma^*$. 
Furthermore, the \emph{projection} $\proj_V P$ for a spanner $P$ and a set of variables $V\subset\SVars{P}$ is obtained for every $w\in\Sigma^*$ by restricting the domain of every $\mu\in P(w)$ to $V$. 

The \emph{natural join} $P_1\join P_2$ combines spanner results by merging tuples that agree on the common variables. That is, $(P_1\join P_2)(w)$
contains those $(\SVars{P_1}\cup\SVars{P_2},w)$-tuples~$\mu$ for which there exist $\mu_1\in P_1(w)$ and $\mu_2\in P_2(w)$ such that $\mu_1(x)=\mu_2(x)$ for all $x\in \SVars{P_1}\cap \SVars{P_2}$.
An important consequence of this  definition is that join is defined using spans (and thereby positions in the input word), not using the factors that occur in the spans. 
To  compare factors, we use the \emph{equality selection} $\sel_{x,y}P$ with  $x,y\in\SVars{P}$. This is defined by $\sel_{x,y}P(w)\df\{ \mu\in P(w)\mid w_{\mu(x)}=w_{\mu(y)}\}$ for $w\in\Sigma^*$.

By combining regex formulas with symbols for spanner operations, we obtain \emph{spanner representations}; and their semantics are defined by applying the operations.
The class of \emph{generalized core spanner representations} $\RGXcored$ consists of combinations of $\RGX$ and any of the five operators; the \emph{core spanner representations} $\RGXcore$ exclude set~difference.
According to Fagin et al.~\cite{fag:spa}, ``core spanners'' capture the core functionality of IBM's SystemT. 

\begin{example}\label{ex:spanner}
	Let $\alpha$ and $\beta$  be the regex formulas from Example~\ref{ex:regexformula}.
	We define the spanner representation
$
	\varrho_1 \df \alpha(x) \join \alpha(y) \join \beta(x,y).
$
	Then $\SVars{\varrho_1}=\{x,y\}$, and  $\fun{\varrho_1}(w)$ contains those $\mu$ where $\mu(x)$ occurs before $\mu(y)$ in $w$ and each of $w_{\mu(x)}$ and $w_{\mu(y)}$ is $\mathtt{banana}$ or $\mathtt{papaya}$.
	Now let $\varrho_2 \df \sel_{x,y} \varrho_1$. Then $\fun{\varrho_2}(w)$ is the subset of $\fun{\varrho_1}(w)$ that also  has 
	$w_{\mu(x)}=w_{\mu(y)}$.
\end{example}
We identify spanners and their representations; \eg by referring to a representation $\varrho$ as a spanner (technically, $\fun{\varrho}$ is the spanner) or by calling the elements of $\RGXcore$ \emph{core spanners}.

\subsection{Adding expressive power to \FCtxt}\label{sec:constraints}
As core spanners are based on regular expressions, they can define all regular languages. 
This makes them more powerful than $\fragExPos{\FC}$.
To prove this, we first connect $\FC$ to~$\logC$. 
\begin{lemma}[label=lem:FCtoC,restate=restateLemFCtoC]
	Given $\phi\in\FC$, we can construct in polynomial time $\psi\in\logC$ such that $\sigma\models\phi$ if and only if $\sigma\models\psi$. This also preserves the properties existential and existential-positive.
\end{lemma}
Hence, $\fragExPos{\FC}$ is not more expressive than $\fragExPos{\logC}$, which cannot express all regular languages -- not even comparatively ``harmless'' languages like \eg $\{\ta,\tb\}^*\mathtt{c}$ (see Karhumäki, Mignosi, Plandowski~\cite{kar:exp}).
While we could define this specific language using negation, we shall address the issue in a way  that generalizes far beyond regular languages and that does not require us to leave the existential-positive fragment (and its friendlier upper bounds).
Complexity is also a reason why we do not use $\MSO$ or define a second-order version of $\FC$.

Instead, take inspiration from $\logC$ (see Diekert~\cite{die:mak}). 
The \emph{theory of concatenation with regular constraints}, $\logCreg$, extends $\logC$ by allowing \emph{regular constraints} $\constr{x}{\alpha}$ as atoms, where $x\in\Xi$, $\alpha$ is a regular expression, and $\sigma\models \constr{x}{\alpha}$ if $\sigma(x)\in\Lang(\alpha)$. 
We define $\FCreg$ analogously, where $\sigma\models \constr{x}{\alpha}$ has the additional condition that $\sigma(x)\factor\sigma(\sv)$ must hold.
\begin{theorem}[label=thm:alsoWithConstraints,restate=restateThmAlsoWithConstraints]	
Theorem~\ref{thm:recog} and Theorem~\ref{thm:mc} also hold if we replace $\FC$ with $\FCreg$ and $\fragExPos{\FC}$ with   $\fragExPos{\FCreg}$.
\end{theorem}
In other words, evaluation is $\compclass{PSPACE}$-complete for $\FCreg$ and $\compclass{NP}$-complete for $\fragExPos{\FCreg}$, and formula width can be used as parameter to bound model checking for $\FCreg$.
This generalizes to all constraints that can be decided in polynomial time, which allows us to adapt $\FC$ to other settings as well. 

For example, string solvers often use \emph{length constraints}.
There are predicates that compare words by applying arithmetic to their lengths, like $|x|+|y|=|z|$.
While the applications of $\fragExPos{\logC}$ in a string solver context usually rely on deciding satisfiability, cases where model checking suffices could benefit from using $\FC$ with appropriate constraints.

Regarding prior work, the $\logCreg$-fragments $\splog$ and $\splogneg$ were introduced in~\cite{fre:splog} as alternatives to $\RGXcore$ and $\RGXcored$, respectively. As these ensure the finite universe purely through syntax, they  are  more cumbersome than $\FC$ and do not generalize as nicely.

Before we connect $\FCreg$ to spanners, we take a brief look at restricted regular expressions that can be expressed in $\FC$. We call a regular expression  \emph{simple} if the  operator~$^*$ is only applied to  terminal words or to $\Sigma$ (a shorthand for $\bigcup_{\ta\in\Sigma} \ta$). 
That is, if $\Sigma=\{\ta,\tb,\mathtt{c}\}$, then $(\mathtt{abc})^*\Sigma^*$ is simple, but $(\ta\cup\tb)^*$ and $(\ta(\tb)^*)^*$ are not.
\begin{lemma}[restate=restateLemSimple,label=lem:simple]
	For every simple regular expression $\alpha$, there is $\phi^{\alpha}(x)\in\fragExPos{\FC}$ such that $(w,\sigma)\models\phi^{\alpha}$ if and only if $\sigma(x)\in\Lang(\alpha)$ and $\sigma(x)\factor w$.
\end{lemma}
The proof uses the characterization of commuting words (see \eg  Lothaire~\cite{lot:com}).
We shall use this Lemma in the proof of Theorem~\ref{thm:equalLength}, to replace regular constraints.
\subparagraph*{\FCregtxt and Spanners}\label{sec:FCandSpanners}
As we want to use $\FCreg$ for spanners, we still need to close a formal gap, namely that  spanners reason over positions in a word, while $\FCreg$  reasons over words. 
We bridge this gap through the notion of one \emph{realizing} the other, which~\cite{fre:splog} introduced for the logic $\splog$.
We begin with formulas that realize spanners.
\begin{definition}\label{def:splogreal}	
	A substitution $\sigma$ \emph{expresses} a $(V,w)$-tuple $\mu$ if $\dom(\sigma)\supseteq\{x^P,x^C\mid x\in V\}$ and, for all $x\in V$, we have $\sigma(x^P)=w_{\spn{1,i}}$ and $\sigma(x^C)=w_{\spn{i,j}}$ for $\spn{i,j}=\mu(x)$.
	
	A formula  $\phi\in\FCreg$ \emph{realizes} a spanner  $P$ 
	if  $\fvar(\varphi)=\{x^P,x^C\mid x\in\SVars{P}\}$
	and, for all $w\in\Sigma^*$, we have 
	$(w,\sigma)\models\phi$ if and only if $\sigma$ expresses some $\mu\in P(w)$.
\end{definition}
In other words, $x^C$ is $w_{\mu(x)}$ (the content of $x$), and $x^P$ is the prefix of $w$ before $w_{\mu(x)}$.
\begin{example}
	In Example~\ref{ex:regexformula},  we defined
	$\alpha \df \Sigma^*  \bigl(\bind{x}{\mathtt{banana}} \ror \bind{x}{\mathtt{papaya}}\bigr) \Sigma^*.$
	Its spanner $\fun{\alpha}$ is realized by
$\phi\avs{x^P,x^C}\df \ex{y}{\sv\weqeq x^P x^C y} 
	\land \bigl(x^C\weqeq \mathtt{banana} \lor x^C \weqeq \mathtt{papaya} \bigr).$
	
	Then $(w,\sigma)\models \phi$ if $\sigma$ expresses some $\mu\in\fun{\alpha}(w)$. That is, $\sigma(x^C)$ contains $w_{\mu(x)}$ (\ie, $\mathtt{banana}$ or $\mathtt{papaya})$, and $\sigma(x^P)$ contains the prefix in $w$ before it.
\end{example}
To show that $\FCreg$ cannot express more than the classes of spanners that we consider, we also define the notion of spanners that realize formulas.
\begin{definition}
 A spanner $P$ \emph{realizes} $\phi\in\FCreg$ if  $\SVars{P}=\fvar(\varphi)$ and, for all $w\in\Sigma^*$, we have 
$\mu\in P(w)$ if and only if $(w,\sigma)\models\phi$  for the $\sigma$  with $\sigma(x)\df w_{\mu(x)}$ for all  $x\in\SVars{P}$.
\end{definition}
There are \emph{polynomial-time conversions} from a class of formulas (or spanners) $A$ to a class of spanners (or formulas) $B$ if, given $x\in A$, we can compute in polynomial time $y\in B$ that realizes $x$.
We write $A\polyeq B$ if there are polynomial-time conversions from $A$ to $B$ and from $B$ to $A$.
\begin{theorem}[label=thm:FCvsSpanners, restate=restateThmFCvsSpanners]
 $\FCreg\polyeq\RGXcored$ and  $\fragExPos{\FCreg}\polyeq\RGXcore$. 
\end{theorem}
\subsection{Inexpressibility for \FCtxt, \FCregtxt, and spanners}\label{sec:inexpressibility}
There are currently only few inexpressibility methods for $\FC$ and  $\FCreg$, as there are only few such methods for related models like spanners or the theory of concatenation.
A detailed discussion from the point of view of $\RGXcore$ and $\splog$ can be found in Section~6 of~\cite{fre:splog}.
These techniques do not account for negation, which makes them inapplicable for $\FC$ or $\FCreg$.
A standard tool for  $\FO$-inexpressibility  are \EF games (\eg \cite{lib:ele}). But as concatenation acts as a generalized  addition,  using these for $\FC$ or $\FOstr$ is far from straightforward.
Another standard tool is the Feferman-Vaught theorem (see~\cite{DBLP:journals/apal/Makowsky04}). While this can be used for $\FC$, the factor universe of $\FC$ makes decomposing the structure into disjoint sets inconvenient. 
Instead of following down this road, we introduce $\FOstr$, an extension of $\FOord$ that has the same expressive power as $\FC$.
\subparagraph*{Connecting \FCtxt to \FOordtxt}\label{sec:fo}\label{sec:fostreq}
In this section, we establish connections between $\FC$ and ``classical'' relational first-order logic.
It is probably safe to say that in finite model theory, the most common way of applying first-order logic to words is the logic $\FOord$ (and the more general $\MSO$). This uses the equality~$\foeq$  and a vocabulary that consists of a binary relation symbol~$<$ and unary relation symbols $\lettPred{\ta}$ for each $\ta\in\Sigma$.
Every word $w=a_1\cdots a_n\in\Sigma^+$ with $n\geq 1$ is represented by a structure $\structClassic{w}$ with universe~$\{1,\dots,n\}$. For every $\ta\in \Sigma$, the relation $\lettPred{\ta}$ consists of those $i$ that have $a_i = \ta$. 
To simplify dealing with~$\emptyword$, we slightly deviate from this standard structure. For every $w\in\Sigma^*$, we extend $\structClassic{w}$ to  $\struct{w}$ by adding an additional ``letter-less'' node $|w|+1$ that occurs in no $\lettPred{\ta}$. 
Then we have a one-to-one correspondence between pairs $(i,j)$ with $i\leq j$ from the universe of $\struct{w}$ and the spans $\spn{i,j}$ of~$w$ (see Section~\ref{sec:spanners}), and  $w=\emptyword$ does not require a special case.
\begin{definition}
$\FOstr$ extends $\FOord$ with constants $\minConst$ and $\maxConst$, the binary relation symbol $\nextPred$, and the 4-ary relation symbol $\strEqPred$.
For every $w\in \Sigma^*$ and the corresponding structure $\struct{w}$, these symbols express  $\minConst=1$, $\maxConst=|w|+1$, $\nextPred=\{(i,i+1)\mid 1\leq i \leq |w|\}$, and $\strEqPred$ contains those $(i_1,j_1,i_2,j_2)$ with $i_1\leq j_1$ and $i_2\leq j_2$ such that $w_{\spn{i_1,j_1}}=w_{\spn{i_2,j_2}}$.
We write
$(w,\alpha)\models\phi$ to denote that $\alpha$ is a satisfying assignment for $\phi$ on~$\struct{w}$. 
\end{definition}
\begin{example}
	The $\FOstr$-formula
	$ 
	\ex{x}{\strEqPred(\minConst,x,x,\maxConst)}
	$
	defines $\{ww\mid w\in\Sigma^*\}$.
\end{example} 
 Technically, we do not need the symbols $\minConst$, $\maxConst$, or $\nextPred$, as these can be directly expressed in $\FOord$. 
But  these constants allows us to better preserve the structural similarities when converting between various fragments of $\FC$ and $\FOstr$.  

When comparing $\FC$ to $\FOstr$, we need to address that one operates on words and the other on positions.
We can handle this in a way that is similar to the situation between $\FC$ and spanners; and this can be used to show that there are polynomial time conversions between $\FC$ and $\FOstr$ that preserve the properties existential and existential-positive, and only marginally increase the width of the formulas. 
This is discussed in more detail in \cref{app:lem:FClanguagesFOh}.

In fact, these transformations show that one could choose $\FOstr$ over $\FC$ as a logic for words (or for spanners, if one extends $\FOstr$ with regular constraints or generalizes it to $\MSO$ with $\strEqPred$). This is a valid choice, if one prefers writing $\ex{x_1,\dots,x_6}{\bigl(\lettPred{\mathtt{p}}(x_1)\land\lettPred{\mathtt{a}}(x_2)\land\lettPred{\mathtt{p}}(x_3)\land\lettPred{\mathtt{a}}(x_4)\land\lettPred{\mathtt{y}}(x_5)\land\lettPred{\mathtt{a}}(x_6)\land\bigland_{i=1}^5\nextPred(x_i,x_{i+1})\bigr)}$ over $\ex{x}{x\weqeq\mathtt{papaya}}$ or  if one wants to express  $x\weqeq yz$ as $\ex{x^m}{\bigl(\strEqPred(x^o,x^m,y^o,y^c)\land\strEqPred(x^m,x^c,z^o,z^c)\bigr)}$ instead.

Details on these conversions (and the required definitions) can be found in the full version of this paper. 
For the sake of finding an inexpressibility result, we only require the following.

\begin{lemma}[label=lem:FClanguagesFO,restate=restateLemFClanguagesFO]
A language is definable in $\FC$ if and only if it is definable in $\FOstr$.
\end{lemma}

\subparagraph*{Proving inexpressibility}
Lemma~\ref{lem:FClanguagesFO} allows us to  use Feferman-Vaught theorem, at least when considering languages that are restricted enough.
\begin{lemma}[label=lem:anbn,restate=restateLemAnBn]
	There is no $\FC$-formula that defines   $\{\ta^n\tb^n\mid n\geq 1\}$.
\end{lemma}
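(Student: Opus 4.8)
The plan is to use the Feferman–Vaught theorem for $\FOord$ (equivalently, a direct Ehrenfeucht–Fra\"{i}ss\'{e} argument) by transporting the problem from $\FCcon{\nc}$ to $\FOstr$ via Lemma~\ref{lem:FCtoFO}. Suppose, for contradiction, that some $\phi\in\FCcon{\nc}$ defines $L\df\{\ta^n\tb^n\mid n\geq 1\}$; by Theorem~\ref{thm:FCvsFO} there is $\psi\in\FOstr[\emptyset]=\FOstr$ with $\Lang(\psi)=L$. The crucial observation is that $\FOstr$ over words of the form $\ta^m\tb^n$ is no more powerful than plain $\FOord$ — indeed even than a purely additive/counting logic — because on such highly structured inputs every span is determined by two numbers (how many $\ta$'s and how many $\tb$'s it covers), and the string-equality predicate $\strEqPred$ on two spans of $\ta^m\tb^n$ reduces to a comparison of these counts and a check of whether the span lies entirely in the $\ta$-block, entirely in the $\tb$-block, or straddles the boundary. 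So the first step is a normalization lemma: restricted to the regular language $\ta^*\tb^*$, every $\FOstr$-sentence is equivalent to an $\FOord$-sentence. This follows because $\minConst,\maxConst,\nextPred$ are $\FOord$-definable and, on $\ta^*\tb^*$, $\strEqPred(x_1,y_1,x_2,y_2)$ can be rewritten using only $<$, the letter predicates $\lettPred{\ta},\lettPred{\tb}$, and equality.

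Next I would invoke the standard fact (a consequence of the Feferman–Vaught composition theorem for ordered sums, or a short direct $\EF$ argument) that the star-free languages — hence all $\FOord$-definable languages — intersected with $\ta^*\tb^*$ are exactly the finite Boolean combinations of languages of the form $\{\ta^m\tb^n\mid m\geq c\}$, $\{\ta^m\tb^n\mid n\geq c\}$, and their residues modulo fixed constants; in particular any $\FOord$-definable subset of $\ta^*\tb^*$ is \emph{eventually periodic} in the pair $(m,n)$ and cannot enforce the diagonal constraint $m=n$ for all large $m$. Concretely: for $m,m'$ both at least the quantifier rank of $\psi$, the structures $\struct{\ta^m\tb^m}$ and $\struct{\ta^{m}\tb^{m+1}}$ are indistinguishable by $\FOord$-sentences of that rank (Duplicator wins the $\EF$ game by the usual pebbling-with-thresholds strategy on each of the two blocks separately, since the blocks are just linear orders coloured by a single letter). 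Hence $\psi$ cannot separate $\ta^m\tb^m\in L$ from $\ta^m\tb^{m+1}\notin L$, contradicting $\Lang(\psi)=L$.

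So the steps, in order, are: (1) assume $L$ is $\FCcon{\nc}$-definable; (2) push to $\FOstr$ via Theorem~\ref{thm:FCvsFO}; (3) prove the normalization lemma that $\FOstr$ collapses to $\FOord$ on $\ta^*\tb^*$, the heart being the elimination of $\strEqPred$ by a case split on where each span sits relative to the $\ta\tb$-boundary; (4) apply the $\EF$/Feferman–Vaught argument to show $\FOord$ cannot define $L$, yielding the contradiction. The main obstacle is step (3): one must be careful that $\strEqPred$ compares two \emph{arbitrary} spans, so the rewriting must handle the mixed case (one span inside the $\ta$-block, the other straddling or inside the $\tb$-block, etc.) and must correctly encode that a straddling span $\ta^i\tb^j$ equals another straddling span $\ta^{i'}\tb^{j'}$ iff $i=i'$ and $j=j'$, while e.g.\ an all-$\ta$ span equals an all-$\tb$ span only if both are empty. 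Once that bookkeeping is done, the replacement is a fixed finite disjunction over boundary-position cases, so it is first-order in $<$ and the letter predicates, and the rest is routine.
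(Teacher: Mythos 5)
Your overall strategy (pass to $\FOstr$ via Theorem~\ref{thm:FCvsFO}, then analyse words of the form $\ta^m\tb^n$ by splitting them at the block boundary and invoking Feferman--Vaught or an \EF argument) runs parallel to the paper's, but your step (3) --- the ``normalization lemma'' asserting that $\FOstr$ collapses to $\FOord$ on $\ta^*\tb^*$ --- is false, and it is exactly the step that carries the difficulty of the lemma. When both spans lie inside the same letter block, $\strEqPred(x_1,y_1,x_2,y_2)$ holds iff the two intervals have the same \emph{length}, i.e.\ iff $y_1-x_1=y_2-x_2$; this equal-length comparison is precisely the counting that first-order logic over (coloured) linear orders cannot do, so it cannot be rewritten using only $<$, the letter predicates, and equality. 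Concretely, the $\FOstr$-sentence $\ex{x}{\strEqPred(\minConst,x,x,\maxConst)}$ defines, within $\ta^*\tb^*$, the language $(\ta\ta)^*\cup(\tb\tb)^*$; if some $\FOord$-sentence agreed with it on all of $\ta^*\tb^*$, then intersecting with the star-free language $\ta^*$ would exhibit $(\ta\ta)^*$ as star-free, which it is not. Consequently your step (4) also misfires: the \EF game you describe is the game for plain $\FOord$ on coloured linear orders, but the sentence you actually need to defeat still contains $\strEqPred$, and Spoiler may use it.

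The paper's proof sidesteps this by never eliminating $\strEqPred$. It splits $\struct{\ta^m\tb^n}$ into the disjoint union of an $\ta$-part and a $\tb$-part, each retaining its \emph{own} restricted (non-empty) string-equality relation, and rewrites the given sentence so that it refers only to these per-block relations; the combinatorial fact that makes this possible is that two equal non-empty subwords of $\ta^m\tb^n$ must either both lie in the $\ta$-block or both lie in the $\tb$-block, so no cross-block equality is ever needed. The Feferman--Vaught theorem is then applied to this disjoint union of \emph{arbitrary} finite structures (not linear orders), yielding for each word a finite bit-vector; a pigeonhole argument over the infinitely many words $\ta^n\tb^n$ produces $m\neq n$ whose $\ta$-half and $\tb$-half recombine into $\ta^m\tb^n$, which the sentence would then wrongly accept. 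To salvage your route you would have to replace the collapse to $\FOord$ by a composition argument over block structures that keep the equal-length relation, which is essentially what the paper does.
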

Moreover, we can show that regular constraints offer no help for defining this language.
\begin{theorem}[label=thm:equalLength,restate=restateThmEqualLength]
	 $\FCreg$ cannot express the equal length relation $|x|=|y|$.
\end{theorem}
As $\FCreg$ has the same expressive power as $\RGXcored$, this is the first inexpressible result for $\RGXcored$ on non-unary alphabets.
The proof has two parts, which both rely on the limited structure of the language  $\ta^n\tb^n$.
One part is using Lemma~\ref{lem:anbn}, wich applies the Feferman-Vaught theorem. 
The other is using Lemma~\ref{lem:simple} to eliminate the regular constraints, which 
 is based on combinatorics on words.
The authors expect that a more general inexpressibility method for $\FC$ (or even $\FCreg$) would  need to 
combine more advanced techniques from combinatorics on words (like those in~\cite{kar:exp})  with methods from logic.

\section{Conclusions and future work}\label{sec:conclusions}
On words, concatenation is one of the most natural operations.
But as seen for~$\logC$, using concatenation with first-order logic quickly becomes undecidable.
Restricting the universe to a word and all its factors changes the situation drastically.
In contrast to $\logC$, the resulting logic $\FC$ has a meaningful distinction between satisfiability and model checking; 
and the latter is not only decidable, but we can use the structure of the formula to derive upper bounds  in the same way as for $\FO$ over finite structures.
In addition to this, $\FC$ can also replace $\FOord$ as ``base'' logic for characterizing complexity classes.
Hence, while one might certainly make a case against the claim that $\FC$ is \emph{the} finite model version of the theory of concatenation, the results leave little doubt that it is at least \emph{a} valid approach.

$\FC$ also provides an extendable framework for querying and model checking words, in particular for scenarios that rely on expressing that factors  appear multiple times.
If more expressive power is needed, $\FC$ is easily extended with constraints, without affecting the lower bounds on evaluation and model checking.
In particular, we can translate  core and generalized core spanners to $\FCreg$ and then analyze or optimize these formulas with respect to parameters like width.
To a degree, this was also possible the spanner logic $\splog$, but $\FC$ is more elegant, easier to use, and behaves much more like $\FO$ on relational databases.

\subsection*{Future work}
Many fundamental questions remain open, in particular for  model checking and related problems, like evaluation and enumeration.

\subparagraph*{Compilation into tractable fragments}
One promising direction is the compiling of formulas into equivalent formulas of a fragment where these problems can be solved more efficiently.
For example, \cref{thm:mc} shows that bounding the width of the formulas leads to tractable model checking. 
\cref{thm:patToFC} then provides us with a sufficient criterion for formulas that can be rewritten into formulas with a lower width, by decomposing the pattern of word equations.
It is likely that this approach can be further refined by not just rewriting single patterns, but taking the larger formula into account. 
This approach can also be used with other structure parameter for formulas (like acyclicity and bounded tree width), by developing a corresponding variant of \cref{thm:patToFC}.
One example of this is~\cite{fre:split}, which adapts the concept of acyclic conjunctive queries to $\FC$.

A more fundamental question is whether all tractable fragments of $\FC$ can be explained through the criterion of bounded width (or are subset of a larger tractable fragment that is explained through it). 
A good starting point for this line of investigation is the question whether all classes of pattern languages with a polynomial time membership problem can be explained through bounded width.

\subparagraph*{Model checking as parsing}
Another approach -- that is not investigated in the present paper -- is the connection to parsing algorithms.
This is a natural question, as model-checking $\fragExPos{\FC}$-formulas can be understood as a parsing problem (where variables are mapped to factors of the input word). 
Promising starting points for this are parsing algorithms for RCGs (recall \cref{sec:iterationRecursion}) and related grammars, and the extraction grammars from~\cite{pet:gra}.

\subparagraph*{Data structures} 
Model checking algorithms will likely benefit from specialized data structures. 
For example, a naive representation of all factors of a word of length $n$ would contain about $O(n^2)$ elements, and if these are just represented directly as words, this would take $O(n^3)$ memory.
But using data structures like suffix trees and suffix arrays, one can create in time $O(n)$ a data structure that allows us to enumerate all factors with constant delay (see~\cite{fre:split}, which also examines small word equations). 

While these optimizations do not matter if one considers polynomial time efficient enough, it would be very useful to know which fragments can be model-checked in time $O(n^k)$ for small~$k$, or even in sub-quadratic time.

\subparagraph*{Inexpressibility and satisfiability} Our results on inexpressibility also leave many questions open.
Lemma~\ref{lem:anbn} heavily relies on the limited structure of the language.
This is the same situation as in Section~6.1 of~\cite{fre:splog}, which describes an inexpressibility technique for $\fragExPos{\FCreg}$.
Although these two approaches provide us with some means of 
proving inexpressibility, they only cover special cases, and much remains to be done.
It seems likely that a more general method will need to combine approaches from finite model theory (like the Feferman-Vaught theorem that we used for \cref{lem:anbn}) with techniques from combinatorics on words (like those in~\cite{kar:exp} that \cite{fre:splog} uses).
A related problem that is still open is whether $\fragExPos{\FC}$ has the same expressive power as $\fragExPos{\logC}$.

Of particular interest is finding a method to prove inexpressibility in $\FC^k$ for some $k>0$.
This problem relates to the open questions whether there are algorithms that minimize the width of a formula, and for which $k$ the fragment $\FC^{k+1}$ is more expressive than $\FC^{k}$.
The authors conjecture that this holds for all $k\geq 0$, which would contrast with $\FOord$, where the fragment of formulas with width three has the same expressive power as the full logic.
Finally, it remains open whether satisfiability is decidable for $\FC^1$ or $\FC^2$.

\subparagraph*{Beyond FC}
Using $\FC$ as a logic for spanners (and other models, potentially) raises further questions. 
For example, while every tractable fragment of $\FCreg$ maps to a tractable fragment of core spanners (namely, those that are obtained by converting the formulas), there is no guarantee that the obtained fragment is natural. 
Hence, a more detailed investigation into conversions between $\FCreg$ and $\RGXcore$ is justified. 

There are  many other possible directions. For example, one could easily define a second-order version of $\FC$ and adapt various results from \textsf{SO}. Moreover, $\FC$ could be examined from an algebra point of view, or related to rational and regular relations.
\bibliography{theoConFM}
\clearpage
\appendix
\newpage
\section{Appendix for Section~\ref{sec:properties}}
\subsection{Proof of Theorem~\ref{thm:recog}}\label{app:thm:recog}
\restateThmRecog*

\begin{proof}
 This upper bounds are straightforward: For $\fragExPos{\FC}$, we only need to deal with existential quantifiers, and as every quantified variable has to be a factor of the universe variable, these  can be guessed. For every substitution and every word equation, $\sigma\models x\weqeq\alpha$ can be verified in linear time. This results in an  an $\compclass{NP}$-algorithm. For $\FC$, we can represent all quantified variables in polynomial space, and enumerating all possible choices for these still results in a $\compclass{PSPACE}$-algorithm. 

	\proofsubparagraph{PSPACE lower bound} The idea is very similar to  Theorem 6.16 in Libkin~\cite{lib:ele}. 
	We  use  reduction from the QBF-3SAT problem, which is stated as follows: Given a well-formed quantified Boolean formula $\psi=Q_1 v_1\colon \cdots Q_k v_k\colon \psi_C$, where $k\geq 1$, $Q_i\in\{\exists,\forall\}$, and $\psi_C$ is a propositional formula in 3-CNF, decide whether $\psi$ is true. This problem is $\compclass{PSPACE}$-complete, see \eg Garey and Johnson~\cite{gar:com}.
	
	Let $\psi=Q_1 \hat{x}_1\colon \cdots Q_k \hat{x}_k\colon \bigland_{i=1}^m \bigl(\ell_{i,1}\land \ell_{i,2}\land \ell_{i,3}\bigr)$ with $\ell_{i,j}\in\{\hat{x}_l,\neg \hat{x}_l\mid 1\leq l \leq k\}$.
	Choose $\ta\in\Sigma$. The $\FC$-formula that we construct shall represent each propositional variable $\hat{x}_l$ with a variable $x_l$, and shall use $x_l=\ta$ and $x_l=\emptyword$ to represent $\hat{x}_l=1$ and $\hat{x}_l=0$, respectively.
	We define a formula $\phi\in\FC$ by $\phi\df\phi^Q_1$, where
	\begin{align*}
		\phi^Q_i &\df \begin{cases}
			\ex{x_i}{ \phi^Q_{i+1}} & \text{if $Q_i=\exists$,}\\
			\fa{x_i}{\left((x_i\weqeq \ta \lor x_i\weqeq \emptyword)\rightarrow \phi^Q_{i+1}\right)} & \text{if $Q_i=\forall$}
		\end{cases}\\
		\shortintertext{for $1\leq i \leq k$, and}
		\phi^Q_{k+1}&\df \bigland_{j=1}^m (\phi^{\ell}_{j,1}\lor\phi^{\ell}_{j,2}\lor\phi^{\ell}_{j,3}),\\
		\phi^{\ell}_{j,l}&\df \begin{cases}
			x_v\weqeq  \ta & \text{ if $\ell_{i,j}=\hat{x}_v$}\\
			x_v\weqeq  \emptyword & \text{ if $\ell_{i,j}=\neg\hat{x}_v$}
		\end{cases}
	\end{align*}
	for $1\leq j \leq m$ and $l\in\{1,2,3\}$. Clearly, $\tau\models\phi^Q_{k+1}$ if and only if $\tau$ encodes a satisfying assignment of the propositional formula. Moreover, the universal quantifiers $\fa{x_i}{}$ are used such that the only interesting substitutions for $x_i$ are those that map $x_i$ to $\ta$ or to $\emptyword$.
	
	Let $\sigma(\sv)\df\ta$. Then $\sigma\models\phi$ if and only if $\psi$ is true. As $\phi$ can be constructe in polynomial time and QBF-3SAT is $\compclass{PSPACE}$-complete, this means that the model checking problem for  $\FC$ is $\compclass{PSPACE}$-hard.
	
	\proofsubparagraph{NP lower bound} The proof for the $\compclass{PSPACE}$ lower bound can be adapted to a reduction from 3SAT by  choosing only existential quantifiers.
\end{proof}
\subsection{Proof of Theorem~\ref{thm:mc}}\label{app:thm:mc}
\restateThmMC*

\begin{proof}
The proof is an extension of the bottom-up evaluation for the $\FO$-case (see \eg Theorem~4.24 in Flum and Grohe~\cite{flu:par}). 
Let $\phi\in\FC$ and $w\in\Sigma^*$. 
For convenience, let $k\df\formulaWidth(\phi)$ and $n\df |w|$. As every variable in $\phi$ must be mapped to a factor of $w$, this means that we have $O(n^{2})$ possible assignments for each variable. 
  
For every word equation $x\weqeq \alpha$, we know that $\formulaWidth(x\weqeq \alpha)\leq k$. This means that there are $O(n^{2k})$ different $\tau$ that could satisfy $(w,\tau)\models (x\weqeq \alpha)$. We can create a  list of all these $\tau$ in time $O(n^{2k+1})$ by enumerating the $O(n^{2k})$ many possible choices and checking each choice in time $O(n)$.

We can lower the complexity to $O(n^{2k})$ by representing each assignment $\tau(x)$ as a pair of pointers $(i,j)$, where $(i,j)$ determine the beginning and end of $\tau(x)$ in $w$. We then pre-compute a table of all pairs of such pairs that determine the same word. This table contains at most $O(n^2)$ entries and can be computed in time $O(n^3)$.
By starting with $\tau(x)$, this table can then be used to check $(w,\tau)\models(x\weqeq\alpha)$ in time $O(k)$, bringing the complexity of generating the list down to $O(kn^{2k})$.

Of course, this is still only a rough upper bound. For example, if a variable $y$ is the first or last variable of $\alpha$, there are only $O(n)$ possible assignments for $y$; and if $\alpha$ starts or ends with terminals, this restricts the possible choices for $x$.

The lists of results can then be combined as in the relational case, requiring time $O(kn^{2k})$ in each inner node of the parse tree of $\phi$. After computing all these sets, we check whether the list for the root not is non-empty, and return the corresponding result.
As the number of these is bounded by $|\phi|$, we arrive at a total running time of $O(k|\phi|n^{2k})$.
\end{proof}
\subsection{Proof of Theorem~\ref{thm:patToFC}}\label{app:thm:patToFC}
\restateThmPatToFC*
\begin{proof}
We first give a short summary of the definition of tree decompositions, treewidth, and nice tree decompositions (based on Chapter~7 of~\cite{cyg:par}). Readers who are familiar with these are invited to skip over to the actual construction.

\proofsubparagraph{Tree decompositions} A \emph{tree decomposition} of a graph $G= (V,E)$ is a tree $T$ with a function $B$ that maps every node $t$ of $T$ to a subset of $V$ such that:
\begin{enumerate}
	\item for every $i\in V$, there is at least one node $t$ of $T$ such that $i\in B(t)$,
	\item for every edge $(i,j)\in E$,  there is at least one node $t$ of $T$ such that $i,j\in B(t)$,
	\item for every $i\in V$, the set of nodes $t$ of $T$ such that $i\in B(t)$ induces a connected subtree of~$T$.
\end{enumerate}
The \emph{width} of a tree decomposition $(T,B)$ is the size of the largest $B(t)$ minus one. The \emph{treewidth} $\treeWidth{G}$ of $G$ is the minimal possible treewidth over all tree decompositions of $G$. 
A~tree decomposition $(T,B)$ of $G$ is called \emph{nice} if, firstly, $T$ has a root $r$ such that  $B(r)=\emptyset$ and $B(l)=\emptyset$ for every leaf $l$ of $T$, and secondly, every non-leaf node is of one of the following types:
\begin{itemize}
	\item \emph{introduce node (for $i$):} a node $t$ with exactly one child $t'$ such that $B(t)=B(t')\cup\{i\}$ with  $i\notin B(t')$,
	\item \emph{forget node (for $i$):} a node $t$ with exactly one child $t'$ such that $B(t)=B(t')\setdiff\{i\}$ with $i\in B(t')$,
	\item \emph{join node:} a node $t$ with exactly two children $t_1$ and $t_2$ such that $B(t)=B(t_1)=B(t_2)$. 
\end{itemize}
Recall that if a graph has a tree decomposition of width at most $k$, it also has a nice tree decomposition of width a most $k$. Moreover, for every $i\in V$, there is exactly one forget node.

\proofsubparagraph{Construction}
Let $\phi\df \ex{x_1,\ldots,x_m}{y\weqeq \alpha}$ with $\alpha=\alpha_1\cdots\alpha_n$ and $\alpha_1,\ldots,\alpha_n\in (\Sigma\cup\Xi)$. We assume $n\geq 2$, as the other cases are trivial, and $\sv\notin\var(\alpha)$, as this makes the explanation easier (by allowing us to skip comments like ``add a quantifier for the variable, unless it is $\sv$). As $\sv$ does not affect the width, this does not affect the bounds anyway.

Furthermore, it is enough to just consider the case of $\fvar(\phi)=\emptyset$, that is, $\{x_1,\ldots,x_m\}=\{y\}\cup\var(\alpha)$.
This means that we want to construct a formula $\psi$ with $\formulaWidth(\psi)\leq 2\treeWidth{\alpha}+2$.
We can then generalize this to free variables by removing the corresponding existential quantifiers, increasing the width accordingly.

The key idea is to decompose $\alpha$ into a conjunction of smaller equations -- that is, we treat $\alpha$ as a term that uses the binary concatenation function. 
This results in the intermediate formula
\[
\phi'\df \ex{\vec{x},\vec{z}}{\bigl(z_2\weqeq \alpha_1\alpha_2\land y\weqeq z_{n-1}\alpha_n \land \bigland_{i=2}^{n-2} z_{i+1}\weqeq z_i \alpha_{i+1}  \bigr)},
\]
where $\vec{x}\df x_1,\ldots,x_m$, $\vec{z}=z_2,\ldots,z_{n-1}$, and the $z_i$ are new variables.
As intuition, each $z_i$ represents the subpattern $\alpha_1\cdots\alpha_i$, meaning that  a pattern substitution $\sigma$ that satisfies the quantifier-free part of $\phi'$ has $\sigma(z_i)=\sigma(\alpha_1\cdots\alpha_i)$.

Our goal is now to reduce the width of $\phi'$ similar to \cref{example:patCompress} -- that is, by moving the quantifiers inward (and shuffle the word equations around accordingly).
As in the proof of Kolaitis and Vardi~\cite{kol:con} for variable bounded $\FO$, the necessary information follows directly from from the tree decomposition.

Consider a nice tree composition $(T,B)$ of $\patgraph{\alpha}= (V_{\alpha},E^{<}_{\alpha}\cup E^{\mathsf{eq}}_{\alpha})$ (recall \cref{def:patGraph}) that has width $k$. 
We shall rewrite $T$ into the parse tree of our formula $\psi$ by assigning each word equations and quantifiers of $\phi'$ to a node of $t$.  

The first step is to annotate the node of $t$ with variables. 
For each node $t$ of $T$, we consider each $i\in B(t)$.  
First, we annotate $i$ with $z_i$, unless $i=n$, where we use $y$ instead, or $i=1$, where we do nothing.
Then we examine the corresponding pattern position $\alpha_i$.
If
$\alpha_i=x$ for some variable $x\in \Xi$, we annotate the node with $x$.
We do not need to do anything if  $\alpha_i\in \Sigma$, as terminals do not affect the width of the formula.

Next, we place each word equation $\eta$ of $\phi'$ at the highest node in $T$ that is marked with its variables (the number of variables in $\eta$ is two or three, depending on whether the pattern position was a terminal or a variable).
This is always possible, as $E^{<}$ ensures that we can connect ``neighbouring'' $z_i$ and $z_{i+1}$, and $E^{=}$ ensures that the occurrences of each variable in $\alpha$ are connected.
For each $z_i$, we place $\exists{z_i}$ at the forget node for $i$; and $\exists{y}$ is placed at the forget node for $n$.
Finally, for each $x_i$, we place $\exists x_i$ at the highest forget node of any  $j$ that has $\alpha_j=x_i$.

We then obtain $\psi$ by using conjunctions to combine all word equations in a node and to combine children in the tree, keeping the quantifiers where they are.
As $|B(t)|\leq k+1$, each node in $T$ was annotated with at most $2k+2$ variables, which means that $\formulaWidth(\psi)\leq 2k+2$.

If $\alpha$ has bounded treewidth, the nice tree decomposition can be computed in polynomial time, which means that $\psi$ can be constructed in polynomial time.
\end{proof}
\subsection{Proof of Proposition~\ref{prop:FcSat}}\label{app:prop:FcSat}
\restatePropFcSat*
First, note that the proof of \cref{thm:nonrec} (see \cref{app:thm:nonrec}) shows that satisfiability is undeciable for $\FC^4$. 
But that proof is not easily modified to $\FC^3$.
Furhermore, Durnev~\cite{dur:und} shows undecidability of satisfiability for the $\forall\exists^3$-fragment of $\fragExPos{\logC}$; that is, for formulas of the form $\fa{s}{\ex{x,y,z}{\phi}}$, where $\phi$ is positive and quantifier-free. 
But  the formula that is constructed in~\cite{dur:und} is not an $\FC$-formula, as it contains equations of the form $x0\weqeq 0x$.
In principle, one could prove Proposition~\ref{prop:FcSat} by rewriting the proof from~\cite{dur:und} or sketching which changes need to be made. But as the following proof is short enough (and as it is an opportunity to use \textsc{Fractran}), we give an original proof instead.
\begin{proof}
We show the undecidability by providing a reduction from the halting problem for \textsc{Fractran}-programs (introduced by Conway~\cite{con:frac}). A  \textsc{Fractran}-program is a finite sequence $P\df (\frac{n_1}{d_1},\frac{d_2}{n_2},\dots,\frac{d_k}{n_k})$ with $k\geq 1$ and $n_i,d_i\geq 1$. The input (and only memory) is a natural number $m\geq 1$. 

This is interpreted as follows: In each step, we search the list of fractions in the program $P$ from left to right until we find the first fraction $\frac{n_i}{d_i}$ such that the product $m\frac{n_i}{d_i}$ is a natural number. If no such fraction can be found, $P$ terminates. Otherwise, we update $m$ to $m\frac{n_i}{d_i}$ and proceed to the next step.

By reducing the fractions, we can ensure that all $n_i$ and $d_i$ are co-prime. Furthermore, as we are interested in termination, we can exclude cases where $d_i = 1$. 
The halting problem for  \textsc{Fractran} (deciding whether a program $P$ terminates on an input number $n\geq 1$) is undecidable (see Kurtz and Simon~\cite{kur:undec}).

Given $P$ and $m$, our goal is to construct a sentence $\phi\in\FC^3$ that is satisfiable if and only if $P$ terminates on input $m$. Assume that $\Sigma\supseteq\{0,1\}$.
The construction shall ensure that $w\models \phi$  if and only if $w\in 0(1^+0)^+$ is an encoding of an accepting run of $P$ on~$m$. More formally, we will have 
$$w= 0\conc 1^{c_1}\conc 0\conc 1^{c_2}\conc 0\conc 1^{c_3}\conc 0\conc \cdots \conc0\conc 1^{c_{t-1}}\conc 0\conc 1^{c_t}\conc0,$$ 
with $c_j\geq 1$ for all $j$, 
where $c_1=m$, each $c_{i+1}$ is the number that succeeds $c_i$ after applying one step of $P$, and $c_t$ is a number on which $P$ terminates (\ie, $c_t$ is divided by no $d_i$). We first define $\phi_{cod}$ to be the following  $\FC^3$-sentence:
\begin{align}
\phi_{cod}\df      & \ex{x}{\sv\weqeq 0 1^m 0\conc x} \label{eq:f1}\\
\land\: & \ex{x}{\sv\weqeq x\conc 0} \label{eq:f2}\\
\land\: & \lnot\ex{x,y}{\sv\weqeq x\conc 00\conc y} \label{eq:f3}\\
\land\: & \biglor_{a\in \Sigma\setdiff\{0,1\}}\lnot\ex{x,y}{\sv\weqeq x\conc a\conc y} \label{eq:f4}
\end{align}
The parts of the conjunction have the following roles:
\eqref{eq:f1} expresses that $w$ starts with $01^m0$, 
\eqref{eq:f2} states that it ends on $0$, 
\eqref{eq:f3} requires that it $w$ does not contain $00$, and \eqref{eq:f4} forbids all letters other than $0$ and $1$. Hence, these four parts together ensure that $w\in 01^m(01^+0)^*$ holds. Hence, if $w\in\Lang(\phi_{cod})$, we know that $w$ encodes a sequence $c_1,\ldots,c_t\geq 1$ for some $t\geq 1$ with $c_1=m$. 
The next step is defining the following sentence:
\begin{equation*}
\phi_{term}\df \fa{x,y}{\Big(
	\big(\sv\weqeq x\conc 0\conc y \conc 0 \land \lnot\ex{x,z}{y\weqeq x\conc 0\conc  z}\big)
	\rightarrow \lnot \biglor_{i=1}^k \ex{x}{y\weqeq x^{d_i}}
	\Big)}
\end{equation*}
The left side of the implication states that $y$ contains the last block of $1$s in $w$, the right side that the length of $y$ is not divided by any $d_i$. In other words, $\phi_{term}$ expresses that $P$ terminates on $c_t$.
All that remains is defining a formula that expresses that $c_{i+1}$ is the successor of $c_i$ when one step of $P$ is applied. This is the job of the following sentence:
\begin{multline*}
\phi_{step}\df \fa{x,y,z}{
\Big(
\big(x\weqeq 0\conc y\conc 0\conc z \conc 0 \land \lnot\ex{x,z}{y\weqeq x\conc 0 \conc z} \land \lnot\ex{x,y}{z\weqeq x\conc 0 \conc y}\big)\\
\rightarrow
\biglor_{j=1}^k 
\Big(
\ex{x}{(y\weqeq x^{d_j} \land z=x^{n_j})}
\land 
\bigland_{l<j}\lnot\ex{x}{y\weqeq x^{d_l}}
\Big)
\Big)
}
\end{multline*}
This formula expresses that, if $w$ contains $01^{c_i}01^{c_{i+1}}0$, then $c_{i+1}=\frac{n_j}{d_j}c_i$ holds for some $1\leq j\leq k$, and $c_i$ is not divided by any $d_l$ with $l<j$.

We now put the parts together and define $\phi\df\phi_{cod}\land\phi_{step}\land\phi_{term}$. Then $\Lang(\phi)\neq\emptyset$ if and only if $w$ encodes a terminating run of the  \textsc{Fractran}-program $P$ on the input $m$. In other words, $P$ terminates on $m$ if and only if the constructed $\phi\in\FC^3$ is satisfiable. 
As the halting problem for $\textsc{Fractran}$ is undecidable, we conclude that satisfiability for $\FC^3$ is undecidable.
\end{proof}
\subsection{Proof of Theorem~\ref{thm:undecShort}}\label{app:thm:undecShort}
\restateThmUndecShort*
We actually prove the more general following result.
\begin{theorem}\label{thm:undec}
	  For $\phi,\psi \in \fragExPos{\FC^{4}}$, we can decide neither $\Lang(\phi)\subseteq \Lang(\psi)$ (containment), nor $\Lang(\phi)\subseteq \Lang(\psi)$ (equivalence), nor whether $\Lang(\phi)$ is $\Sigma^*$, regular, a pattern language, or expressible in $\FC^0$.
	Furthermore, given $\phi\in\fragExPos{\FC^{4}}$, we cannot compute an equivalent $\psi$ such that $|\psi|$ is minimal.
\end{theorem}
The key idea of the proof (to be presented in \cref{app:thm:undec:proof}) is to adapt the  proof of Theorem~14 of~\cite{fre:ext}. That paper examines \emph{extended regular expressions} with one variable, also called \emph{xregex} with one variable in~\cite{fag:spa} and (due to one anonymous reviewer's strong encouragement) in~\cite{fre:doc}.

These xregex extend classical regular expressions with a variable binding operator $(\alpha)\%x$ and a variable recall operator $x$ for a single variable $x$. 
For example, the xregex $((\ta\ror\tb)^*)\%x\conc x$ creates the language of all $ww$ with $w\in\{\ta,\tb\}^*$, and $(\ta^*)\%x\conc\tb\conc x\conc\tb\conc x$ the language of all $\ta^n \tb \ta^n \tb \ta^n$ with $n\geq 0$. This is as much understanding of syntax and semantics as we need for purpose of this paper\footnote{The interested reader can find much more on xregex can be found in~\cite{fre:ext} and, more recently, \cite{fre:det}. In particular, the latter uses a much nicer form of semantics that is due to~\cite{sch:cha} which was also used in~\cite{fre:splog} to simplify the semantics of regex formulas.}. 

The main effort of the proof of \cref{thm:undec} is translating the xregex into $\fragExPos{\FC^{4}}$ formulas. 
To simplify this translation, we shall introduce  \emph{regex patterns}, a combination of regular expressions and patterns, as notational shorthand for $\FC$ (\cref{sec:regexPatterns}). 

To simplify the presentation, we briefly use regular constraints from Section~\ref{sec:constraints} -- that is, we first define formulas from the more powerful fragment $\FCreg$, and then convert these to $\fragExPos{\FC^{4}}$-formulas. Readers who come directly from the main part of the paper to this part of the appendix should have a quick look at \cref{sec:constraints} before they proceed here (but rest assured that this forward reference does not lead to any cyclical reasoning). 

In fact, to further simplify the actual proof , 
But before that, we observe the following result on the expressive power of $\FC^{0}$:
\begin{lemma}\label{lem:FC10}
A language $L\subseteq\Sigma^*$ is definable in $\FC^{0}$ if and only if $L$ is finite or co-finite.
\end{lemma}
\begin{proof}
The \emph{if}-direction is straightforward. If $L$ is finite, we have $L=\{w_1,\ldots,w_n\}$ for some $n\geq 0$ and can simply define $\phi\df \biglor_{i\in[n]} (\sv\weqeq w_i)$. Likewise, we can define every co-finite language using negation.

For the \emph{only-if}-direction, the proof of \cref{lem:FCtoFO} (see \cref{app:lem:FCtoFO}, ``special cases'') allows us to exclude all cases where the universe variable $\sv$ appears on the right side of a word equation. Hence, we can assume that $\phi\in\FC^{0}$ is defined using only equations of the form $\sv\weqeq w$ with $w\in\Sigma^*$, conjunctions, disjunctions, and negations (without other variables, quantifiers play no role). Hence, $\Lang(\phi)$ is obtained by combining singleton languages $\{w\}$ with intersection, union, and complement. Singleton languages are finite (of course), negation turns a finite into a co-finite language (and vice versa), and unions and intersections preserve the property ``finite or co-finite''. Thus, $\Lang(\phi)$ is finite or co-finite.
\end{proof}
\subsubsection{Regex patterns and regex equations}\label{sec:regexPatterns}
A \emph{regex pattern} is a tuple $\alpha=(\alpha_1,\ldots,\alpha_n)$ with $n\geq 0$, where each $\alpha_i$ is either a variable $x\in\Xi$, or a regular expression. If all regular expressions in $\alpha$ are simple (see \cref{lem:simple}), we say that $\alpha$ is a \emph{simple regex pattern}.

We use regex patterns instead of patterns to extend word equations to \emph{regex equations}: every $x\in \Xi$ and every regex pattern $\alpha$ can be combined into a regex equation $x\weqeq \alpha$, which is simple if $\alpha$ is simple.

To define the semantics of $x\weqeq \alpha$ with $\alpha=(\alpha_1,\ldots,\alpha_n)$ and $n\geq 0$, let $R\subseteq [n]$ be the set of all $i$ such that $\alpha_i$ is regular expression, and let $V\df [n]\setdiff R$ be the set of all $i$ such that $\alpha_i$ is a variable.
We now define the $\fragExPos{\FCcon{\REG}}$-formula (see Section~\ref{sec:constraints})
\[\phi^{x\weqeq\alpha} \df \ex{y_1,\ldots,y_n}{\bigl( x\weqeq y_1 \cdots y_n \land \bigland_{i\in R} \constr{y_i}{\alpha_i} \land\bigland_{i\in V} y_i\weqeq\alpha_i \bigr)},\]
where $\fvar(\phi^{x\weqeq\alpha})=(\{x\}\cup\var(\alpha))\setdiff\{\sv\}$. 
We write regex patterns like patterns that contain regular expressions; see the following example.
\begin{example}
Let $\phi{}\df \ex{x,y}{(\sv\weqeq x\conc \ta\tb^*\ta\conc y)}$. Then $\Lang(\phi)$ is the set of all $w\in\Sigma^*$ that contain a factor $\ta\tb^n\ta$ with $n\geq 0$. Note that the word equation in $\phi$ is simple, as $\ta\tb^*\ta$ -- the only regular expression in $\alpha$ -- is simple.
\end{example}
In fact, every regex formula can be interpreted as a formula of bounded width:
\begin{lemma}\label{lem:regexequation}
For every regex equation $(x\weqeq \alpha)$, we can construct  $\psi\in\fragExPos{\FCcon{\REG}}$ with $\psi\equiv\phi^{x\weqeq\alpha}$ and $\formulaWidth(\psi)=|\fvar(\phi^{x\weqeq\alpha})|+3$. 
If $\alpha$ is simple, then we can have $\psi\in\fragExPos{\FC}$.
\end{lemma}
\begin{proof}
The first part can be achieved by reordering the quantifiers of $\phi^{x\weqeq\alpha}$ in the same way as  in \cref{example:patCompress}. For example, if all $\alpha_i$ are regular expressions and $n$ is even, we can define
\begin{gather*}
\psi\df \ex{y,z_1}{\bigl( x\weqeq yz_1 \land \constr{y}{\alpha_1} \land \hphantom{x}} \\
\hphantom{\ex{z,y_1}{}}	\ex{y,z_2}{\bigl(z_1\weqeq yz_2 \land \constr{y}{\alpha_2} \land\hphantom{x} }\\
\hphantom{\ex{z_1,y,z_2}{}}\ex{y,z_1}{\bigl(z_2\weqeq yz_1 \land \constr{y}{\alpha_3} \land\hphantom{x}}\\
\hphantom{\ex{z_1,y,z_2,y}{}}			\vdots\\			
\hphantom{\ex{z_1,y,z_2,y}{}}		\ex{y,z_2}{\bigl( z_1\weqeq y z_2 \land \constr{y}{\alpha_n}  \bigr)}				
			\cdots \bigr)
		\bigr)         
	\bigr).
\end{gather*}
If $\alpha_i=x_i$ for some variable $x_i\in \Xi$, we can avoid using $y$ in this case and write $z_1 \weqeq  x_i z_2$  instead of $(z_1 \weqeq x_i z_2)\land(y\weqeq\alpha_i)$ if $i$ is even, or the respective other case if $i$ is odd.
In addition to the free variables from $\fvar(\phi_{x,\alpha})$, the width is only increased by the three additional variables $y,z_1,$ and $z_2$.

If the regex pattern $\alpha$ is simple, then all its regular expressions are simple. As shown in the proof of \cref{lem:simple} (see~\cref{app:lem:simple}), we can then replace every constraint $\constr{y}{\alpha_i}$ with an equivalent $\fragExPos{\FC}$-formula $\psi_{\alpha_i}(y)$. Moreover, we can see in that proof that $\formulaWidth(\psi_{\alpha_i})=3$; and as we can reuse the variables $z_1$ and $z_2$ in $\psi_{\alpha_i}$, replacing the constraints in $\psi$ does not increase its width. 
\end{proof}
\subsubsection{Main part of the proof of Theorem~\ref{thm:undec}}\label{app:thm:undec:proof}
\begin{proof}
As stated above, we shall translate  the proof of Theorem~14 in~\cite{fre:ext} to $\fragExPos{\FC^4}$.
The central construction of that proof is as follows: Given a so-called  \emph{extended Turing machine}\footnote{Note that the details of these extended Turing machines do not matter to our proof, as our translations function on a purely syntactical level.} $M$,  define a language $\VALC(M)\subseteq\{0,\#\}^*$ that contains exactly one word for every \emph{valid} computation of $M$ (\ie, an accepting computation on some input). In other words, there is a one-to-one correspondence between $\VALC(M)$ and each word that is accepted by $M$. 
Then construct an xregex $\alpha$ with $\Lang(\alpha)=\INVALC(M)\df \{0,\#\}^*\setdiff\VALC(M)$. 

Our main goal is now to show that this xregex $\alpha$ can be converted into a formula $\phi\in\fragExPos{\FC^{4}}$. That is, given $M$, we construct $\phi$ with $\Lang(\phi)=\Lang(\alpha)=\INVALC(M)$. We first assume that $\Sigma=\{0,\#\}$ and discuss larger alphabets later (as we can only use simple regular expressions, this step is not completely trivial).
 After that, we discuss the undecidability results that  follow from this construction.

\proofsubparagraph{Creating the formula (binary alphabet)}
As shown in~\cite{fre:ext}, given $M$, one can construct a one-variable xregex $\alpha$ with $\Lang(\alpha)=\INVALC(M)$. The construction is rather lengthy; but it is described in a way that allows us to only consider the necessary modifications.

As one might expect, $\alpha$ is obtained by enumerating all possible types of errors that can cause a word to be an element of $\INVALC(M)$. The proof in~\cite{fre:ext} distinguishes two different types of error: \emph{structural errors}, where a word cannot be interpreted as the result of encoding a sequences of configurations of $M$,  the first configuration is not initial, or the last configuration is not accepting; and \emph{behavioral errors}, where we assume that it is an encoding of a sequence of configuration, but at least one configuration in the sequence does not have the right successor.

While structural errors can be handled with a classical regular expression, behavioral errors require the use of variables to handle the tape contents correctly. This makes makes expressing the structural errors straightforward for xregex, but requires considerable effort for~$\FC$. We first deal with structural errors.

\proofsubparagraph{Structural errors} In the encoding that is defined in~\cite{fre:ext}, every configuration of $M$ is encoded as a word from the language
\[L_C\df \{ 00^{t_1}\#00^{t_2}\#00^{a}\#0^q \mid t_1,t_2\geq 0, a\in\{0,1\}, q\in[n] \}\]
where $n$ is the number of states of $M$ (hence, $q$ encodes the current state). Here, $0^{t_1}$ and $0^{t_2}$ are unary encodings of the tape contents to the left and right of the head, and $a$ is the head symbol under the head.  The sequence of configurations of $M$ is then encoded as a word from the language 
\[L_{seq}\df \{ \#\# c_1 \#\# c_2 \#\# \cdots \#\# c_n \#\# \mid n\geq 1, c_i\in L_c \text{ for all }i\in[n]\}.\]
Now define $L_S$ as the subset of $L_{seq}$ where $c_1$ has state $q=1$, $t_1=0$, and $t_2>0$ (meaning initial state and head starting on the left of a  non-empty input), and $c_n$ has symbol $a$ under the head and is in a state $q$ such that $M$ halts. We now say that $w\in\Sigma^*$ has a \emph{structural error} if $w\notin L_V$. As $\VALC(M)\subseteq L_S$ must hold, having $w\notin L_S$ is sufficient for $w\in\INVALC(M)$.

We first define a formula $\phi_{seq}$ for the complement of $L_{seq}$. We define $\phi_{seq}$ using we use simple regex equations.  As these have no free variables, we can use \cref{lem:regexequation} to  interpret $\phi_S$ as formula from $\fragExPos{\FC^{3}}$. We begin with
\[\phi_{seq,1} \df (\sv\weqeq\emptyword) \lor   (\sv\weqeq 0\Sigma^*) \lor (\sv\weqeq \Sigma^*0) \lor (\sv\weqeq \# 0\Sigma^*) \lor (\sv\weqeq \Sigma^* 0\#) \lor (\sv\weqeq \#) \lor (\sv\weqeq \#\#).\]
Then we have $w\notin \Lang(\phi_{seq,1})$ if and only if $w$ is of the form $\#\# \Sigma^* \#\#$. Building on this, let
\begin{multline*}
\phi_{seq,2} \df \phi_{S,1} \lor (\sv \weqeq \Sigma^* \#\#\# \Sigma^*) \lor (\sv\weqeq \Sigma^* \#\# 0^+ \#\# \Sigma^*) \lor (\sv\weqeq \Sigma^* \#\# 0^+ \# 0^+ \#\# \Sigma^*) \\
\lor (\sv\weqeq \Sigma^* \#\# 0^+ \# 0^+ \# 0^+ \#\# \Sigma^*)
\lor (\sv\weqeq \Sigma^* 0 \# 0^+ \# 0^+ \# 0^+ \# 0\Sigma^*).
\end{multline*}
Observe that $L_{seq}$ uses double hashes $\#\#$ to separate encodings of configurations, and single hashes $\#$ to separate the components within an encoded configuration. Now we have $w\notin\Lang(\phi_{seq,2})$ if and only if  $w$ is of the form
$ \#\# (0^+ \# 0^+ \# 0^+ \# 0^+ \#\#)^+.$ Next, let
\[\phi_{seq}\df \phi_{seq,2} \lor (\sv\weqeq \Sigma^* 00^n\#\#\Sigma^*)\lor (\sv\weqeq \Sigma^* 00\#0^+\#\#\Sigma^*). \]
In the encoding, each block of $0$s to the left of a double hash encodes a state. Hence, the first part of $\phi_{seq}$ (after $\phi_{seq,2}$) expresses that there is an encoding of a state $q$ that is not in the state set $[n]$ of $M$. Likewise, the second part expresses that there is a tape symbol $a$ that is not $0$ or $1$. Consequently, we have $w\notin\Lang(\phi_{seq})$ if and only if $w\in L_{seq}$.
In other words, $\phi_S$ defines the complement of $L_{seq}$. To extend this into a $\phi_V$ for the complement of $L_V$, we need to define two types of errors; namely, that the first encoded configuration is not initial, and that the last configuration is not halting.
The first is handled by
\[\phi_{S,1} \df \phi_{seq} \lor (\sv\weqeq \#\#00\Sigma^*) \lor (\sv\weqeq \#\#0^+\#0\#\Sigma^*) \lor (\sv\weqeq \#\#0^+\#0^+\#0^+\#00\Sigma^*)\]
which has cases where the first configuration has $t_1\neq 0$, $t_2=0$, or a $q\neq 1$ (in this order). Finally, let $\overline{H}\subseteq \{0,1\}\times[n]$ be the set of all $(a,q)$ such that $M$ does not halt when reading symbol $a$ in state $q$, and define
\[\phi_{S} \df \phi_{S,1} \lor \biglor_{(a,q)\in \overline{H}} (\sv\weqeq \Sigma^* \#00^{a}\#0^q\#\#),\]
which expresses that $M$ would not halt on the last configuration in the sequence. Now we have $w\in\Lang(\phi_S)$ if and only if $w\notin L_S$; which means that $\phi_S$ describes exactly the words that have a structural error. Recall that we can interpret $\phi_S$ as a formula from $\fragExPos{\FC^{3}}$.
\proofsubparagraph{Behavioral errors and combining the parts}
For these behavioral errors, first note that Section~3.3 of~\cite{fre:doc} explains that the xregex for $\INVALC(M)$ from the proof in~\cite{fre:ext} have no stars over the variable operators. Moreover, they  can be rewritten into a union of xregexes that have no disjunctions over the variable operators (these are called \emph{regex paths} in~\cite{fre:doc}), simply by factoring out the disjunctions. But in our terminology, these regex paths can be viewed as sentences of the form 
\[\psi=\ex{x}{\bigl(\sv\weqeq \alpha \land \constr{x}{\beta}   \bigr)},\]
where $\alpha$ is a regex pattern that has $x$ as only variable (all other positions are regular expressions) and $\beta$ is a regular expression. Moreover, one can verify by going through all the cases in the definitions of the behavioral errors in~\cite{fre:ext} that in every case, both the regex pattern $\alpha$  and the regular expression $\beta$ are simple. Hence, we can apply \cref{lem:regexequation} and interpret each $\psi$ as a formula from  $\fragExPos{\FC^{4}}$. Then we define the sentence $\phi_B$ as the disjunction of all these $\psi$, thus describing all behavioral errors.
We then define $\phi\df \phi_S\lor\phi_B$ and have $\Lang(\phi)=\INVALC(M)$ with $\phi\in\fragExPos{\FC^{4}}$.

\proofsubparagraph*{Larger alphabets} For larger alphabets, we need to address the problem that simple regular expressions can only express $\Sigma^*$, but not $A^*$ for $A\subset \Sigma$ with $|A|\geq 2$ (this is not expressible in $\fragExPos{\FC}$, as it is not even expressible in $\fragExPos{\logC}$, see Example~23 in~\cite{kar:exp} (together with our  Lemma~\ref{lem:FCtoC}). 
Hence, while $0^*$ is not problematic, $\{0,\#\}^*$ is not expressible. Luckily, any word that contains some letter from $\Sigma\setdiff\{0,\#\}$ is invalid anyway. The errors that were described by formulas with regex patterns that contain $\Sigma^*$ still describe the errors they described before; and they also describe new ones. We extend $\phi_{S}$ with an additional disjunction $\biglor_{a\in\Sigma\setdiff\{0,\#\}} (\sv\weqeq\Sigma^* a\Sigma^*)$ to catch all words that consist only of the new letters. But no other changes are required.

\proofsubparagraph{Undecidable problems} As shown in~Lemma~10 of~\cite{fre:ext}, the pecularities of extended Turing machines that are used in the construction do not affect the ``usual'' undecidability properties that one expects from Turing machines. In particular, we have that, given an extended Turing machine $M$, the question whether
\begin{enumerate}
	\item\label{undec1} $M$ accepts \emph{at least one} input is semi-decidable but not co-semi-decidable, and
	\item\label{undec2}  $M$ accepts \emph{finitely many} inputs is neither semi-decidable, nor co-semi-decidable.
\end{enumerate}
Given $M$, we can construct $\phi\in \fragExPos{\FC^{4}}$ with $\Lang(\phi)=\INVALC(M)$. Hence, the following questions are undecidable:
\begin{itemize}
	\item $\Lang(\phi)\stackrel{?}{=}\Sigma^*$ is not semi-decidable, as we have $\INVALC(M)=\Sigma^*$ if and only if $\VALC(M)=\emptyset$.
	This also gives us undecidability of containment and equivalence.
	\item ``Is $\Lang(\phi)$ regular?'' is neither semi-decidable, nor co-semidecidable. 
	As shown in Lemma~13 of~\cite{fre:ext}, we have that $\INVALC(M)$ is regular if and only if it is co-finite, which holds if and only if $\VALC(M)$ is finite.
	\item ``Is there a pattern $\alpha$ with $\Lang(\alpha)=\Lang(\phi)$?'' is not semi-decidable. We shall prove this by showing that such an $\alpha$ exists if and only if $\Lang(\phi)=\Sigma^*$. Assume there is an $\alpha$ with $\Lang(\alpha)=\INVALC(M)$. As $\INVALC(M)$ contains the words $0$ and $\#$, we know that $\alpha$ cannot contain any terminals (as these would occur in all words in the pattern language). This means that there must be a variable $x$ that occurs exactly once in $\alpha$ (otherwise, we could  generate neither $0$ nor $\#$). Hence, $\Lang(\alpha)=\Sigma^*$, as we can generate every $w\in\Sigma^*$ by defining $\sigma(x)\df w$ and $\sigma(y)\df\emptyword$ for all other variables. 
	\item ``Is $\Lang(\phi)$ expressible in $\FC^{0}$?'' is neither semi-decidable, nor co-semi-decidable. By \cref{lem:FC10}, the languages that are expressible in $\FC^{0}$ are finite or co-finite. $\INVALC(M)$ cannot be finite, and it is co-finite if and only if $\VALC(M)$ is finite. 
\end{itemize}
The non-existence of a computable minimization function also follows from the undecidability of the question whether $\Lang(\phi)=\Sigma^*$, using the same argument as for Theorem~4.9 in~\cite{fre:doc}:
Every reasonable definition of the length of the formula will ensure that there are only finitely many $\phi$ such that $|\phi|$ is minimal and $\Lang(\phi)=\Sigma^*$. Thus, the set of these minimal representations is finite and thereby decidable. We could then decide $\Lang(\phi)\stackrel{?}{=}\Sigma^*$ by applying the minimization algorithm to $\phi$ and checking whether the result is in the finite set.
\end{proof}
\subsection{Proof of Theorem~\ref{thm:nonrec}}\label{app:thm:nonrec}
\restateThmNonrec*

\begin{proof}	
Most of our reasoning relies on the undecidabilities that we established in \cref{thm:undec}. Like~\cite{fre:doc}, we use a meta-theorem by Hartmanis~\cite{har:god} 	that basically states that for two systems of representations $A$ and $B$ such that given a representation $r\in B$, it is not co-semi-decidable whether $r$ has an equivalent representation in $A$, there is a  non-recursive tradeoff from $B$ to~$A$. See Kutrib~\cite{kut:phe} for details and background, and the proof of Theorem~4.10 for a detailed execution of the reasoning behind that meta-theorem.

Hence, \cref{thm:undec} gives us non-recursive tradeoffs from $\fragExPos{\FC^{4}}$ to $\FC^{0}$ and all representations of regular languages (regular expressions, DFAs, NFAs, etc). Note that the lower bound for the trade-off to patterns remains open, as we have only established that the corresponding problem is not semi-decidable.

Regarding the tradeoffs from $\FC^{4}$, we first observe that the non-recursive tradeoff to $\fragExPos{\FC}$ follows directly is analogous to the proof of Theorem~4.11 in~\cite{fre:doc}, which demonstrates a non-recursive tradeoff from $\RGXcored$ to $\RGXcore$. That proof relies on the same construction for $\INVALC(M)$ as Theorem~\ref{thm:undec}; and we have established that regular constraint are not required for that.

For the remaining tradeoffs, we make use of the fact that we can now use negations. This allows us to adapt the proof of \cref{thm:undec} to obtain more undecidability results.
Given~$M$, the proof of \ref{thm:undec} allows us to construct $\phi\in\fragExPos{\FC^{4}}$ with $\Lang(\phi)=\INVALC(M)$. Hence, we have $\neg\phi\in\FC^{4}$ and $\Lang(\neg\phi)=\VALC(M)$.  

Next, observe that although is not directly shown in~\cite{fre:ext}, it follows directly by using the same methods that given $M$, it is  neither semi-decidable, nor co-semi-decidable whether  $M$ accepts \emph{exactly one} input. 

Hence, given $\psi\in\FC^{4}$, the question whether there is a word $w\in\Sigma^*$ with $\Lang(\psi)=\{w\}$ is neither semi-decidable, nor co-semi-decidable. By invoking Hartmanis' meta-theorem, we obtain the non-recursive tradeoffs from $\FC^{4}$ to pattern languages.

Furthermore, observe that every pattern language $\Lang(\alpha)$ is either an infinite language (if~$\alpha$ contains at least one variable) or a singleton language $\{w\}$ (if~$\alpha$ contains no variables; \ie, $\alpha=w$ for some $w\in\Sigma^*$). Hence, this gives us non-recursive tradeoffs to pattern languages as well. 

This raises the question whether the non-recursive tradeoff also exists if we only consider pattern languages with variables (after all, focusing on the special case of singleton languages might be considered a form of cheating).

Although we leave the case for $\FC^{4}$ open, we can show non-recursive tradeoff from $\FC^{5}$ to patterns with variables. Given $M$, we can define $\psi\in\FC^{5}$ with 
\[
\Lang(\psi)\df \VALC(M)\conc 0\#^3 0 \conc \Sigma^*,
\]
by defining 
\[
\psi\df \ex{x,y}{\bigl( (\sv\weqeq x\conc 0\#^30\conc y ) \land\neg\hat{\phi}(x) \bigr)},
\]
where $\hat{\phi}$ is obtained from the $\phi$ that is constructed from $M$ as in the proof of \cref{thm:undec} by replacing all occurrences of $\sv$ with a new variable $x$.

Now we claim that a pattern $\alpha$ with $\Lang(\alpha)=\Lang(\phi)$ exists if and only if $\VALC(M)$ contains exactly one element. The \emph{if}-direction is clear. Hence, assume such an $\alpha$ exists. 
As pattern languages are always either singleton languages or infinite, we know that $\VALC(M)\neq \emptyset$. By definition of $\Lang(\psi)$, this means that $\Lang(\alpha)$ is infinite, which means that $\alpha$ contains at least one variable.
 
Moreover, as no word in $\VALC(M)$ contains $\#^3$ as a factor, we know that every $w\in\Lang(\alpha)$ has a unique factorization $w = u\cdot 0\#^30 v$ with $u\in\VALC(M)$ and $v\in\Sigma^*$.
We now consider the uniquely defined factorization
\[
\alpha = u_0 x_1 u_1 \cdots x_n u_n
\]
for some $n\geq 1$, with $u_0,\ldots,u_n\in \Sigma^*$ and $x_1,\ldots,x_n\in \Xi$. Now assume that $u_0$ does not have a prefix from the language $\VALC(M)\cdot 0\#^30$, and define a pattern substitution $\sigma$ with $\sigma(x_1)\df \#^4$. Then $\sigma(\alpha)$ has a prefix of the form $\sigma(u_0\cdot x_1)=u_0\cdot \#^4$. 
But as $\#^4$ is not factor of any word in $\VALC(M)$, this means that $\sigma(\alpha)$ does not have a factorization $w = u\cdot 0\#^30 v$ with $u\in\VALC(M)$ and $v\in\Sigma^*$, as the $\#^4$ would need to occur in the $v$, which would lead us to the conclusion that $u_0$ has a prefix from $\VALC(M)\cdot 0\#^30$ and contradict our assumption that this is not the case.

Hence, we now consider the case that $u_0$ has a prefix from the language $\VALC(M)\cdot 0\#^30$. As $M$ cannot continue its computation after stopping, we have $|\VALC(M)|=1$.

Hence, $\Lang(\psi)$ can be expressed with a pattern with variables if and only if $M$ accepts exactly one input. This means that this expressibility is neither semi-decidable nor co-semi-decidable; the latter allows us to use Hartmanis' meta-theorem to conclude non-recursive tradeoffs from $\FC^{5}$ to patterns with at least one variable.
\end{proof}
\subsection{Definitions and results for Section~\ref{sec:iterationRecursion}}\label{app:def:tcfp}
In this section, we also consider the \emph{data complexity} of model checking and evaluation problems.
In contrast to the \emph{combined complexity}, where the formula and the pattern substitution are both part of the input, the data complexity fixes the formula and considers only the substitution as part of the input.
We are going to rely on the following result and its proof:
\begin{proposition}\label{prop:dataComp}
The data complexity of the evaluation problem for $\FC$ is in $\compclass{L}$.
\end{proposition}
\begin{proof}
Fix a formula $\phi\in\FC$. Given a pattern substituation $\sigma$ for $\phi$ with $w\df\sigma(\sv)$, we can decide $(w,\sigma)\models\phi$ in logarithmic space by constructing a deterministic two way automaton $M_{\phi}$ that has a finite number of read-only input heads that do not move outside the input $w$ (see \eg 
Kozen~\cite{koz:the}, Lecture~5). These heads can act as pointers to positions in $w$.

We construct $M_{\phi}$ recursively along the definition of $\phi$. In every step, each $x\in\fvar(\phi)$ is represented by two pointers, the start and end of an occurrence of $\sigma(x)$. 

If $\phi$ is a word equation $x\weqeq \alpha$, we need to check whether $\sigma(x)=\sigma(\alpha)$. This can be done inside the part of $w$ that represents $\sigma(x)$, by processing all positions $\alpha_i\in(\Xi\cup\Sigma)$ of the pattern $\alpha$. 
This needs a few other pointers -- \eg, for the location in $\sigma(x)$, the position $i$ in $\alpha$, the location inside of $\sigma(\alpha_i)$, but it is finite.

If $\phi$ is a conjunction, disjunction, or negation, we just use the machines of the corresponding formulas as subroutines, processing their result(s) accordingly.

If $\phi= \ex{x}{\psi}$ or $\phi= \fa{x}{\psi}$, we can enumerate all possible choices for $\sigma(x)$ successively by moving the two pointers for $x$ around (\eg, both pointers start at the very left, then the end pointer moves to the right stepwise; when it reaches the end, it the start pointer moves one step rightward and the end pointer returns to it).
For each choice, $M_{\psi}$ is called as a subroutine. If a factor occurs multiple times in $w$, this assignment will be chosen multiple times, but this is not an issue for logspace complexity.
\end{proof}
We are also going to rely on the following definition:
\begin{definition}
	For $w\in\Sigma^*$, let  $\Sub(w)\df\{u\subword w\}$.
\end{definition}
We also adopt the convention that we can treat tuples as sets, in particular by writing $x\in \vec{y}$ or $\vec{x}\cup\vec{y}$.

\cref{app:lfppfp} considers least and partial fixed points, \cref{app:tcdtc} considers transitive closure operators.
\subsubsection{Fixed points}\label{app:lfppfp}
Our first step towards defining $\FC$ with fixed points is interpreting $\FC$-formulas as functions that map relations on words to relations on words. 
To this end, we extend $\FC$ with a  relation symbol $\relsym$ that represents the input relation. 
In contrast to the  constraints that we define in \cref{sec:constraints}, the relation $R$ for $\relsym$ is not assumed to be fixed. Instead, we define the notion of a \emph{generalized pattern substitution} $\sigma$, that also maps $\relsym$ to a relation $\sigma(\relsym)\subseteq (\Sigma^*)^{\ar(\relsym)}$.  
For an $\ar(\relsym)$-tuple of variables $\vec{x}$, we then have $(w,\sigma)\models\relsym(\vec{x})$ if  $\sigma(\relsym) \subseteq \Sub(w)^{\ar(\relsym)}$ and $\sigma(\vec{x})\in\sigma(\relsym)$. We call the formulas that are extended in this way $\FCcon{\relsym}$-formulas.
\begin{definition}\label{def:fpfunction}
Let $\phi$ be an $\FCcon{\relsym}$-formula and $k\df \ar(\relsym)$.
For every $w\in\Sigma^*$ and every $k$-tuple $\vec{x}$ over $\fvar(\phi)$, we define the function from $k$-ary relations over $\Sub(w)$ to  $k$-ary relations over $\Sub(w)$ by
\[F^{\phi}_{\vec{x},w}(R)\df \{\sigma(\vec{x})\mid (w,\sigma)\models \phi, \sigma(\relsym)=R\}\]
for every $R\subseteq \Sub(\vec{w})^k$. 
We use this to define a sequence of relations by $R_0\df\emptyset$ and $R_{i+1}\df F_{\vec{x},\vec{w}}^{\phi}(R^i)$ for all $i\geq 0$.
\end{definition}
\begin{example}\label{ex:lfpEqualLength}
	Let $\ar(\relsym)=2$, and define the $\FCcon{\relsym}$-formula
	\begin{equation*}
	\phi\avs{x,y}\df (x\weqeq\emptyword \land y\weqeq\emptyword) \lor
	\ex{\hat{x},\hat{y}}{\left(\biglor_{a\in \Sigma}\bigvee_{b\in \Sigma} \bigl(x\weqeq a\cdot\hat{x} \land y\weqeq b\cdot\hat{y}\land \relsym(\hat{x},\hat{y})    \bigr) \right)}. 
	\end{equation*}
	Using a straightforward induction, one can prove that for every $w\in\Sigma^*$, we have that $F^{\phi}_{(x,y),w}$ defines a sequence of relations, where each $R_i$ contains the pairs $(u,v)$ where $u,v\in\Sub(w)$ and $|u|=|v|< i$. In other words, for $i> |w|$, we have that $R_i$ expresses the equal length relation on $\Sub(w)$.
\end{example}

For every set $A$ and every function $f\colon\pow(A)\to\pow(A)$, we say that $S\subseteq A$ is a \emph{fixed point of $f$} if $f(S)=S$. A fixed point $S$  of $f$ is the \emph{least fixed point} if $S\subseteq T$ holds for every fixed point $T$ of $f$. We denote the least fixed point of $f$ by $\lfp(f)$. Using basic fixed point theory, see \eg Ebbinghaus and Flum~\cite{ebb:fin}, we can prove the $\FC$-version of a basic result for $\FO$:
\begin{lemma}\label{lem:lfp}
	Let  $\phi\in \fragExPos{\FCcon{\relsym}}$, let $w\in \Sigma^*$, and let $\vec{x}$ be a $k$-tuple over $\fvar(\phi)$. 		
	Then there exists  $c\leq |w|^{2k}$ such that $R_c= \lfp(F_{\vec{x},w}^{\phi})$.
\end{lemma}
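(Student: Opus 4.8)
The plan is to reduce the statement to the standard monotone fixed-point theory for first-order logic, exploiting the positivity of $\phi$ and the finiteness of the universe $\Sub(\vec{w})$. First I would observe that since $\phi\in\fragExPos{\FCcon{\relsym}}$, the symbol $\relsym$ occurs only positively in $\phi$; hence the operator $F_{\vec{x},\vec{w}}^{\phi}\colon\pow(\Sub(\vec{w})^k)\to\pow(\Sub(\vec{w})^k)$ is monotone, i.e.\ $R\subseteq R'$ implies $F_{\vec{x},\vec{w}}^{\phi}(R)\subseteq F_{\vec{x},\vec{w}}^{\phi}(R')$. This is proved by a routine induction along the structure of $\phi$: for word equations, constraints, and positive occurrences of $\relsym$ the claim is immediate; conjunction, disjunction, and existential quantification all preserve monotonicity; and there are no negations or universal quantifiers to worry about because the fragment is existential-positive. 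Monotonicity of $F_{\vec{x},\vec{w}}^{\phi}$ together with the Knaster--Tarski theorem (see \eg~\cite{ebb:fin}) guarantees that $\lfp(F_{\vec{x},\vec{w}}^{\phi})$ exists and equals the union $\bigcup_{i\geq 0} R_i$ of the ascending chain $R_0=\emptyset\subseteq R_1\subseteq R_2\subseteq\cdots$ defined in Definition~\ref{def:fpfunction}.

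Next I would bound when this chain stabilizes. Monotonicity gives $R_0\subseteq R_1$ (trivially, since $R_0=\emptyset$), and then inductively $R_i\subseteq R_{i+1}$ for all $i$; so the chain is ascending. All the $R_i$ live inside $\Sub(\vec{w})^k$. The key counting observation is that $|\Sub(\vec{w})|\leq ||\vec{w}||^2$: every subword of some $w\in\vec{w}$ is determined by a choice of $w$ together with a start and end position within $w$, so the number of subwords of a single word $w$ is at most $\binom{|w|+1}{2}+1\leq (|w|+1)^2$, and summing (or rather, bounding crudely) over the tuple gives at most $||\vec{w}||^2$ distinct subwords overall. Hence $|\Sub(\vec{w})^k|\leq ||\vec{w}||^{2k}$. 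Along a strictly ascending chain in a set of this size, the chain can strictly increase at most $||\vec{w}||^{2k}$ times before it must reach a fixed point; therefore there is some $c\leq ||\vec{w}||^{2k}$ with $R_c=R_{c+1}=F_{\vec{x},\vec{w}}^{\phi}(R_c)$, so $R_c$ is a fixed point, and since $R_c=\bigcup_{i\leq c}R_i=\bigcup_{i\geq 0}R_i=\lfp(F_{\vec{x},\vec{w}}^{\phi})$, this $c$ witnesses the claim.

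The one point that requires a small amount of care — and which I expect to be the main (mild) obstacle — is verifying that $F_{\vec{x},\vec{w}}^{\phi}$ is genuinely well-defined as a map $\pow(\Sub(\vec{w})^k)\to\pow(\Sub(\vec{w})^k)$, i.e.\ that for $R\subseteq\Sub(\vec{w})^k$ the output $F_{\vec{x},\vec{w}}^{\phi}(R)$ again lands in $\Sub(\vec{w})^k$ rather than escaping the universe. This follows from the $\FC$-semantics: in $\sigma\models_{\vec{\sv}}\phi$ every variable, including each component of $\vec{x}$, must be mapped to a subword of some $\sigma(\sv)$, hence to an element of $\Sub(\vec{w})$; and the side condition $\sigma(R)\subseteq(\Sub(\sigma(\vec{\sv})))^{\ar(\relsym)}$ built into the semantics of $\relsym$ ensures consistency between the input relation and the universe. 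With well-definedness and monotonicity in hand, the fixed-point-theory argument and the cardinality bound are entirely standard, exactly mirroring the classical $\FO$ result for least fixed points of positive formulas.
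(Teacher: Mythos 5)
Your proposal is correct and follows essentially the same route as the paper's proof: monotonicity of $F_{\vec{x},\vec{w}}^{\phi}$ from the existential-positive restriction, then the standard ascending-chain argument over the finite universe $A=(\Sub(\vec{w}))^k$ with $|A|\leq ||\vec{w}||^{2k}$ to conclude that $R_c$ is the least fixed point for some $c\leq |A|$ (the paper simply cites the corresponding lemmas from Ebbinghaus--Flum where you spell out the chain argument). The additional well-definedness check you flag is a reasonable piece of care but not a deviation in approach.
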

\begin{proof}
	First, observe that $F_{\vec{x},w}^{\phi}$  is a function $F_{\vec{x},w}^{\phi}\colon \pow(S)\to\pow(S)$ for $S\df \Sub(w)^k$. 
	Furthermore, note that $S$ is a finite set with $|S|\leq |w|^{2k}$.
	
	To prove the claim, we use two further notions from fixed point theory: $F_{\vec{x},w}^{\phi}$ is called \emph{monotone} if $A\subseteq B$ implies $F_{\vec{x},w}^{\phi}(A)\subseteq F_{\vec{x},w}^{\phi}(B)$ for all $A,B\subseteq S$. It is \emph{inductive} if $R_i\subseteq R_{i+1}$ for all $i\geq 0$. 
	
	As we are dealing with the existential-positive fragment of $\FCcon{\relsym}$, the function $F_{\vec{x},w}^{\phi}$ is monotone (this can be proven with a straightforward induction). But every monotone function from $\pow(S)$ to $\pow(S)$ is also inductive (see \eg Lemma~8.1.2 in~\cite{ebb:fin}). Hence,
	for $c\df |S|$, the relation $R_c$ is the least fixed point of $F_{\vec{x},w}^{\phi}$ (this holds for every inductive function $\pow(S)\to\pow(S)$, see \eg Lemma~8.1.1 in~\cite{ebb:fin}). Hence, $c$ is polynomial in $|w|$.
\end{proof}
In other words, least fixed points for sequences of relations that are defined by $\FC$-formulas behave in the same way as for $\FO$-formulas. Accordingly, we can extend $\FC$ with least fixed points in the same way that $\FO$ can be extended with least fixed points:
\begin{definition}\label{def:lfp}
	Let $\relsym$ be a relation symbol, $k\df\ar(\relsym)$, and $\phi\in\fragExPos{\FCcon{\relsym}}$. 
	For all $k$-tuples $\vec{x}$ and $\vec{y}$ over $\Xi\setdiff\{\sv\}$,
we define 
$\logIter{\lfp}{\vec{x}}{\relsym}{\phi}{\vec{y}}$
as an \emph{LFP-formula} that has   free variables $(\fvar(\phi)\setdiff \vec{x})\cup\vec{y}$.
	
	For every pattern substitution $\sigma$, we define $\sigma\models \logIter{\lfp}{\vec{x}}{\relsym}{\phi}{\vec{y}}$ if there exists an extended pattern substitution $\tau$ with 
	\begin{enumerate}
		\item $\tau\models \phi$,
		\item $\tau(\sv)=\sigma(\sv)$,
		\item $\tau(\vec{x})=\sigma(\vec{y})$,
		\item $\tau(z)=\sigma(z)$ for all $z\in(\fvar(\phi)\setdiff\vec{x})$, and
		\item $\tau(\relsym)=\lfp\bigl( F_{\vec{x},\sigma(\sv)}^{\phi}\bigr)$.
	\end{enumerate}
	We generalize this multiple relation symbols and to nested fixed point operators, and we use $\LFP{\FC}$ to denote the logic that is obtained by adding these LFP-formulas as base cases to the definition of $\FC$. We extend this to  $\LFPcon{\FC}{\REG}$-formula by also allowing regular constraints.
\end{definition}
\begin{example}
Recall the formula $\phi\avs{x,y}$ from Example~\ref{ex:lfpEqualLength} such that $\lfp(F^{\phi}_{(x,y),w})$ is the equal length relation on every $w\in\Sigma^*$. We use this to define the $\LFPcon{\FC}{\REG}$-sentence
\[\psi\df \ex{x,y}{\bigl( \sv\weqeq xy \land \constr{x}{\ta^*} \land \constr{y}{\tb^*} \land \logIter{\lfp}{(x,y)}{\relsym}{\phi}{(x,y)}\bigr)},\]
which defines the language of all words $\ta^n\tb^n$ with $n\geq 0$.
\end{example}
\begin{lemma}\label{lem:recognizeLFP}
The data complexity of the evaluation problem for $\LFP{\FC}$ is in $\compclass{P}$. 
\end{lemma}
\begin{proof}
Let $\phi\in\LFP{\FC}$. We want to show that for every pattern substitution $\sigma$, we can decide in polynomial time whether $\sigma\models\phi$. 
To do that, we extend the  the proof of Theorem~\ref{thm:recog} (see~\cref{app:thm:recog}) to include LFP-formulas.

To check whether 
$(w,\sigma)\models\logIter{\lfp}{\vec{x}}{\relsym}{\phi}{\vec{y}}$
for some $\phi(\vec{x})$, we  compute $\lfp\left( F_{\vec{x},w}^{\phi}\right)$. As shown in Lemma~\ref{lem:lfp}, this is equivalent to computing $R_{|S|}$, for $S\df \Sub(w)^k$, where $k\df |\vec{x}|$.

This can be done inductively by computing each $R_{i+1}$ from $R_{i}$ with $R_0=\emptyset$. 
In each of these induction steps, we determine $R_{i+1}$ by enumerating all extended substitutions $\tau$ that have $\tau(\relsym)=R_i$ and satisfy the following conditions:  
\begin{itemize}
	\item  $\tau(\sv)=\sigma(\sv)$,
	\item $\tau(\vec{x})=\sigma(\vec{y})$, and
	\item $\tau(z)=\sigma(z)$ for all $z\in(\fvar(\phi)\setdiff\vec{x})$.
\end{itemize}
For each such $\tau$, we check whether $(w,\tau)\models\phi$. 

This check can be done in polynomial time, according to our induction assumption (relation predicates can be evaluated with a lookup if the relation has been computed, and constraints are assumed to be decidable in polynomial time). As $|\vec{x}|=|\fvar(\phi)|$, and as there are at most $|w|^{2}$ different choices for  $\tau(x)$, there are at most $|w|^{2|\fvar(\phi)|}$ different $\tau$. Hence, each level $R_{i+1}$ can be computed using polynomially many checks that each take polynomial time. 

We only need to compute polynomially many levels until reaching the least fixed point $R_{|S|}$. Hence, $\lfp\left( F_{\vec{x},w}^{\phi}\right)$ can be computed in time that is polynomial in $|w|$; and by the induction assumption, $\sigma\models\logIter{\lfp}{\vec{x}}{\relsym}{\phi}{\vec{y}}$ can then be decided in polynomial time.

Apart from that, the proof proceeds as in the proof of \cref{prop:dataComp}; substituting $\compclass{P}$ for~$\compclass{L}$.
\end{proof}
The function $F^{\phi}_{\vec{x},w}$ from Definition~\ref{def:fpfunction} can also be used to define partial fixed points. We define the \emph{partial fixed point} $\pfp(F^{\phi}_{\vec{x},w})$ by $\pfp(F^{\phi}_{\vec{x},w})\df R_i$ if $R_i=R_{i+1}$ holds for some $i\geq 0$, and $\pfp(F^{\phi}_{\vec{x},w})\df \emptyset$ if $R_i\neq R_{i+1}$ holds for all $i\geq 0$. 

We then define PFP-formulas $\logIter{\pfp}{\vec{x}}{\relsym}{\phi}{\vec{y}}$ analogously to LFP-formulas, the only difference being that $\phi$ can be any $\FC$-formula and is not restricted to the existential-positive fragment:
\begin{definition}\label{def:pfp}
	Let $\relsym$ be a relation symbol, $k\df\ar(\relsym)$, and $\phi\in\FCcon{\relsym}$. 
	For all $k$-tuples $\vec{x}$ and $\vec{y}$ over $\Xi\setdiff\{\sv\}$,
	we define 
	$\logIter{\pfp}{\vec{x}}{\relsym}{\phi}{\vec{y}}$
	as a \emph{PFP-formula} with free variables $(\fvar(\phi)\setdiff \vec{x})\cup\vec{y}$.
	
	For every pattern substitution $\sigma$, we define $\sigma\models\logIter{\pfp}{\vec{x}}{\relsym}{\phi}{\vec{y}}$ if there exists an extended pattern substitution $\tau$ with 
		\begin{enumerate}
		\item $\tau\models \phi$,
		\item $\tau(\sv)=\sigma(\sv)$,
		\item $\tau(\vec{x})=\sigma(\vec{y})$,
		\item $\tau(z)=\sigma(z)$ for all $z\in(\fvar(\phi)\setdiff\vec{x})$, and
		\item $\tau(\relsym)=\pfp\bigl( F_{\vec{x},\sigma(\sv)}^{\phi}\bigr)$.
	\end{enumerate}
	We generalize this multiple relation symbols and to nested fixed point operators, and we define  $\PFP{\FC}$ analogously to Definition~\ref{def:lfp}.
\end{definition}
\begin{lemma}\label{lem:recognizePFP}
	The data complexity of the evaluation problem for $\PFP{\FC}$ is in $\compclass{PSPACE}$. 
\end{lemma}
\begin{proof}
This proof proceeds similar to the one of Lemma~\ref{lem:recognizeLFP}, the only difference is the bound on the number of $R_i$ that need to be checked.
To test if $\sigma\models\logIter{\pfp}{\vec{x}}{\relsym}{\phi}{\vec{y}}$, we  need to compute $\pfp\left( F_{\vec{x},\sigma(\sv)}^{\phi}\right)$.
As the underlying universe $\Sub(\sigma(\sv))$ is finite, we only need to enumerate up to $2^{|\sigma(\sv)|^{2k}}$ different $R_i$, where $k\df \ar(\relsym)$. 
Moreover, we only need to keep each $R_i$ in memory until $R_{i+1}$ has been constructed; after that, $R_i$ can be overwritten with $R_{i+2}$.
Each current $R_i$ can be represented as $\tau(\vec{x}) \in \Sub(\sigma(\sv))^k$, which means that the whole procedure can run in $\compclass{PSPACE}$.
Apart from this, the proof proceeds as in Lemma~\ref{lem:recognizeLFP} (and then as in \cref{prop:dataComp}), but using $\compclass{PSPACE}$ instead of $\compclass{P}$ (or $\compclass{L}$).
\end{proof}
We revisit $\LFP{\FC}$ and $\PFP{\FC}$ in \cref{app:thm:capture} for the proof of Theorem~\ref{thm:capture}.
\subsubsection{Transitive closures}\label{app:tcdtc}
For every relation  $R\subseteq (\Sigma^*)^k \times(\Sigma^*)^k$ with $k\geq 1$, we define its \emph{transitive closure}~$\tc(R)$ as the set of all $(r,\hat{r})\subseteq (\Sigma^*)^k \times(\Sigma^*)^k$ for which there exists a sequence $r_1,\ldots,r_n\in  (\Sigma^*)^k$ with $n\geq 1$, $r_1=r$, $r_n=\hat{r}$, and $(r_i,r_{i+1})\in R$ for $1\leq i < n$.

The \emph{deterministic transitive closure} of $R$, written $\dtc(R)$, is defined by adding the additional restriction that for every $1\leq i < n$, there is no $(r_i,s)\in R$ with $s\neq r_{i+1}$.
\begin{definition}\label{def:splDTC}
	Let $\phi\in\FC$ and for $k\geq 1$, choose two $k$-tuples $\vec{x}$ and $\vec{y}$  over $\fvar(\phi)$, and two $k$-tuples $\vec{s}$ and $\vec{t}$ over $\Xi\setdiff\{\sv\}$. 
	Then 
	$\logIter{\tc}{\vec{x}}{\vec{y}}{\phi}{\vec{s},\vec{t}}$
	 is a TC-formula and 
	$\logIter{\dtc}{\vec{x}}{\vec{y}}{\phi}{\vec{s},\vec{t}}$
	 is a DTC-formula. 
	 Both have $\bigl(\fvar(\phi)\setdiff (\vec{x}\cup\vec{y})\bigr)\cup(\vec{s}\cup\vec{t})$ as set of free variables.
	
	For every pattern substitution $\sigma$, we define
	$\sigma\models\logIter{\tc}{\vec{x}}{\vec{y}}{\phi}{\vec{s},\vec{t}}$
	if $(\sigma(\vec{s}),\sigma(\vec{t}))\in\tc(R_{\sigma})$, where $R_{\sigma}$ is the set of all $(\tau(\vec{x}),\tau(\vec{y}))$ such that 
	\begin{enumerate}
		\item $\tau\models\phi$, and
		\item $\tau(z)=\sigma(z)$ for all $z\in\fvar(\phi)\setdiff (\vec{x}\cup\vec{y})$.
	\end{enumerate}
	The analogous definition applies to DTC-formulas, substituting $\dtc(R_\sigma)$ for $\tc(R)_{\sigma}$.
	
	We generalize this to  multiple and nested applications of the closure operators; and we use $\TC{\FC}$ or $\DTC{\FC}$ to denote the logics that are obtained by adding these TC- or DTC-formulas as base cases to the definition of $\FC$. 
\end{definition}
As we do not require that $\vec{x}$ and $\vec{s}$ (or $\vec{y}$ and $\vec{t}$) are distinct, we use $\tcOp{\vec{x}}{\vec{y}}{\phi}$ as shorthand for  $\logIter{\tc}{\vec{x}}{\vec{y}}{\phi}{\vec{x},\vec{y}}$.
We now consider some examples.
\begin{example}
	We define the $\fragExPos{\DTC{\FC}}$-formula 
	\begin{align*}
	\phi&\df \ex{x,y}{\Bigl( (\sv\weqeq y) \land \dtcOp{x}{y}{\psi} \land \bigl(x\weqeq \emptyword \lor \biglor_{a\in\Sigma} y\weqeq a\bigr)   \Bigr)},\\
	\psi\avs{x,y}&\df \biglor_{a\in\Sigma} (x\weqeq  a\cdot y\cdot a).
	\end{align*}
	Then $w\models \phi$ if and only if $w$ is a palindrome over $\Sigma$. The formula $\phi$ expresses that $x$ can be obtained from $y$ by concatenating  one occurrence of some letter $a$ to the left and one to the right of $y$. By applying the transitive closure, we obtain the relation of all $(x,y)$ such that $x = u\cdot y \cdot u^{\mathsf{R}}$, where $u\in\Sigma^*$ and $u^{\mathsf{R}}$ is the reversal of $u$. 
	
	Note that $\psi$ expresses the relation of all $(x,y)$ with $x = aya$ for some $a\in \Sigma$. Hence, each word has exactly one successor in this relation, which means that we can indeed use $\dtcOp{x}{y}{\psi}$. But if we wrote $\dtcOp{y}{x}{\psi}$ instead, then  there could be multiple successors for some $x$ (depending on the content of $\sv$), which means that $\dtc$ would fail.
\end{example}
\begin{example}
	Consider a directed graph $G=(V,E)$ with $V=\{v_1,\ldots,v_n\}\subseteq\{0,1\}^+$ and $n\geq 1$. Define $\mathsf{enc}(E)$ as an encoding of $E$ over $\{0,1,\#,\$\}$ such that $\mathsf{enc}(E)$ contains the factor $\$ v_i \# v_j\$$ if and only if $(v_i,v_j)\in E$. We define the $\fragExPos{\TC{\FC}}$-formula
\[	\phi\avs{x,y}\df 
\tcOp{x}{y}{\ex{z}{\bigl( z \weqeq \$  x \# y  \$ \land \constr{x}{\{0,1\}^+}  \land \constr{y}{\{0,1\}^+}  \bigr)}}.\]
	Then $(\mathsf{enc}(E),\sigma)\models{\phi}$ if and only if $\sigma(x)=v_i$, $\sigma(y)=v_j$, and  $v_j$ can be reached from $v_i$ in one or more steps.	
\end{example}
Next, we examine the data complexity of model checking $\TC{\FC}$ and $\DTC{\FC}$.
\begin{lemma}\label{lem:recognizeTC}
	The data complexity of the evaluation problem  is in $\compclass{NL}$ for $\TC{\FC}$ and in $\compclass{L}$ for $\DTC{\FC}$.
\end{lemma}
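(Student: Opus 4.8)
The plan is to extend the structural induction from the proof of Theorem~\ref{thm:recog} — the same skeleton already used for Lemmas~\ref{lem:recognizeLFP} and \ref{lem:recognizePFP} — by two new base cases, one for TC-formulas and one for DTC-formulas, and then to check that the bottom-up recursive evaluation stays in $\compclass{NL}$ (for $\TC{\FC}$) and in $\compclass{L}$ (for $\DTC{\FC}$). As there, the formula $\phi$ is fixed and only the pattern substitution $\sigma$ is input; I would set $n\df||\sigma(\svar(\phi))||$ and begin with the routine bookkeeping observation that every subword of some $\sigma(\sv)$ — and hence every tuple of such subwords, i.e., every vertex of the reachability graphs below — admits an $O(\log n)$ representation as a constant number of pointers into $\sigma$, and that from such a vertex the polynomially many candidate successor tuples can be enumerated in logarithmic space.

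For the TC base case $\logIter{\tc}{\vec{x}}{\vec{y}}{\psi}{\vec{s},\vec{t}}$ with $\psi\mvs{\vec{\sv}}$, the task is to decide whether $(\sigma(\vec{s}),\sigma(\vec{t}))$ lies in the transitive closure of the relation $R_\sigma$ on tuples over $\Sub(\sigma(\vec{\sv}))$, where $(\vec{u},\vec{v})\in R_\sigma$ iff the substitution that coincides with $\sigma$ except for sending $\vec{x}\mapsto\vec{u}$ and $\vec{y}\mapsto\vec{v}$ satisfies $\psi$. By the induction hypothesis this edge test is in $\compclass{NL}$. I would then observe that reachability in a graph with polynomially many vertices whose edge relation is itself $\compclass{NL}$-decidable is again in $\compclass{NL}$: guess the intermediate vertices one at a time, each stored in $O(\log n)$ space, run the $\compclass{NL}$ edge test between successive guesses, and use a step counter to bound the path length — all of this fits into a single nondeterministic logspace computation since $\compclass{NL}$ computations with an $O(\log n)$ interface compose sequentially. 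The DTC base case is the deterministic analogue: starting from $\sigma(\vec{s})$, repeatedly compute, using the $\compclass{L}$ edge test supplied by the induction hypothesis, the unique $R_\sigma$-successor of the current vertex (rejecting if it is not unique), stopping after polynomially many steps or upon reaching $\sigma(\vec{t})$; since only the current vertex and a counter need to be kept in memory, this runs in $\compclass{L}$.

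The remaining connectives, quantifiers, word equations, and constraints are handled exactly as in the proof of Theorem~\ref{thm:recog}, substituting $\compclass{NL}$ for (resp.\ keeping) $\compclass{L}$ where appropriate. The one point that requires genuine care is that $\TCcon{\FC}{\constrSet}$ adds TC-formulas as base cases to all of $\FCcon{\constrSet}$, which includes negation, so the recursion may ask for the \emph{negation} of a TC-subformula; to keep the evaluation of such a subformula in $\compclass{NL}$ I would invoke the Immerman--Szelepcs\'enyi theorem, which supplies an $\compclass{NL}$ algorithm for non-reachability. For DTC the corresponding issue is trivial, since $\compclass{L}$ is closed under complement. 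I expect the main obstacle to be precisely this composition/closure bookkeeping — making sure that arbitrarily deep nesting of (D)TC operators, interleaved with the ordinary Boolean and quantifier cases and with $\compclass{L}$-evaluable constraints, never escapes the target class — rather than any new combinatorial content; all of the combinatorics-on-words specific to $\FC$ is absorbed into the $O(\log n)$ subword-pointer representation.
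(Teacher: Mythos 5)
Your proposal is correct and follows essentially the same route as the paper's proof: both extend the recognition algorithm of Theorem~\ref{thm:recog} with (D)TC base cases, represent tuple-vertices by logarithmically many pointers, decide reachability by guessing successive vertices with a polynomial step counter (resp.\ following unique successors deterministically), and dispose of negated closure subformulas via closure of $\compclass{NL}$ and $\compclass{L}$ under complement (Immerman--Szelepcs\'enyi). No substantive differences.
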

\begin{proof}
Again, we extend the proof of \cref{prop:dataComp} (see~\cref{app:thm:recog}) by describing how we evaluate DTC- and TC-formulas. Although some modifications are required in our setting, the basic idea is the same as for $\FO$-formulas with $\dtc$- or $\tc$-operators (see, \eg, Theorem~7.4.1. in~\cite{ebb:fin}). 
Given $\sigma$ and a TC-formula $\logIter{\tc}{\vec{x}}{\vec{y}}{\phi}{\vec{s},\vec{t}}$ (or a DTC-formula like this),  first note that on any given structure $w$, the underlying universe can have up to $n^2$ elements for $n\df |w|$. This means that the closures can create paths up to length $n^{2k}$.

We then construct as logspace-Turing machine $M_0$ for $\phi$ that will be used as a sub-routine. Using one counter per main variable, we can implement a counter from $1$ to $n$. Combining $2k$ of these, we can create a counter that counts up to $n^{2k}$. 
Now we can progress as in the relational case: we invoke $M$ as a subroutine at most $n^{2k}$ times to checking whether there is a path from $\sigma(\vec{x})$ to $\sigma(\vec{y})$. For $\tc$, this involves guessing the next step; for $\dtc$, it involves checking that the successor is unique. Hence, all this can be done in $\compclass{NL}$ and $\compclass{L}$, respectively.

We then  integrate this into larger formulas via \cref{prop:dataComp}; using the fact that $\compclass{L}$ and $\compclass{NL}$ are both closed under complement. Of course, we could reprove this for $\compclass{NL}$ by imitating the proof of the Immerman-Szelepcs\'{e}nyi theorem by means of $\DTC{\FC}$; but this would not provide us with any new insights.
\end{proof}

\subsection{Proof of Theorem~\ref{thm:capture}}\label{app:thm:capture}
$\LFP{\FC}$ and
$\PFP{\FC}$ are defined in \cref{app:lfppfp}. 
$\DTC{\FC}$ and 
$\TC{\FC}$ are defined in \cref{app:tcdtc}.
Both build on \cref{app:def:tcfp}.
\restateThmCapture*
\begin{proof}
We want to show that a language is definable in a logic  $\mathcal{F}$ if and only if it belongs to the complexity class $\cc$, where $\mathcal{F}$ ranges over 
$\DTC{\FC}$, 
$\TC{\FC}$, 
$\LFP{\FC}$, or
$\PFP{\FC}$, 
and $\cc$ over 
 $\compclass{L}$,
 $\compclass{NL}$,
 $\compclass{P}$, and
$\compclass{PSPACE}$, respectively.

We have already established the direction from the logics to the complexity classes, namely in Lemma~\ref{lem:recognizeTC} for $\DTC{\FC}$ and
$\TC{\FC}$, in Lemma~\ref{lem:recognizeLFP} for $\LFP{\FC}$, and in Lemma~\ref{lem:recognizePFP} for $\PFP{\FC}$. These results rely on \cref{prop:dataComp} and, thus, on Lemma~\ref{lem:FCtoFO} (which allows us to convert $\FC$-formulas into $\FO$-formulas).

For the other direction, one might ask whether it is possible to use Lemma~\ref{lem:FOtoFC} (the reverse direction of Lemma~\ref{lem:FCtoFO}). In particular, we have that for each of the extensions of $\FC$, the correspondingly extended version of $\FOord$ captures the complexity class.

But we have the additional goal of showing that an $\fragExPos{\FC}$-formula is enough; and just applying Lemma~\ref{lem:FOtoFC} to the proofs that the authors found in literature would have required considerable hand-waving.

\proofsubparagraph{Capturing \textsf{L} and \textsf{NL} with \textsf{dtc} and \textsf{tc}} As explained by \eg Kozen~\cite{koz:the} (Lecture~5), a language $L$ is in $\compclass{L}$ (or in $\compclass{NL}$) if and only if  there is some $k\geq1$ such that $L$ is accepted by a deterministic (or on-deterministic) finite automaton $A$ that has $k$-many two-way input heads that are read-only and cannot move beyond the input. We assume without loss of generality that $A$ does not read the left end-marker (this can be realized in the finite control).

Let $n$ denote the number of states of $A$. We assume that the state set is $[n]$, that the starting state is 1 and that the accepting state is $n$.
Given such an automaton $A$, our goal is to construct a sentence $\phi$ such that $\Lang(\phi)=\Lang(A)\cap \Sigma^{\geq n}$. The finitely many missing words can then be added with a straightforward disjunction.

In the construction, the universe variable $\sv$ represents the input $w$ of $A$. The head with number $i\in[k]$ is modeled by a variable $x_i$, where its current position $j\in[|w|]$ is represented as $w_{\spn{1,j}}$ (that is, the prefix of $w$ that has length $j-1$). Likewise, the current state $q\in [n]$ is represented by $w_{\spn{1,q}}$. 

Our goal is to define a formula $\psi$ that encodes the successor relation $R$ for $A$. Using $\dtc$ or $\tc$, we can then  simulate the behavior of $A$ on $w$. 
To this end, we define two types of helper formulas. Firstly, for $q\in [n]$, we define a formula $\psi^Q_q\avs{x}$ that expresses ``$x$~represents state $q$'' by having $\sigma\models \psi^Q_q(x)$ if and only if $\sigma(x)=\sigma(\sv)_{\spn{1,q}}$. Let
$\psi^Q_1(x)\df (x\weqeq \emptyword)$ and 
\[\psi^Q_{q+1}(x)\df \ex{\hat{x},z}{\biglor_{a\in \Sigma}\bigl(x\weqeq \hat{x}a \land \sv\weqeq xz\land \psi^Q_{q}(z)\bigr)}\] 
for all $1\leq q < n$.
Next, for each $a\in\Sigma$, we define 
\[\psi^{read}_{a}\avs{x}\df \ex{z}{\sv\weqeq x a z},\] 
which expresses that  ``the letter after the prefix $x$ is $a$''. We also define 
\[\psi^{read}_{\dashv}\avs{x}\df \ex{z}{\sv\weqeq x}\]
 We shall use  these two types of formulas to check the content of  the input heads $i$ (namely, whether head $i$ reads $a\in\Sigma$ or the right end marker $\dashv$). Finally, we define
\[\psi_{succ}\avs{x,y}\df\biglor_{a\in\Sigma} y\weqeq xa\] 
to express that ``$y$ is one letter longer than $x$'', which we shall use for the head movements.
Now we are ready to put the pieces together. For $\vec{a}=(a_1,\ldots,a_k)\in(\Sigma\cup\{\dashv\})^k$ and $q\in [n]$, define 
\[\psi^{\vec{a}}_q \df\psi^Q_q(x_0)\land \bigland_{i=1}^k  \psi^{read}_{a_i}(x_i)\land \psi^{mov}_{q,\vec{a},i}(x_i,y_i),\]
where the head movements are simulated by
\[\psi^{mov}_{q,\vec{a},i}(x_i,y_i) \df\begin{cases}
	\psi_{succ}(x_i,y_i) & \text{ if $A$, when reading $\vec{a}$ in state $q$, moves head $i$ to the right,}\\
	\psi_{succ}(y_i,x_i) & \text{ if $A$, when reading $\vec{a}$ in state $q$, moves head $i$ to the left,}\\
	x_i\weqeq y_i & \text{ if $A$, when reading $\vec{a}$ in state $q$, does not move head $i$.}
\end{cases}\]
This gives us the immediate successor for each combination $\vec{a}$ of input letters (including the end-marker) and each state $q$. To get all possible successors, we combine these into
\[\psi\avs{\vec{x},\vec{y}}\df \biglor_{q\in [n]} \biglor_{\vec{a}\in(\Sigma\cup\{\dashv\})^k} \psi^{\vec{a}}_q\avs{\vec{x},\vec{y}},\]
where $\vec{x}=(x_0,\ldots,x_k)$ and $\vec{y}=(y_0,\ldots,y_k)$. We now define
\begin{align*}
\phi &\df \ex{\vec{x},\vec{y}}{\bigl( \bigland_{i=0}^{k} (x_i \weqeq \emptyword) \land \psi^Q_n(y_0) \land \dtcOp{\vec{x}}{\vec{y}}{\psi} \bigr)} &\text{if $A$ is deterministic, and}\\
\phi &\df \ex{\vec{x},\vec{y}}{\bigl( \bigland_{i=0}^{k} (x_i \weqeq \emptyword) \land \psi^Q_n(y_0) \land \tcOp{\vec{x}}{\vec{y}}{\psi} \bigr)} &\text{if $A$ is non-deterministic.}
\end{align*}
Outside the closure operators, the formula expresses that all $\vec{x}$ encodes the initial position ($x_0$ encodes the starting state 1 and all tapes are at the very left), and that $\vec{y}$ encodes a halting position. If $A$ is non-deterministic, then we can use the $\tc$-operator to obtain the transitive closure of the successor relation on the configurations of $A$ on $w$. 
If $A$ is deterministic, than every configuration has at most one successor, meaning that the $\dtc$-operator also compute the transitive closure, having the same intended effect.

Thus, for all $w\in\Sigma^*$, we have $w\models\phi$ if and only if $w\in\Lang(A)$ and $|w|\geq n$. As mentioned above, the ``missing words'' from the set $W\df \Lang(A)\setdiff \Lang(\phi)$ can be added now by defining a formula $\phi\lor\biglor_{w\in W} \sv\weqeq w$.

We conclude that $\TC{\FC}$ captures $\compclass{NL}$ and that $\DTC{\FC}$ captures $\compclass{L}$. Moreover, note that we used only a single closure operator, and that the formulas are existential-positive (inside and outside of the closure operator).

\proofsubparagraph{Capturing \textsf{P} with \textsf{lfp}}
Hence, we give an outline of the full proof, which takes key-ideas from the proof of Theorem~7.3.4 in~\cite{ebb:fin}. Again, the main challenge is ensuring that the formula is existential-positive.

For every language $L\in\compclass{P}$, there is a Turing machine $M$ that decides $L$ in polynomial time.
We assume that $M$ has one read-only input tape over $\Sigma$ and a read-write  work tape that extends to the right and has a tape alphabet $\Gamma=\{0,\ldots,m\}$ for some $m\geq 1$. For the state set $Q$, we assume $Q=\{0,\ldots,n\}$ for some $n\geq 1$, where $0$ is the initial and $n$ the single accepting state. 
When starting, each head is on the left of its tape (position~0), the machine is in state 1, and each cell of the work tape contains $0$.

As $M$ decides $L$ in polynomial time, there is a natural number $d$ such that on each input $w\in\Sigma$, we have that $M$ terminates after at most $|w|^d$ steps. During this run, $M$ will not visit more than $|w|^d$ tape positions. 

For each $i$  with $0\leq i\leq |w|$, let $w_i$  be the prefix of $w$ that has length $i$. 
For $k\geq 1$, we identify each $k$-tuple $\vec{v}=(v_1,\ldots,v_k)$ with the number $N(\vec{v})\df \sum_{i=1}^k (|v_i||w|^{i-1})$.  
Hence, we can use two $d$-tuples of variables, a tuple $\vec{t}=(t_1,\ldots,t_d)$ that  to encode time stamps and a tuple $\vec{p}=(p_1,\ldots,p_d)$ that encodes positions on work tape (where $0$ is the leftmost position).
The construction will ensure that both tuples will only take on prefixes of $w$ as values.

Our simulation of $M$ will be able to run for $(|w|+1)^{d}-1$ steps; but this does not affect the outcome.

Our goal is to define a relation $R$ that starts with the initial configuration of $M$ on $w$, and then uses the $\lfp$-operator to compute each successor configuration. As the time (and, hence, the space) of $M$ are bounded, this is enough. 

The relation $R$ with arity $2d+2$ shall represent the configuration of $M$ on $w$ in step $N(\vec{t})$ as follows:
\begin{itemize}
\item $(\vec{t}, w_0, w_{q}, \vec{\emptyword})$ to denote that $M$ is in state $q\in Q$,
\item $(\vec{t}, w_1, w_{i-1}, \vec{\emptyword})$ to denote that the input head is at position $i$ with $0\leq i <|w|$,
\item $(\vec{t}, w_2, \emptyword, \vec{p})$ to denote that the working head is at position $N(\vec{p})$,
\item $(\vec{t}, w_3, w_\gamma,\vec{p})$ to denote that the working tape contains $\gamma\in\Gamma$  at position $i$,
\end{itemize}
where $\vec{\emptyword}$ is shorthand for the $d$-tuple that has $\emptyword$ on all positions.

Like in the case for transitive closures, the constructed formula will only be correct for $w$ of sufficient length; but again, the finitely many exceptions can be added later. In particular, we assume that $|w|\geq c$ for $c \df \max\{3,|\Gamma|,|Q|\}$. The only $w_i$ that we refer to explicitly through their number $i$ are $w_0$ to $w_3$ for the first component of $R$, $w_q$ with $q\in Q$, and $w_\gamma$ with $\gamma\in\Gamma$. For each one of these, $|w|\geq c$ guarantees that the input $w$ is large enough to encode them. 

As we encode various things in these prefixes $w_i$, it is helpful to define a successor relation 
\[\psi_{succ}\avs{x,y}\df \biglor_{a\in\Sigma} (y\weqeq x a) \land \ex{z}{(\sv\weqeq y\conc z)}\]
which expresses that $x$ and $y$ are prefixes of $\sv$, and $y$ is one letter longer than $x$.
We first use  this  in the shorthand formulas 
$\psi^{pre}_i(x)$ for $0\leq i \leq c$ to express the prefixes $w_i$. Let
$\psi^{pre}_0\avs{x}\df (x\weqeq \emptyword)$ and 
$
\psi^{pre}_{i+1}\avs{x}\df \ex{y}{\psi^{succ}(y,x)}
$
for $0\leq i < c$. 

We are now ready to define the formula $\psi_{init}(x_1,\ldots,x_{2d+2})$, which expresses the initial configuration:
\begin{multline*}
\psi_{init}(x_1,\ldots,x_{2d+2}) \df \\
\bigl(
\bigland_{i=1}^{d} x_i\weqeq \emptyword \land (\psi^{pre}_0(x_{d+1})\lor \psi^{pre}_1(x_{d+1})\lor \psi^{pre}_2(x_{d+1})) \land \bigland_{i=d+2}^{2d+2} x_i\weqeq \emptyword 
\bigr)\\
\lor \bigl(\bigland_{i=1}^{d} x_i\weqeq\emptyword \land \psi^{pre}_3(x_{d+1})\land (x_{d+2}\weqeq \emptyword) \land \bigland_{j=d+3}^{2d+2} \ex{y}{\sv\weqeq x_jy}\bigr).
\end{multline*}
The first part of the formula ensures that the machine starts in state 0, that each head is at position 0. The second part ensures that all tape cells are set to the blank symbol $0$. Note that the tape position $\vec{p}$ is stored in the last $d$ components of the tuple (\ie, from $d+3$ to $2d+2$). To get all possible $\vec{p}$, these variables can take on any prefix $w_i$ of $w$.

To describe the successor of a time stamp or the movement of the working head, we also define a relation $\psi_{d}^{succ}(\vec{x},\vec{y})$ for $d$-tuples $\vec{x}$ and $\vec{y}$ such that $\sigma\models\psi_{d}^{succ}$ if and only if every component of $\sigma(\vec{x})$ and $\sigma(\vec{y})$ is a prefix of $w$, and $N(\sigma(\vec{y})) = N(\sigma(\vec{x}))+1$. The basic idea is as for $\psi_{succ}$, but extending it to multiple digits by taking into account all cases where a carry might happen (this is straightforward, but rather tedious). 
Using this idea and  the proper prefix relation~$\ppref$ from Example~\ref{example:usefulRelations}, we also construct an existential-positive formula  $\psi_{d}^{\neq}(\vec{x},\vec{y})$ that expresses $N(\sigma(\vec{x})) \neq N(\sigma(\vec{y}))$, similar to how we expressed $\neq$ in that example.

We also define formulas $\psi_a(x)\df \ex{y}{\sv\weqeq x\conc a \conc y}$ for every $a\in\Sigma$. If $x$ represents the position of the input head, $\psi_a(x)$ expresses that this head is reading the letter $a$.

This is now all that we need to describe the successor relation $R$ on configurations of $M$. 
We define an LFP-formula 
\[\psi\df \logIter{\lfp}{\vec{x}}{\relsym}{(\psi_{init}\lor\psi_{next})}{\vec{x}},\]
where $\vec{x}\df(x_1,\ldots,x_{2d+2})$ and the  $\fragExPos{\FCcon{\relsym}}$-formula $\psi_{next}$ is constructed as follows:
\begin{itemize}
	\item Using existential quantifiers, we retrieve a time stamp $\vec{t}$ from $R$, and the uniquely defined state $q$, input head position $i$, working head position $\vec{p}$, and working head content $\gamma$ for $\vec{p}$  for this time stamp $\vec{t}$. 
	\item If $\vec{t}=w^d$, nothing needs to be done. Hence, we can assume that this is not the case.
	\item As $M$ is a deterministic Turing machine, the combination of state, current input symbol, and current tape symbol uniquely determine a combination of head movements and working tape action. Which of these applies can be determined by a big disjunction over all combinations of applying $\psi^{pre}$ to the state and the working tape symbol, and $\psi_a$ to the input symbol. For each of these cases, we create a sub-formula that describes head movements and the tape action in the time stamp $\vec{t}'$ with $N(\vec{t}')=N(\vec{t})+1$. We shall store $\vec{t}'$ in the free variables $x_1$ to $x_d$ of $\psi_{next}$.
	\item The sub-formula then has a disjunction over the four possible choices for $x_{d+1}$ (namely, for $\psi^{pre}_1(x_{d+1})$  to $\psi^{pre}_4(x_{d+1})$.
	\item For $\psi^{pre}_1$, the next state, we simply ensure that the correct successor state is stored in $x_{d+2}$, and set all remaining variables to $\emptyword$.
	\item For $\psi^{pre}_2$, the input head position, we use use $\psi_{succ}$ to pick position $i+1$ or $i-1$ if the head moves, or just use the same position.
	\item For $\psi^{pre}_3$, the working head position, we use $\psi^{succ}_d$ analogously.
	\item For $\psi^{pre}_4$, the working tape contents, we distinguish whether the cell is affected by the tape operation or not; that is, whether the cell is at position $\vec{p}$ or not. If it is at that position, we return the new cell content. If not (which can be tested with $\psi^{\neq}_d$), we retrieve the cell content for time stamp $\vec{t}$ from $\relsym$ using existential quantifiers and return it unchanged.
\end{itemize}
Now, $\psi$ computes the relation of all encodings of configurations that $M$ reaches on input $w$. All that remains is checking for the existence of an accepting configuration. We define 
$$
\phi \df \ex{\vec{x}}{\bigl( \psi^{pre}_0(x_{d+1}) \land \psi^{pre}_n(x_{d+2}) \land    \psi(\vec{x})\bigr)}
$$
for $\vec{x}=(x_1,\ldots,x_{2d+2})$. Then we have $w\models\phi$ if and only if $(w,\sigma)\models \psi$ for some $\sigma$ such that $\sigma(\vec{x})$ contains the encoding of a configuration that reaches the accepting state $n$.
Hence, $\Lang(\phi)=L$.

\proofsubparagraph{Capturing \textsf{PSPACE} with \textsf{pfp}} We can show this by modifying the $\lfp$-construction: Instead of using time stamps, each stage of the relation in $\relsym$ only keeps the most recent configuration and uses it to construct the next. As $L$ is decidable in $\compclass{PSPACE}$, this can be done using the tuple $\vec{p}$. 
As we have already established that the $\lfp$-construction is possible with an existential-positive formula, this modification is straightforward.
\end{proof}
\subsection{Proof of Theorem~\ref{thm:datasplog}}\label{app:thm:datasplog}
\restatethmDatasplog*
\begin{proof}
We can directly rewrite every $\DataSplog$-program into an equivalent $\LFP{\fragExPos{\FC}}$-formula. By Theorem~\ref{thm:capture}, these are in $\compclass{P}$. 
	
For the other direction, we know from the proof  of the $\lfp$-case of Theorem~\ref{thm:capture} that every language in $\compclass{P}$ is recognized by a formula from $\LFP{\fragExPos{\FC}}$ that consist of existential quantifiers over  a single $\lfp$-operator. After transforming the underlying formula into a union of conjunctive queries (using the same rules as for relational logic), we immediately obtain an equivalent $\DataSplog$-program.
\end{proof}
\section{Appendix for Section~\ref{sec:logicForSpanners}}
\subsection{Proof of Lemma~\ref{lem:FCtoC}}\label{app:lem:FCtoC}
\restateLemFCtoC*

\begin{proof}
	Given $\phi\in \FC$, we construct $\psi\in\logC$ such that the latter's syntax simulates the $\FC$-semantics. This can be done by adding proper guards for all variables of $\phi$.	
	We show this using an induction along the definition of $\FC$. 

	In each of the steps, it is shall be easy to see that $|\psi|\in O(|\phi|\conc\formulaWidth(\phi))$, and that $\psi$ can be constructed in time that is proportional to its length.
	The construction also introduces neither new  universal quantifiers, nor new negations. Hence, the resulting formula is existential or existential-positive if and only the original formula had this property.
	Another invariant of the induction is that the constructed $\psi$ can be interpreted as a $\logC$- and as an $\FC$-formula without changing its meaning. That is, $\sigma\models\psi$ shall hold under $\logC$-semantics if and only if it holds under $\FC$-semantics.
	
	\proofsubparagraph{Word equations} If $\phi$ is of the form $x\weqeq\eta$, we distinguish two cases. Firstly, consider the case where $x=\sv$. Then $\sigma(\sv)=\sigma(\eta)$ implies $\sigma(y)\subword \sigma(\sv)$ for all $y\in\var(\eta)$, which means we can define $\psi\df\phi$ (only that we now treat $\psi$ as a $\logC$-formula instead of an $\FC$-formula).	
	
	We need a little more effort in the second case, namely if $x\neq \sv$. Here, we define
\[
	\psi\df  
\ex{p,s}{	\bigl(\sv\weqeq p\conc x\conc s \land \sv \weqeq  p\conc \eta \conc s	\bigr)}
.\]
This embeds the equation $x\weqeq \eta$ in $\sv$. Under both semantics, $\sigma\models\psi$ if and only if $\sigma(x)\subword\sigma(\sv)$ and $\sigma(x)=\sigma(\eta)$. This also implies $\sigma(y)\subword\sigma(x)$ for all $y\in\var(\eta)$.

	\proofsubparagraph{Conjunctions} For $\phi = (\phi_1 \land \phi_2)$ with $\phi_1,\phi_2\in\FC$, we first construct  $\psi_1,\psi_2\in\logC$ through recursion and combine these to $\psi\df (\psi_1\land\psi_2)$. 
	
	\proofsubparagraph{Disjunctions} For $\phi = (\phi_1 \lor \phi_2)$ with  $\phi_1,\phi_2\in\FC$, we also first construct  $\psi_1,\psi_2\in\logC$ through recursion.
	But now we cannot just apply a disjunction, as $\fvar(\phi_1)\neq \fvar(\phi_2)$ might hold, which means that a substitution that satisfies one of the subformulas can leave free variables that only occur in the other ``unguarded''. We address this through the following construction:
	\begin{equation*}
		\psi\df 
		\Bigl(\psi_1\land \bigland_{x\in \fvar(\phi_1)\setdiff\fvar(\phi_2)} \ex{p,s}{\sv\weqeq p\conc x\conc s} \Bigr)
		\lor 
		\Bigl(\psi_2\land \bigland_{x\in \fvar(\phi_2)\setdiff\fvar(\phi_1)} \ex{p,s}{\sv\weqeq p\conc x\conc s} \Bigr)
	\end{equation*}
In other words, we guard every variable $x$ that appears only in one of the subformulas by stating that $x\subword \sv$ must hold.
 This is one of the cases where the width affects the length of $\psi$, as we have $|\psi|\in O(|\phi|\conc\formulaWidth(\phi))$.
	
	\proofsubparagraph{Negations} For $\phi=\neg\phi_1$, we first recurse on $\phi_1$ and construct  $\phi_1$. We then define   
\[	\psi\df \neg\psi\land\bigland_{x\in\fvar(\phi_1)} \ex{p,s}{\sv\weqeq p\conc x\conc s}.\]
	The right part of the formula acts as guard that expresses $x\subword\sv$ for every  $x\in\fvar(\psi_1)$. Hence, under both semantics, $\sigma\models\psi$ if and only if we have that $\sigma\models\phi_1$ does not hold although  $\sigma(x)\subword\sigma(\sv)$ holds for all $x\in\fvar(\phi_1)$.
	 This is another case where the width affects $|\psi|$.
	
	\proofsubparagraph{Quantifiers} If $\phi=\ex{x}{\hat{\phi}}$ or $\phi=\fa{x}{\phi_1}$, then  $x\neq\sv$ holds by by definition. Hence, we can obtain $\psi_1$ trough recursing on $\phi_1$. 
	If $\phi=\ex{x}{\hat{\phi}}$, we can simply define $\psi \df \ex{x}{\psi_1(x)}$.
	But if $\phi=\fa{x}{\phi_1}$, we need to take into account that the universe under $\logC$-semantics is $\Sigma^*$, which means that just applying the universal quantifier would check ``too many'' possible values of~$x$. Instead, we define 
	\[
	\psi\df \fa{x}{\bigl(\psi_1 \lor \neg \ex{p,s}{\sv\weqeq p\conc x\conc s} \bigr)}.
	\]
	That is, the quantifier considers only those $x$ with $x\subword\sv$. This is the only case where we introduce a negation (but at this case cannot occur for formulas that belong to the existential fragment, this does not affect the claim of this lemma).
\end{proof}
\subsection{Proof of Theorem~\ref{thm:alsoWithConstraints}}\label{app:thm:alsoWithConstraints}
\restateThmAlsoWithConstraints*
\begin{proof}
Expanded, the statement of the claim is (from \cref{thm:recog} and \cref{thm:mc}, respectively): 
\begin{itemize}
	\item Evaluation  is $\compclass{PSPACE}$-complete for  $\FCreg$ and $\compclass{NP}$-complete for  $\fragExPos{\FCreg}$. 
	\item  Model checking for  $\FCreg$   can be solved in 
	time 
	$O(k|\phi|n^{2k})$, for $k\df\formulaWidth(\phi)$ and 
	$n\df |\sigma(\sv)|$. 
\end{itemize}	
Regarding the first claim, the lower bounds follow immediately. For the upper bond, it is enough to observe that checking $(w,\sigma)\models\constr{x}{\alpha}$ is the same as checking whether $\sigma(x)\in\Lang(\alpha)$ holds, which can be decided in time that  polynomial in $|\sigma(x)|<|w|$ and $|\alpha|$ (for example, $O(|w||\alpha|))$ when using the Thompson algorithm. Hence, we can add this as a new base case in the proof of \cref{thm:recog} (see \cref{app:thm:recog}).

Regarding the second claim, note that we can extend the proof of \cref{thm:mc} (see \cref{app:thm:mc}) in two ways when dealing with a subformula  $\constr{x}{\alpha}$ on a word $w\in\Sigma^*$. In every case, we can keep with the ``bottom-up'' approach, and generate a list of all $u\subword w$ that satisfy $u\in\Lang(\alpha)$. For $n\df|w|$, we have $O(n^2)$ candidates, and each can be checked in time $O(n |\alpha|)$, which results in a total time of $O(n^3 |\alpha|)$ for this step. 
But there is also a special case that is likely to occur: If the constraint is used as a part of a conjunction $\phi\land\constr{x}{\alpha}$ with $x\in\fvar(\phi)$, we can build the results for $\phi$ and then use $\constr{x}{\alpha}$ to restrict them (that is, apply a semijoin). 
\end{proof}
\subsection{Proof of Lemma~\ref{lem:simple}}\label{app:lem:simple}
\restateLemSimple*
\begin{proof}
Let $\alpha$ be a simple regular expression -- that is, the star operator~$*$ can only be applied to words or to $\Sigma$, which we treat as a shorthand for $\bigcup_{a\in\Sigma} a$.	
We define $\phi^{\alpha}\in\fragExPos{\FC}$ recursively; and the only case that it is not straightforward and was not shown in Example~\ref{ex:starfree} is the star operator.
We first get the straightforward cases out of the way and define  
\begin{itemize}
	\item $\phi^{\emptyset}(x)\df (x\weqeq a x)$ for some $a\in\Sigma$,
	\item $\phi^{\emptyword}(x)\df (x\weqeq \emptyword)$,
	\item $\phi^{a}(x)\df (x\weqeq a)$ for each $a\in\Sigma$,
	\item $\phi^{(\alpha_1\cdot\alpha_2)}(x) \df \ex{x_1,x_2}{(x\weqeq x_1\conc x_2 \land \phi^{\alpha_1}(x_1)\land \phi^{\alpha_2}(x_2))}$, and
	\item $\phi^{(\alpha_1\ror\alpha_2)}(x) \df  \phi^{\alpha_1}(x)\lor \phi^{\alpha_2}(x)$.
\end{itemize}
For the star operator, simple regex-formulas allow only two choices, namely $\alpha=\Sigma^*$ or $\alpha=s^*$ for some $s\in\Sigma^*$. The first case is also straightforward; we define $\phi^{\Sigma^*}(x)\df (x\weqeq x)$. 

For the second case, we exclude the case $s=\emptyword$, for which the formula $x\weqeq\emptyword$ suffices. 
Our construction adapts the construction for the respective result for $\fragEx{\logC}$ (Theorem~5 in~\cite{kar:exp}) to $\fragExPos{\FC}$ and uses the following well-known fact from combinatorics on words: 
For every $w\in\Sigma^*$, let $\wroot{w}$  denote the \emph{root} of $w$; that is, the shortest word $r$ such that $w$ can be written as $w=r^k$ for some $k\geq 0$. For all $u,v\in\Sigma^+$, we have $uv=vu$ if and only if $\wroot{u}=\wroot{v}$ (see \eg Proposition~1.3.2 in Lothaire~\cite{lot:com}). 

This allows us to express $s^*$ in the following way: Let $p\geq 1$  be the unique value for which $s=\wroot{s}^p$ holds. We now define
\[\phi^{s^*}(x)\df (x\weqeq \emptyword) \lor (x\weqeq w) \lor \psi(x),\]
where
\[\psi(x)\df 
\begin{cases}
	\ex{y}{(x\weqeq ys \land x\weqeq sy)}
	&\text{ if $p=1$,}\\
	\ex{y,z}{\bigl(x\weqeq y^p \land z\weqeq y\cdot \wroot{s} \land z\weqeq \wroot{s}\cdot y\bigr)} &\text{ if $p\geq 2$.}
\end{cases}\]
Next, we show that $(w,\sigma)\models\phi^{s^*}$ if and only if $\sigma(x)\in s^*$ and $\sigma(x)\subword w$.

We begin with the \emph{only-if-direction}. Let $\sigma(x)=s^i$ for $i\geq 0$ and $\sigma(\sv)=w\supword\sigma(x)$. If $i=0$ or $i=1$, we have $(w,\sigma)\models  (x\weqeq \emptyword)$ or $(w,\sigma)\models  (x\weqeq s)$. 
Hence, we can assume $i\geq 2$. 
We first consider the case $p=1$.
Let $\tau\df\substsubst{\sigma}{y}{s^{i-1}}$. Then $\tau(y)\subword\tau(x)\subword \tau(\sv)$ holds by definition. Furthermore, we have $\tau\models (x\weqeq ys)$ due to
$$
\tau(x) = s^i = s^{i-1} s = \tau(y) s = \tau(ys)
$$
and $\tau\models (x\weqeq ys)$ for analogous reasons. Hence, $\sigma\models\psi$ and, thereby  $\sigma\models\phi^{s^*}$. This concludes the case $p=1$.

For the case $p\geq 2$, note that $s^i = \wroot{s}^{ip}$. We define the pattern substitution $\tau$ by $\tau(y)\df \wroot{s}^i$, $\tau(z)\df\wroot{s}^{i+1}$, and $\tau(u)=\sigma(u)$ for all other $u\in\Xi$; and claim $\sigma\models\psi$. 

First, note that  as $p\geq 2$ and $i\geq 2$, we have $i+1\leq ip$. This implies  $\tau(z)=\wroot{s}^{i+1} \subword\wroot{s}^{ip} = \tau(x)$ and, hence, $\tau(y)\subword \tau(z)\subword\tau(\sv)$. Now we have 
\[ \begin{array}{rcccl}
	\tau(x)& =&\wroot{s}^{ip}&=&\tau(y^p),\\
	\tau(z)&=& \wroot{s}^{i+1}&=& \tau( y\cdot \wroot{s} ),\\
	\tau(z)&=& \wroot{s}^{i+1}&=& \tau( \wroot{s}\cdot y ).\\
\end{array}\]
Hence, $\sigma\models\psi$, and thereby $\sigma\models\phi^{s^*}$. This concludes the case of $p\geq 2$ and this direction of the proof.

For the \emph{if-direction}, assume $\sigma\models\phi^{s^*}$. Then $\sigma(x)=\emptyword$, $\sigma(x)=s$, or $\sigma\models\psi$. There is nothing to argue in the first two cases, so assume the third holds. Again, we need to distinguish $p=1$ and $p\geq 2$.

We begin with $p=1$, and consider any $v\in\Sigma^*$ such that $\tau\models (x\weqeq ys \land x\weqeq sy)$ for $\tau\df\substsubst{\sigma}{y}{v}$. Then we have 
$
\tau(x)=\tau(ys)=\tau(sy)$ and, hence, $vs=sv$. This holds if and only if $u=\emptyword$ or, due to the fact mentioned above, $\wroot{u}=\wroot{s}$. In either case, we know that there exists some $i\geq 0$ with $v=s^i$. Hence, $\tau(x) = us = s^{i+1}$. As $\sigma(x)=\tau(x)$, we have $\sigma(x)\in s^+$.

For $p\geq 2$, consider $u,v\in\Sigma^*$ such that $\tau\models \bigl(x\weqeq y^p \land z\weqeq y\cdot \wroot{s} \land z\weqeq \wroot{s}\cdot y\bigr)$ for $\tau\df \substsubst{\sigma}{y}{u,z\mapsto v}$. Due to the last two equations, we have $u\wroot{s}=\wroot{s}u$. Again, we invoke the fact, and observe there is some $i\geq 0$ with $u=\wroot{s}^i$. Hence, $\tau(x)=\wroot{s}^{ip}=w^i$ and therefore, $\sigma(x)\in s^*$. This concludes this direction and the whole correctness proof.

Regarding the complexity of the construction, note that the length of $\phi^{\alpha}$ is linear in $|\alpha|$. We conclude that $\phi$ can be constructed in polynomial time.
\end{proof}
\subsection{Proof of Theorem~\ref{thm:FCvsSpanners}}\label{app:thm:FCvsSpanners}
\restateThmFCvsSpanners*
First, note that Freydenberger~\cite{fre:splog} introduces $\splog$ (for \emph{spanner logic}), a fragment of $\fragExPos{\logCreg}$ with $\splog\polyeq\RGXcore$, and $\splogneg$, which extends $\splog$ with negation and has $\splogneg\polyeq\RGXcored$.

We define $\splog$ and $\splogneg$ as follows, based on the definition from~\cite{fre:splog}\footnote{Technically, \cite{fre:splog} does not define $\splogneg$ as a fragment of $\logCreg$, but redefines the semantics in a way that is similar to $\FC$ (this is a particularly odd decision, as our definition acts exactly like the difference of $\RGXcored$, which $\splogneg$ replicates). The results in~\cite{fre:splog} are not affected by this distinction, and this definition works better with the present paper. Furthermore, we do not allow the use of automata in regular constraints, as we do not examine spanners that are based on automata. What we call $\splog$ is called $\splog_{\mathsf{rx}}$ in~\cite{fre:splog}.}.
\begin{definition}\label{def:splogrec}
	Fix a variable $\mv\in \Xi$, the \emph{main variable}. Then $\splog$, the set of all $\splog$-\emph{formulas},  is the subset of $\fragExPos{\logCreg}$ that is obtained from the following recursive rules.
	\begin{itemize}
		\item $(\mv\weqeq\eta_R)\in \splog$ for every $\eta_R \in((\Xi\setdiff\{\mv\})\cup\Sigma)^*$.
		\item If $\varphi_1,\varphi_2\in\splog$, then $(\varphi_1\land \varphi_2)\in\splog$.
		\item If $\varphi_1,\varphi_2\in\splog$ and $\fvar(\varphi_1)=\fvar(\varphi_2)$, then $(\varphi_1\lor \varphi_2)\in\splog$.
		\item If $\varphi\in\splog$ and $x\in \fvar(\varphi)\setdiff\{\mv\}$, then $(\exists x\colon \varphi)\in\splog$.
		\item If $\varphi\in\splog$ and $x\in \fvar(\varphi)$, then $(\varphi\land \constr{\alpha}(x))\in\splog$ for every regular expression $\alpha$.
	\end{itemize}

For $\splogneg$, the set of all $\splogneg$-\emph{formulas},  we add the following rule: 
\begin{itemize}
	\item If $\phi_1,\phi_2\in\splogneg$ and $\fvar(\varphi_1)\subseteq \fvar(\varphi_2)$, then $(\neg\phi_1\land \varphi_2)\in\splogneg$.
\end{itemize}\leavevmode
\end{definition}
There are similarities between this definition of $\splog$ (and $\splogneg$) and that of $\fragExPos{\FCreg}$ (and $\FCreg$): Both have a distinguished variable (the main variable $\mv$ and the universe variable $\sv$) that cannot be bound, and both allow only single variables on the left side of every word equation. 

In fact, when applied to a word $w$, both ensure that variables can only be mapped to factors of $w$. 
That is, for all $\phi\in\splogneg$, we have that $\sigma\models\phi$ implies that $\sigma(x)\subword \sigma(\mv)$ holds for all $x\in\fvar(\phi)$.
But as we are dealing with a fragment of $\logCreg$, the main variable $\mv$ has no special role in the semantics, which is why this property is ensured through the syntax.
This is why disjunction can only be applied to formulas with the same free variables, why regular constraints and negation have to be guarded, and why the left side of every word equation is the main variable.

Together, these restrictions make $\splogneg$-formulas quite unwieldy: Every relation has to be encoded in the main variable (a property that the formulas share with spanner representations). 
This frequently requires additional quantifiers, which makes parameters like the width of a formula or the number of variables far less useful.
The criterion that only $\mv$ appears on the left side of variables is relaxed in~\cite{fre:splog} by using syntactic sugar, but this makes the effect on criteria like width even worse.

In the following, we assume that no $\splogneg$-formula uses $\sv$ and that no $\FCreg$-formula uses $\mv$.
First, note that $\splogneg$-formulas can be directly interpreted as $\FCreg$-formulas.
\begin{lemma}\label{lem:splogAsFCreg}
For every $\phi\in\splogneg$, let $\phi_{\sv}$ denote the $\FCreg$-formula that is obtained by replacing all occurrences of $\mv$ in $\phi$ with $\sv$. Then for every $\sigma$ with $\sigma(\sv)=\sigma(\mv)$, we have $\sigma\models\phi$ if and only if $\sigma\models\phi_{\sv}$.
\end{lemma}
This follows directly from the definition of the syntax of $\splogneg$ and the fact that $\sigma(x)\subword\sigma(\sv)$ holds for every $\sigma$ with $\sigma\models\phi$.
As $\splog$-formulas use neither negation nor universal quantifiers, they can be directly interpreted as $\fragExPos{\FCreg}$-formulas.
The other direction is less straightforward; but we already did most of the work in the proof of \cref{lem:FCtoC}.
\begin{lemma}\label{lem:FCregToSplog}
Given $\phi\in\FCreg$, we can compute in polynomial time $\psi\in\splogneg$ such that for all $\sigma$ with $\sigma(\sv)=\sigma(\mv)$, we have $\sigma\models\phi$ if and only if  $\sigma\models\psi$.
If $\phi\in\fragExPos{\FCreg}$, then $\psi\in\splog$.
\end{lemma}
\begin{proof}
This requires only minor modifications to the proof of \cref{lem:FCtoC} (see \cref{app:lem:FCtoC}). We need to account for two differences. Firstly, $\splogneg$ does not allow the use of universal quantifiers. Secondly, that proof does not mention regular constraints. 

We address these in two steps: First, we replace every subformula $\fa{x}{\phi'}$ with the equivalent formula $\neg\ex{x}{\neg \phi'}$. 
We then convert the formula as in the proof of \cref{lem:FCtoC}, replacing every occurrence of $\sv$ with $\mv$, and every regular constraint $\constr{x}{\alpha}$ with the $\splog$-formula $\ex{p,s}{\mv\weqeq p\conc x \conc s} \land \constr{x}{\alpha}$.

By following the construction in~\cref{app:lem:FCtoC}, one can see that the result of this process is the desired $\splogneg$-formula.
Moreover, if $\phi$ is existential-positive, then the result is a $\splog$-formula. Clearly, the whole process is possible in polynomial time.
\end{proof}
As shown in~\cite{fre:splog}, we have $\splogneg\polyeq\RGXcored$ and $\splog\polyeq\RGXcore$ (Theorem~8.4 and Theorem~4.9, respectively).
We can now use \cref{lem:splogAsFCreg} and  \cref{lem:FCregToSplog} to adapt these conversions.  

In particular, to move from $\FCreg$ to $\RGXcored$ , we first use \cref{lem:FCregToSplog} to obtain a $\splogneg$-formula and then convert this to $\RGXcored$ as described in Section~4.2.1 of~\cite{fre:splog}.
The same route works for $\fragExPos{\FCreg})$, $\splog$, and $\RGXcore$.

To move from $\RGXcored$ to $\FCreg$, we first construct a $\splogneg$-formula that realizes the spanner representation.
The conversion of functional regex formulas to $\splog$ is described in Section~4.2.2 of~\cite{fre:splog} and is only slightly more complicated than the construction for regex patterns in \cref{app:lem:simple} in the present paper. 
The conversion of the relational operators is described in Section~4.2.5 of~\cite{fre:splog}, and it is just straightforward application of disjunction, conjunction, and existential quantifiers to express union, join, and projection. The last operator, equality selection is ``free'' in $\FCreg$ and $\splog$, as ``the same word'' is expressed by using the same variable or writing $x\weqeq y$.
By \cref{lem:splogAsFCreg}, the formula can be interpreted as an $\FCreg$-formula, which concludes the proof.
\subsection{Proof of Lemma~\ref{lem:FClanguagesFO}; and conversions between \FCtxt and \FOstrtxt}\label{app:lem:FClanguagesFO}
\restateLemFClanguagesFO*

Lemma~\ref{lem:FClanguagesFO} is actually a corollary of Lemma~\ref{lem:FOtoFC} further down, which describes a more detailed conversion from $\FC$ to $\FOstr$, and of \cref{lem:FCtoFO}, which describes the opposite direction.

In both directions, the conversions can be performed in polynomial time. Furthermore, $\FOstr$ can be extended with constraints in the same way as $\FC$.
The only difference being that a $k$-ary constraint for $\FC$ would become a $2k$-ary constraint in $\FOstr$, as the latter uses two variables to describe a factor.
In particular, one could use the resulting $\FOstrReg$ instead of $\FCreg$. As a sidenote, one could of course define an $\MSO$-version of $\FOstr$, instead of adding regular constraints to the latter. 
This article does not engage with $\MSO$ for two reasons: Firstly, to address other constraints than regular constraints, one would still need to extend $\MSO$ accordingly. But, more importantly, efficient combined complexity is an important topic for this article, and second-order quantification is much more expensive than first-order quantification.

When translating between $\FC$ and $\FOord$, we need to address the issue that the former refers to words, while the latter refers to positions in a word. Hence, the variables of the two logics do not correspond directly to each other.

This is similar to the situation for comparing $\FC$ and spanners that we encountered in Section~\ref{sec:FCandSpanners}, and we address it analogously through the notion of a formula from one logic \emph{realizing} a formula from the other. 
We start with the direction from $\FO$ to $\FC$.
\begin{definition}\label{def:FOtoFC}
Let $\phi\in\FOstr$. For every assignment $\alpha$ for $\phi$ on some structure $\struct{w}$, we define its \emph{corresponding substitution} $\sigma$ by $\sigma(x)\df w_{\spn{1,\alpha(x)}}$ for all $x\in \fvar(\phi)$. 

A formula $\psi\in\FC$ \emph{realizes} $\phi$ if $\fvar(\psi)=\fvar(\phi)$ and for all $w\in \Sigma^*$, we have $(w,\sigma)\models\psi$ if and only if $\sigma$ is the corresponding substitution of some $\alpha$ with $(w,\alpha)\models\phi$.
\end{definition}
Thus, $\psi$ represents node $i\in\{1,\dots, |w|+1\}$ through the prefix of $w$ that has length $i-1$.
\begin{example}
	$
	 \phi\df\strEqPred(\minConst,x,x,\maxConst) \land \lettPred{\ta}(x) 
	$
is realized by
	$
	\psi\df (\sv\weqeq xx) \land \ex{y}{(x\weqeq \ta y)}.
	$
\end{example}
Like for $\FC$,  the \emph{width} $\formulaWidth(\phi)$ of an $\FOstr$-formula $\phi$ is defined as the maximum number of free variables in any of its subformulas. 

Although the details require some effort, we can convert every $\FOstr$-formula into an $\FC$-formula, and this is possible with an manageable increase in the width:
\begin{lemma}[label=lem:FOtoFC,restate=restateLemFOtoFC]
	Given $\phi\in\FOstr$ with $k\df\formulaWidth(\phi)$, we can compute $\psi\in\FC$  in time $O(k|\phi|)$ that realizes $\phi$. 
	This preserves the properties existential and existential-positive.
	Furthermore, we have $\formulaWidth(\psi)=k+1$.
\end{lemma}
As the construction is a bit lengthy, the proof can be found in Section~\ref{app:lem:FOtoFC}.

The direction from $\FC$ to $\FO$ is less straightforward. We have to increase the number of variables, due to a counting argument:
On a word $w$, the number of possible assignments  can be quadratic in $|w|$ for an $\FC$-variable; but there are only  $|w|+1$ possible choices per $\FO$-variable.
Accordingly, we shall represent each variable $x$ with two variables $\fovop{x}$ and $\fovcl{x}$; and the goal is to express a substitution $\sigma$ in an assignment $\alpha$ by $\sigma(x)=w_{\spn{\alpha(\fovop{x}),\alpha(\fovcl{x})}}$. 
\begin{definition}\label{def:FCtoFO}
Let $\phi\in\FC$ and let $\sigma$ be a substitution for $\phi$. Let $\psi\in\FOstr$ with $\fvar(\psi)\df\{\fovop{x},\fovcl{x}\mid x\in\fvar(\phi)\}$.  
An assignment $\alpha$ for $\psi$ on $\struct{\sigma(\sv)}$ \emph{expresses} $\sigma$ if $\sigma(x)=\sigma(\sv)_{\spn{\alpha(\fovop{x}),\alpha(\fovcl{x})}}$ for all $x\in\fvar(\phi)$.
We say $\psi$ \emph{realizes} $\phi$ if, for all $w\in\Sigma^*$, we have
\begin{enumerate}
	\item if $(w,\alpha)\models \psi$, then $\alpha$ expresses some  $\sigma$ with $(w,\sigma)\models\phi$, and
	\item if $(w,\sigma)\models\phi$, then $(w,\alpha)\models\psi$ holds for all $\alpha$ that express~$\sigma$.
\end{enumerate}\leavevmode
\end{definition}
\begin{example}
 $\phi\df (\sv\weqeq xx)$ is realized by
$
\psi \df \ex{y}{\bigl(\strEqPred(\minConst,y,y,\maxConst) \land \strEqPred(\fovop{x},\fovcl{x},\minConst,y)  \bigr)}. 
$	
\end{example}
Analogously to Lemma~\ref{lem:FOtoFC}, we can convert from $\FC$ to $\FOstr$:
\begin{lemma}[label=lem:FCtoFO,restate=restateLemFCtoFO]
Given $\phi\in\FC$ with $k\df\formulaWidth(\phi)$, we can compute $\psi\in\FOstr$ in time $O(k|\phi|)$ that realizes $\phi$ and has $\formulaWidth(\psi)=2k+3$.
This preserves the properties existential and existential-positive.	
\end{lemma}
Again. the construction is a bit lengthy, and the proof can be found in Section~\ref{app:lem:FCtoFO}.

\subsubsection{Proof of Lemma~\ref{lem:FOtoFC}}\label{app:lem:FOtoFC}
\restateLemFOtoFC*
\begin{proof}
\newcommand{\psig}[1]{(#1\pref \sv)}
We use  $\psig{x}$ as shorthand for the formula $\ex{z}{(\sv\weqeq x z)}$. This formula is frequently used as a guard to ensure that  our construction has the ``prefix invariant'', by which we mean that $\sigma\models\psi$ implies $\sigma(x)\pref\sigma(\sv)$ for all constructed $\psi$ and all $x\in\fvar(\psi)$. Usually, we do not point this out. The reader can safely assume that every occurrence of $\psig{x}$ serves this purpose. Note the use of $\pref$ can increase  the width of the formula by 1; we discuss this in each case.
The main part of the proof is a structural induction along the definition of $\FOstr$. 

\proofsubparagraph{Base cases}
 We begin the construction with the base cases; the length of the constructed formula is discussed at the end of the whole construction.

\begin{itemize}
\item $x\foeq y$ where neither $x$ nor $y$ is $\minConst$ or $\maxConst$ is realized by 
\[(x\weqeq y)\land\psig{x}\land\psig{y}.\]
Simply using $x\weqeq y$ is not enough, as we need to ensure the ``prefix invariant''. This can increase the width of the formula by $1$.
If either of $x$ or $y$ is a constant, we simply replace any occurrence of $\minConst$ with $\emptyword$ and of $\maxConst$ with $\sv$.

\item $x < y$ where neither $x$ nor $y$ is a constant is realized by 
\[\psig{y}\land\biglor_{a\in\Sigma} \ex{z}{(y\weqeq x\conc a \conc z)}.\]
We do not need to include $\psig{x}$, as this is already implicitly ensured by the equations $y\weqeq x\conc a \conc z$ in the disjunction. The new variable $z$ increases the width by one (and we can also use this $z$ for $\pref$).

Now for the constants: If $y=\maxConst$, we consider three cases for $x$. If $x=\maxConst$, the formula is not satisfiable, and we realize it with the contradiction $(\sv\weqeq a)\land (\sv\weqeq aa)$ for some $a\in \Sigma$. If $x= \minConst$, the formula is realized by $\ex{z}{\biglor_{a\in\Sigma} (\sv\weqeq a z)}$. If $x$ is a variable,
we construct the formula as in the general case and replace $y$ with $\sv$.

If $y=\minConst$, the formula is a not satisfiable, and we realize it a contradiction (see above).

Finally, if $y$ is a variable, we only need to consider $x=\maxConst$ and $x=\minConst$. In the first case, we have a contradiction (see above); the second is realized by $\ex{z}{\biglor_{a\in\Sigma} (x\weqeq a z)}$. Neither of the constructions increases the width by more than one.

\item $\lettPred{a}(x)$ is realized by $\ex{z}{(\sv \weqeq x\conc a\conc z)}$ if $x$ is a variable,  $\lettPred{a}(\minConst)$ is realized by $\ex{z}{(\sv \weqeq a\conc z)}$, and  $\lettPred{a}(\maxConst)$ is realized by $\ex{z}{(\sv \weqeq z \conc a)}$. In each case, the width is increased by one.

\item $\nextPred(x,y)$ for variables $x$ and $y$ is realized by 
\[\psig{y}\land \biglor_{a\in\Sigma} (y\weqeq x\conc a).\]
If $x=\maxConst$ or $y=\minConst$, any contradiction realizes $\nextPred(x,y)$. Moreover, $\nextPred(\minConst,\maxConst)$ is realized by $\biglor_{a\in\Sigma}\sv\weqeq a$. 
Finally, for variables $x$ or $y$, we realize $\nextPred(x,\maxConst)$ $\biglor_{a\in\Sigma}\sv\weqeq xa$ and $\nextPred(\minConst,y)$ by $(y\pref \sv)\land \biglor_{a\in\Sigma}y\weqeq a$. Neither of these constructions increases the width by more than one.

\item $\strEqPred(x_1,y_1,x_2,y_2)$ is realized by 
\[\psig{y_1} \land \psig{y_2}\land \ex{z}{(y_1\weqeq x_1 z \land y_2\weqeq x_2 z)}\]
if all four parameters are variables. For constants, we adapt the construction as follows: If $y_i=\minConst$, we replace $y_i\weqeq x_i z$ with $(x_i\weqeq \emptyword)\land (z\weqeq\emptyword)$ and omit $\psig{y_i}$. If $x_i=\minConst$, we replace $x_i$ in the constructed formulas with $\emptyword$ (removing the tautology $\emptyword\weqeq\emptyword$ if  it is created by a combination of this and the previous case occurring together). Every $x_i=\maxConst$ or $y_i=\maxConst$ is replaced with $\sv$. 
Again, all cases increase the width by at most one.
\end{itemize}
\proofsubparagraph{Recursive steps}
For the recursive steps, let $\phi,\phi_1,\phi_2\in\FOstr$ be formulas that are realized by $\psi,\psi_1,\psi_2\in\FC$, respectively. 
\begin{itemize}
\item $\phi_1\land\phi_2$ is realized by $\psi_1\land \psi_2$.
\item $\ex{x}\phi$ is realized by $\ex{x}\psi$. 
\item $\fa{x}\phi$ is realized by \[\fa{x}(\lnot\psig{x}\lor \psi),\] 
which expresses $\fa{x}(\psig{x}\rightarrow \psi)$. This guard is necessary, as the $\FC$-quantifier ranges over all factors of $\sigma(\sv)$, but only prefixes of $\sigma(\sv)$ are relevant for the $\FO$-quantifier.

In cases where we prefer using a second additional variable over introducing a negation, we could instead use the formula
\[\fa{x}{\Bigl(\psi\lor \biglor_{a\in\Sigma}\biglor_{b\in\Sigma\setdiff \{a\}} \ex{z_1}{\bigl(\ex{z_2}{\sv\weqeq z_1\conc a\conc z_2} \land \ex{z_2}{\sv\weqeq z_1\conc b\conc z_2}\bigr)  }\Bigr)}.\]
\item $\lnot\phi$ is realized by $\lnot\psi\land\bigland_{x\in \fvar(\phi)} x\pref \sv$.
\item $\phi_1\lor\phi_2$ is realized by 
\[\bigl(\psi_1\land\bigland_{x\in \fvar(\phi_2)\setdiff\fvar(\phi_1)} x\pref \sv\bigr)
\lor \bigl(\psi_2\land\bigland_{x\in \fvar(\phi_1)\setdiff\fvar(\phi_2)} x\pref \sv\bigr).\]
\end{itemize}

\proofsubparagraph{Complexity}
Regarding the length of the constructed formula, note that the formulas for $<$ and $\nextPred$ depend on $\Sigma$. But as we assume $\Sigma$ to be fixed, this is only a constant factor. 

The only formulas that is affected by the width are the constraints and the disjunction: this leads to a factor of $k$ and brings the length of the final formula to $k|\phi|$. If no disjunctions occur,  this factor is not needed,  and we get a length of $O(|\phi|)$.

As all steps are straightforward, we can construct $\psi$ in time $O(|\psi|)$. 
\end{proof}
\subsubsection{Proof of Lemma~\ref{lem:FCtoFO}}\label{app:lem:FCtoFO}
\restateLemFCtoFO*
\begin{proof}
We show this with a structural induction along the definition of $\FC$.  Recall that it is our goal to represent each $\FC$-variable $x$ through the two $\FO$-variables $\fovop{x}$ and $\fovcl{x}$. We shall construct $\psi$ in such a way that $ \alpha(\fovop{x})\leq\alpha(\fovcl{x})$ holds for all assignments $\alpha$ that satisfy $\psi$.

As we shall see in the case for word equations, the total number of variables can be lowered to $2|\fvar(\phi)|+2$ if all word equations in $\phi$ have $\sv$ on their left side. Our constructions use  $x\leq y$ as shorthand for $x<y \lor x\foeq y$.

\proofsubparagraph{Word equations} Assume that $\phi= (x_L\weqeq \eta_R)$, with $x_L\in\Xi$ and $\eta_R\in(\Xi\cup\Sigma)^*$. We first handle a few special cases before proceeding to the main construction for word equations.

\proofsubparagraph{Word equations, special cases}
 We first handle the rather straightforward case of $\eta_R=\emptyword$. Here, we distinguish two cases, namely $x_L=\sv$ and $x_L\neq \sv$. The first means that we are dealing with the equation $\sv \weqeq \emptyword$. 
This is true if and only if $\struct{w}$ contains only a single node. We express this with
\[\psi\df (\minConst\foeq\maxConst).\]
For $x_L\neq\sv$, we can directly define
\[\psi\df  (\fovop{x_L}\foeq\fovcl{x_L}).\]
Recall that the spans of empty words in some word $w$ are exactly the spans $\spn{j,j}$ with $1\leq j \leq |w|+1$. Now for the more interesting case of $\eta_R\neq\emptyword$.  Here, we need to take care of one more special case; namely, that $\sv$ appears in $\eta_R$.
If $\eta_R$ contains one or more occurrences of $\sv$, we distinguish the following sub-cases:
\begin{enumerate}
	\item $\eta_R$ contains at least one terminal,
	\item $\eta_R$ contains no terminals.
\end{enumerate}
In the first case, we can conclude that there is no $\sigma$ with $\sigma\models\phi$. This is for the following reason: Assume $\sigma(x_L)=\sigma(\eta_R)$. This implies $|\sigma(x_L)|=|\sigma(\eta_R)|$. By definition, we also have $\sigma(x_L)\subword\sigma(\sv)$ and hence $|\sigma(x_L)|\leq |\sigma(\sv)|$. As $\eta_R$ contains $\sv$ and at least one terminal (which is constant under $\sigma$), we have $|\sigma(\eta_R)|\geq|\sigma(\sv)|+1$. Thus, $|\sigma(\eta_R)|>|\sigma(\sv)|\geq |\sigma(x_L)|$. Contradiction. As $\phi$ is not satisfiable, we choose the unsatisfiable formula
\[\psi\df\ex{x}{(\lettPred{a}(x)\land (x\foeq \maxConst)  )}.\]
Recall that we assume that we defined the node $|w|+1$ in $\struct{w}$ to be letter-less, which also ensure that this formula is indeed unsatisfiable. This allows us to construct an unsatisfiable $\fragExPos{\FOstr}$-formula that also works on $\struct{\emptyword}$ and does not assume that $|\Sigma|\geq 2$. 

In the second case, we know that $\eta_R\in \Xi^+$ and that it contains $\sv$ at least once.
If $\eta_R$ contains $\sv$ twice, then $\sigma\models\phi$ can only hold if $\sigma(x)=\emptyword$ holds for all $x\in \var(\eta_R)\cup\{x_L\}$. 
This is due to a straightforward length argument: If $\sigma\models\phi$, then $\sigma(x_L)=\sigma(\eta_R)$ and $\sigma(x_L)\subword \sigma(\sv)$. The first part implies $|\sigma(x_L)|=|\sigma(\eta_R)|$. 
As $\sv$ appears at least twice in $\eta_R$, we have $|\sigma(\eta_R)|\geq 2 |\sigma(\sv)|$.
Putting this together gives 
\[|\sigma(\sv)|\geq |\sigma(x_L)|\geq |\sigma(\eta_R)|\geq2|\sigma(\sv)|,\]
which implies $|\sigma(\sv)|=0$. This proves the claim. In this case, we define
\[\psi\df  \ex{x}{\bigl((\minConst\foeq\maxConst) \land \bigland_{y\in\fvar(\phi) }\fovop{y}\foeq\fovcl{y}\bigr)}.\]
The big conjunction only serves to ensure that $\psi$ has the correct free variables; as there are no other possible assignments in $\struct{\emptyword}$,  we do not need to make the equality explicit.

Hence, we can safely assume that $\eta_R\in\Xi^+$ and that it contains $\sv$  exactly once.
Again we distinguish two cases, namely $|\eta_R|=1$ and $|\eta_R|\geq 2$.

 If $|\eta_R|=1$, we have $\phi=(x_L\weqeq \sv)$. If $x_L=\sv$, we are dealing with the trivial formula $\sv\weqeq\sv$, and can just define 
\[\psi\df \ex{x}{(x\weqeq x)},\]
or some other trivially satisfiable formula. If $x_L\neq \sv$, we define
\[\psi\df (\fovop{x_L}\foeq \minConst )\land (\fovcl{x_L}\foeq\maxConst)\]
to express this equality. It is convenient not to use $\strEqPred$ here, as $x_L$ must encompass the whole structure.

Now for $|\eta_R|\geq 2$, where $\eta_R$ contains exactly one occurrence of $\sv$. If $x_L=\sv$, we can see from a straightforward length argument that $\sigma\models\phi$ if and only if $\sigma(y)=\emptyword$ for all $y\in \var(\eta_R)\setdiff\{\sv\}$. We express this with the formula
\[\psi\df \bigland_{y\in \var(\eta_R)\setdiff\{\sv\}} \fovop{y}\foeq\fovcl{y}.\]
If $x_L\neq \sv$, we also need to ensure that $\sigma(x_L)=\sigma(\sv)$ holds, as we have $\sigma(x_L)\subword\sigma(\sv)$ by definition and $\sigma(x_L)\supword\sigma(\sv)$ from the fact that $\sv$ occurs in $\eta_R$. We define
\[\psi\df (\fovop{x_L}\foeq \minConst) \land (\fovcl{x_L}\foeq\maxConst)\land \bigland_{y\in \var(\eta_R)\setdiff\{\sv\}} \fovop{y}\foeq\fovcl{y}.\]
This also takes care of the case where $x_L$ occurs in $\eta_R$. Then, we must have $\sigma(x_L)=\emptyword$ in addition to $\sigma(x_L)=\sigma(\sv)$. 

\proofsubparagraph*{Word equations, main construction}
After covering these special cases, we can proceed to the main part of the construction. 
Let $\eta_R=\eta_1 \cdots \eta_n$, $n\geq 1$, with $\eta_i\in (\Xi\cup \Sigma)^+$ and $\eta_i\neq\sv$ for all $1\leq i\leq n$. Note that $x_L=\sv$ might hold. 

We shall first discuss how to construct an $\FOstr$-formula with $n+1$ variables in addition to the $2|\fvar(\phi)|$ free variables from $\{\fovop{x},\fovcl{x}\mid x\in\var(\eta_R)\}$ that are required by definition. After that, we shall describe how to reduce this to three additional variables (by reordering quantifiers and re-using variables, as  commonly done for $\FO$ with a bounded number of variables, analogous to \cref{example:patCompress}).

These $n+2$ additional variables are the variables $y_1,\ldots,y_{n+1}$. The idea behind the construction is that each pair $(y_i,y_{i+1})$ shall represent the part of $\sigma(\eta_R)$ that is created by~$\eta_i$. 
If we do not want to keep the number of variables low, we define 
\begin{equation*}
\hat{\psi}\df\ex{y_1,\ldots,y_{n+1}}{}\begin{cases}
(y_1\foeq \minConst)\land \bigland_{i=1}^n \psi_i(y_i,y_{i+1})  \land (y_{n+1}\foeq\maxConst) & \text{if $x_L=\sv$,}\\
\strEqPred(\fovop{x_L},\fovcl{x_L},y_1,y_{n+1})\land \bigland_{i=1}^n \psi_i(y_i,y_{i+1}) & \text{if $x_L\neq\sv$,}
\end{cases} 
\end{equation*}
where the formulas $\psi_i$ are defined as follows for all $1\leq i \leq n$:
\begin{equation*}
\psi_i(y_i,y_{i+1}) \df \begin{cases}
\lettPred{a}(y_i) \land \nextPred(y_i,y_{i+1}) & \text{if $\eta_i=a\in \Sigma$},\\
\strEqPred(\fovop{x},\fovcl{x},y_i,y_{i+1})& \text{if $\eta_i=x\in Xi$.}
\end{cases}
\end{equation*} 
Although $\hat{\psi}$ is directly obtained from the pattern $\eta_R$, some explanations are warranted. Firstly, note that $\sv$ only plays a role if we have $x_L=\sv$. In this case, the use of $\minConst$ and $\maxConst$ ensures that $\eta_R$ encompasses all of $\sv$.

Moreover, observe that the construction ensures that  $\fvar(\hat{\psi})$ is the set of all $\fovop{x}$ and $\fovcl{x}$ such that $x\in\fvar(\hat{\phi})$. If $x_L=\sv$, then $\fvar(\hat{\phi})=\var(\eta_R)$, and the variables $\fovop{x}$ and $\fovcl{x}$ are ``introduced'' in the $\psi_i$ where $\eta_i=x$ holds. But if $x_L\neq \sv$ and $x_L\notin\var(\eta_R)$, then  $\strEqPred(\fovop{x_L},\fovcl{x_L},y_1,y_{n+1})$ not only ensures that $x_L$ and $\eta_R$ are mapped to the same word, but also that  $\fvar(\hat{\psi})$ contains $\fovop{x_L}$ and $\fovcl{x_L}$.

Finally, we observe that the construction does not need to specify that $y_i \leq y_{i+1}$ or $\fovop{x}\leq \fovcl{x}$ holds. By definition, $\nextPred$ and $\strEqPred$ guarantee this property and can act as guards.

Keeping this in mind, one can now prove by induction that for every $w\in\Sigma^*$, we have $(w,\alpha)\models \hat{\psi}$ if and only if $\alpha$ expresses some pattern substitution $\sigma$ with $(w,\sigma)\models\phi$. In other words, $\hat{\psi}$ realizes $\phi$. 
All that remains is to reduce the number of variables through a standard re-ordering and renaming process. 

 We first discuss the case of $x_L=\sv$, where we need only two variables.
Observe that for $2\leq i \leq n$, the variable $y_i$ is only used in the sub-formulas $\psi_{i-1}$ and $\psi_{i+1}$. Similarly, $y_1$ is only used in $\psi_1$ and in $(y_1\foeq \minConst)$, and $y_{n+1}$ is only used in $\psi_n$ and $(y_{n+1}\foeq \maxConst)$.
This allows us to use shift the quantifiers into the conjunction, which leads to  the following formula:
\begin{gather*}
\psi'\df \ex{y_1,y_2}{}\bigl((y_1\foeq\minConst)\land \psi_1(y_1,y_2)\\
\hphantom{\psi'\df \ex{}{}} \land \ex{y_3}{\bigl(\psi_2(y_2,y_3)} \\
\hphantom{\psi'\df \ex{y_1}{}}\land \ex{y_4}{\bigl(\psi_3(y_3,y_4)} \\
\hphantom{\psi'\df \ex{y_1,y_2,y_3}{}}\vdots\\
\hphantom{\psi'\df \ex{y_1,y_2}{}}\land \ex{y_{n+1}}{\bigl(\psi_n(y_n,y_{n+1})\land (y_{n+1}\foeq\maxConst)\bigr)}\cdots
\bigr)\bigr)\bigr),
\end{gather*}
for which  $\psi'\equiv\hat{\psi}$ holds. As observed above, each variable $y_i$ is only used together with $y_{i+1}$. Accordingly, we now obtain $\psi$ from $\psi'$ by replacing every variable $y_i$ where $i$ is odd with $z_1$, and every $y_i$ where $i$ is even with $z_2$. Then $\psi$ has only $|\fvar(\phi)|+2$ variables. More over, $\psi\equiv\psi'\equiv \hat{\psi}$ holds; and as we already established that $\hat{\psi}$ realizes $\phi$, we conclude that $\psi$ realizes $\phi$.

For the case of $x_L\neq \sv$, observe that $\hat{\psi}$ contains $\strEqPred(\fovop{x_L},\fovcl{x_L},y_1,y_{n+1})$. Hence, we cannot move the quantifier for $y_{n+1}$ to the ``bottom'' of the formula. Instead, we define
\begin{gather*}
\psi'\df \ex{y_1,y_2,y_{n+1}}{}\Bigl(\strEqPred(\fovop{x_L},\fovcl{x_L},y_1,y_{n+1})\land \psi_1(y_1,y_2)\\
\hphantom{\psi'\df\ex{y_{n+1}}{}} \land \ex{y_3}{\bigl(\psi_2(y_2,y_3)}\\ 
\hphantom{\psi'\df\ex{y_{n+1},y_2}{}} \land \ex{y_4}{\bigl(\psi_3(y_3,y_4)} \\ 
\hphantom{\psi'\df\ex{y_{n+1},y_2,y_1y_1}{}}\vdots\\
\hphantom{\psi'\df\ex{y_{n+1},y_2,y_1}{}}\land \ex{y_{n}}{\bigl(\psi_n(y_n,y_{n+1})\bigr)}\cdots
\bigr)\bigr)\Bigr).
\end{gather*}
We now obtain $\psi$ by renaming the $y_i$ with $1\leq i\leq n$ as in the previous case. Hence, the only difference is that $y_{n+1}$ remains unchanged, which leads to a total of $2|\fvar(\phi)|+3$ variables.

\proofsubparagraph{Conjunctions} If $\phi=\phi_1\land\phi_2$, we define
$\psi\df \psi_1\land\psi_2$, where $\psi_1$ and $\psi_2$ realize $\phi_1$ and $\phi_2$, respectively. The correctness of this construction follows directly from the induction assumption and Definition~\ref{def:FCtoFO}.

\proofsubparagraph{Disjunctions}  If $\phi=\phi_1\lor\phi_2$, we first construct the $\FOstr$-formulas $\psi_1$ and $\psi_2$ that realize $\phi_1$ and $\phi_2$, respectively. 
We cannot just define $\psi$ as $\psi_1\lor\psi_2$.  
Unless $\fvar(\phi_1)=\fvar(\phi_2)$ holds, this definition would accept assignments that do not realize any pattern substitution. 
For example, if we have $(w,\alpha)\models\psi_1$ with $\alpha(\fovop{x})>\alpha(\fovcl{x})$ for some variable $x\in\fvar(\phi_2)\setdiff\fvar(\phi_1)$, then $(w,\alpha)\models\psi_1\lor\psi_2$ holds.  

We address this problem by guarding variables that are only free in exactly one formula, and define 
\begin{equation*}
\psi\df \Bigl(\psi_1\land \bigland_{x\in\fvar(\phi_2)\setdiff\fvar(\phi_1)} \bigl(\fovop{x}\leq\fovcl{x}) \bigr)\Bigr)
\lor \Bigl(\psi_2\land \bigland_{x\in\fvar(\phi_1)\setdiff\fvar(\phi_2)} \bigl(\fovop{x}\leq\fovcl{x}) \bigr)\Bigr).
\end{equation*}
For all $w\in\Sigma^*$, we now have $(w,\alpha)\models\psi$ if and only if, firstly, $(w,\alpha)\models\psi_i$ for an $i\in\{1,2\}$  and, secondly,  $\alpha(\fovop{x})\leq \alpha(\fovcl{x})$ for all $x\in\fvar(\phi)$. This  holds if and only if $\alpha$ expresses some $\sigma$ with $(w,\sigma)\models\phi_i$.

\proofsubparagraph{Negations} If $\phi=\neg\hat{\phi}$, we first construct $\hat{\psi}$ that realizes $\hat{\phi}$, and then define 
\[\psi\df \lnot\hat{\psi}\land \Bigl(\bigland_{x\in\fvar(\hat{\phi})} \fovop{x}\leq\fovcl{x} \Bigr).\]
We face an issue that is analogous to the one for disjunction; defining $\lnot\hat{\psi}$ would lead to a formula that accepts assignments that do not express a pattern substitution. Again, the solution is guarding the free variables of $\hat{\phi}$.

\proofsubparagraph{Existential quantifiers} If $\phi=\ex{x}{\hat{\phi}}$, construct a formula $\hat{\psi}$ that realizes $\hat{\phi}$, and define  
$\psi\df \ex{\fovop{x},\fovcl{x}}{\hat{\psi}}$. 
As $\fovop{x}\leq \fovcl{x}$ is guaranteed as an induction invariant, we do not need to guard the two variables.

\proofsubparagraph{Universal quantifiers} If $\phi=\fa{x}{\hat{\phi}}$, construct $\hat{\psi}$ that realizes $\hat{\phi}$, and define 
\[\psi\df \fa{\fovop{x},\fovcl{x}}{\bigl((\fovop{x}>\fovcl{x})\lor\hat{\psi}\bigr)},\]
which amounts to defining $\fa{\fovop{x},\fovcl{x}}{\bigl((\fovop{x}\leq\fovcl{x})\rightarrow\hat{\psi}\bigr)}$. 
Again, we need to deal with the induction invariant: 
If we simply defined $\fa{\fovop{x},\fovcl{x}}{\hat{\psi}}$, then the formula would be invalid on all non-empty $w$.  

\proofsubparagraph{Complexity considerations} The special cases for word equations can be checked in time $O(|\phi|)$ and create formulas of constant length. The main construction for word equations creates a formula of length $O(|\eta_R|)$ and takes proportional time.

The only recursive cases that create formulas of a length more than linear are negation and disjunctions. Here, the guards increase the formula length to $O(k|\phi|)$, which dominate the final formula length and the total running time.
Hence, if $\phi$ contains neither  negations nor disjunctions, we have $|\psi|\in O(|\phi|)$; and the same holds for the run time.
\end{proof}
\subsection{Proof of Lemma~\ref{lem:anbn}}\label{app:lem:anbn}
\restateLemAnBn*

\newcommand{\NEstrEqPred}{\mathsf{NE}\strEqPred}
\newcommand{\FOne}{\FO[\NEstrEqPred]}
\newcommand{\newstruct}[2]{\tilde{\mathcal{A}}_{#1}^{\mathtt{#2}}}
\newcommand{\newstructgen}[2]{\tilde{\mathcal{A}}_{#1}^{#2}}
We start with some preliminaries; the actual proof is in Section~\ref{sec:lem:anbn:actualProof}
\subsubsection{Decomposing the structure}
In the proof, we use $\sqcup$ to denote the union of two disjoint sets.
The key part of the argument is the following formulation of the Feferman-Vaught theorem:
\begin{fvtheorem}[Theorem~1.6 in~\cite{DBLP:journals/apal/Makowsky04}]
	For every $q \in \mathbb{N} $ and for every first order formula $\phi$ of quantifier rank $q$ over a finite vocabulary, one can compute effectively a reduction sequence 
\[	{\psi}_1^A,\ldots, {\psi}_{k}^A, {\psi}^B_1, \ldots, {\psi}^B_{k}\]
	of first order formulas 
	 the same vocabulary and a Boolean function $B_{\phi}:\{0,1 \}^{2k} \rightarrow \{0,1\}$ such that 
\[	\mathcal{A} \sqcup  \mathcal{B} \models \phi \]
	if and only if
	$B_{\phi}(	b_1^A,\ldots, b_{k}^A, b^B_1, \ldots, b^B_{k}) = 1$ where 
	$b_j^A = 1 $ if and only if  $\mathcal{A} \models \psi_j^A$ and
	$b_j^B = 1 $ if and only if  $\mathcal{B} \models \psi_j^B$.
\end{fvtheorem}
This proof uses $\FOstr$-formulas instead of $\FC$-formulas (due to \cref{lem:FClanguagesFO}).
Intuitively, we show that any formula $\phi$ and structure $\struct{w}$ for some word $w\in \ta^* \tb^*$ can be translated into a formula $\psi$ that operates on the  union of two disjoint structures $\newstruct{w}{a}$ and $\newstruct{w}{b}$ such that $\struct{A}\models\phi$ if and only if $\newstruct{w}{a} \sqcup  \newstruct{w}{b}$ satisfies $\psi$.
We then apply the Feferman-Vaught theorem on $\psi$ and obtain some kind of separation of it.
Finally, we use the pigeonhole principle
to compose a word that it is outside of the language.

Formally, let $\phi^\prime \in \FOstr$.
Recall that $\phi^\prime$ is evaluated on the structures $\struct{w}$ for $w\in\Sigma^*$, with the universe $\{1,\dots,|w|+1\}$, where the node $|w|+1$ is not marked with any letter.
Also recall that the vocabulary of $\FOstr$ contains the two unary letter predicates $\lettPred{\ta}$ and $\lettPred{\tb}$, the binary relations $<$ and $\nextPred$, the 4-ary relation $\strEqPred$, and the constant symbols $\minConst$ and~$\maxConst$.

To apply the Feferman-Vaught theorem, we need to split $\struct{w}$ into two disjoint structures, and to rewrite $\phi$ into a suitable formula $\psi$.
In this case, \quot{suitable} means that $\phi$ and $\psi$ are equivalent on words of the form $w = \ta^m \tb^n$ with $m,n \geq 1$.
On these words, $\struct{w}$ contains nodes $1,\dots,m$ that are marked $\ta$, nodes $m+1,\dots,m+n$ that are marked $\tb$, and the unmarked node $m+n+1$.
Our goal is to split all non-unary relations in $\struct{w}$ into a structure $\newstruct{w}{a}$ for the  $\ta$-part and structure $\newstruct{w}{b}$  for the $\tb$-part. 
The only technical issue that we need to deal with is that the $\strEqPred$-relation contains tuples $(i_1,j_1,i_2,j_2)$ with $w_{\spn{i_1,j_1}}=w_{\spn{i_2,j_2}}$.
In these tuples, $j_1$ is the first position that is \emph{not} in $w_{\spn{i_1,j_1}}$, and likewise for $j_2$.
While this relation is more convenient when expressing spanners and converting from and to $\FC$-formulas, it creates issues when splitting the structures. 
Hence, we first define a 4-ary relation $\NEstrEqPred$ that contains those $(i_1,j_1,i_2,j_2)$ with $i_1\leq j_1$ and $i_2\leq j_2$ that have $w_{i_1}\cdots w_{j_1} = w_{i_2}\cdots w_{j_2}$. 
Hence, unlike $\strEqPred$, this relation only describes the equality of non-empty words.

We now can directly split the universe into $\{1,\dots,m\}$ and $\{m+1, \dots, m+n+1\}$, and define
\begin{itemize}
	\item $\newstruct{w}{a}$ over the universe $\{1,\ldots, m \}$,
	\item $\newstruct{w}{b}$ over the universe $\{m+1,\ldots, m+n+1 \}$,
\end{itemize}
where each structure $\newstructgen{w}{c}$ with $c\in\{\ta,\tb\}$ has the relation $\lettPred{c}$ and the relations $<_c$, $\nextPred_c$, and $\NEstrEqPred_c$, which are restrictions of the corresponding relations in $\struct{w}$ or of $\NEstrEqPred$ to the universe of $\newstructgen{w}{c}$.
Likewise, we use  $\minConst_c$ and $\maxConst_c$, where $\minConst_\ta$ and $\maxConst_\ta$ refer to $1$ and $m$, and $\minConst_\tb$ and $\maxConst_\tb$ refer to $m+1$ and $m+n+1$, respectively.
We use $\FOne$ to denote the set of all formulas over this modified vocabulary and  observe the following:
\begin{lemma}\label{lem:anbnrew}
	For every  $\phi \in \FOstr$,
	we can construct $\psi\in\FOne$ 
	such that for every $w = \ta^m \tb^n$ with $m,n \ge 1$, we have
	$\struct{w} \models \phi$ if and only if $\newstruct{w}{a} \sqcup  \newstruct{w}{b} \models \psi$.
\end{lemma}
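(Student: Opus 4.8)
The goal is to rewrite an $\FOstr$-formula $\phi$ (evaluated over $\struct{w}$) into an $\FOne$-formula $\psi$ that is equivalent over the disjoint union $\newstruct{w}{a}\sqcup\newstruct{w}{b}$, for words of the form $w=\ta^m\tb^n$ with $m,n\geq 1$. The plan is to proceed by a straightforward structural induction on $\phi$, translating each atom and each quantifier so that the split universe is handled correctly. The key observation is that the universe of $\struct{w}$ is partitioned as $\{1,\dots,m\}\sqcup\{m+1,\dots,m+n+1\}$, where the first block is exactly the universe of $\newstruct{w}{a}$ and the second block is the universe of $\newstruct{w}{b}$; in $\newstruct{w}{a}\sqcup\newstruct{w}{b}$ these are precisely the points satisfying $\lettPred{\ta}$ resp. not satisfying $\lettPred{\ta}$. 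So I will introduce two abbreviations, $A(x)\df\lettPred{\ta}(x)$ and $B(x)\df\lnot\lettPred{\ta}(x)$, that act as "side indicators" and let me tell which block a variable lives in; note that $B$ includes the letterless node $m+n+1$.

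**Handling atoms.** First I would handle the base cases. The letter predicates $\lettPred{\ta}(x)$ and $\lettPred{\tb}(x)$ translate unchanged (they already live in the vocabulary of the split structures). For the order $x<y$: since the block for $\ta$ entirely precedes the block for $\tb$ in $\struct{w}$, I rewrite $x<y$ as the disjunction of: both $x,y$ on the $\ta$-side and $x<_\ta y$; both on the $\tb$-side and $x<_\tb y$; or $x$ on the $\ta$-side and $y$ on the $\tb$-side. The successor $\nextPred(x,y)$ is similar but the cross-block case is the single pair $(\maxConst_\ta,\minConst_\tb)$. The constants: $\minConst$ becomes $\minConst_\ta$, and $\maxConst$ becomes $\maxConst_\tb$. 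The main work is $\strEqPred(i_1,j_1,i_2,j_2)$: I first reduce it to the non-empty-equality relation $\NEstrEqPred$, by noting that $\strEqPred(i_1,j_1,i_2,j_2)$ holds iff either $i_1\weqeq j_1$ and $i_2\weqeq j_2$ (both spans empty), or $i_1<j_1$, $i_2<j_2$, and the "last-included" versions are $\NEstrEqPred$-related — here I need to convert $(i,j)$ with $i<j$ to $(i,j')$ where $j'$ is the predecessor of $j$, which I can express using $\nextPred$. Once reduced to $\NEstrEqPred(i_1,j_1',i_2,j_2')$, I do a case analysis on which blocks the four points lie in. Because $w=\ta^m\tb^n$ is a concatenation of two unary blocks, a non-empty factor is a power of $\ta$ or a power of $\tb$; two factors are equal iff they are over the same letter and have the same length, which can be read off from whether the endpoints are $\ta$-side or $\tb$-side, together with the $\NEstrEqPred_\ta$ or $\NEstrEqPred_\tb$ relations and the order relations on each side. (A factor that straddles the boundary — i.e., $i$ on the $\ta$-side and $j$ on the $\tb$-side — contains both letters, so it can only equal another straddling factor, and then the equality forces both the $\ta$-parts and the $\tb$-parts to match; this again reduces to the per-side relations.)

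**Quantifiers and the main obstacle.** For the connectives the translation commutes; for quantifiers I relativize: $\ex{x}{\chi}$ becomes $\ex{x}{((A(x)\lor B(x))\land \chi')}$ — which is vacuous since every element of $\newstruct{w}{a}\sqcup\newstruct{w}{b}$ satisfies $A\lor B$, but it makes explicit that the ranges coincide — and $\fa{x}{\chi}$ translates directly, since the union of the two universes is exactly $\{1,\dots,m+n+1\}$. I expect the main obstacle to be the bookkeeping in the $\strEqPred$ case: getting the reduction from $\strEqPred$ to $\NEstrEqPred$ right (in particular the empty-span subtleties around the letterless node $m+n+1$ and around $\maxConst$), and then enumerating the constantly-many block-combinations for a 4-ary relation while correctly using $\maxConst_\ta$, $\minConst_\tb$, and the per-side length relations to encode "equal length". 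None of these steps is deep, but care is needed to ensure correctness exactly on the words $\ta^m\tb^n$ with $m,n\geq 1$ (and no claim is made outside this class). Once the atomic translations are verified, the induction for connectives and quantifiers is routine, and effectiveness of the construction is immediate since every step is a finite syntactic rewrite.
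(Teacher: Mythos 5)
Your proposal follows essentially the same route as the paper's proof: a purely syntactic rewriting that leaves the letter predicates alone, replaces $\minConst$ and $\maxConst$ by $\minConst_{\ta}$ and $\maxConst_{\tb}$, translates $<$ and $\nextPred$ by a case analysis on which block each point lies in, and handles $\strEqPred$ by separating the empty-span case and reducing the non-empty case to the relation $\mathsf{NE}\strEqPred$ via predecessors obtained from (the translated) $\nextPred$; your side indicators $\lettPred{\ta}(x)$ versus $\lnot\lettPred{\ta}(x)$ simply play the role of the paper's comparisons against $\maxConst_{\ta}$ and $\minConst_{\tb}$. The only substantive difference is that you explicitly treat equal factors that straddle the block boundary, a case the paper's case analysis for $\strEqPred$ silently omits (it matters only for reflexive tuples such as $\strEqPred(i,j,i,j)$ on a straddling span), so your version is, if anything, slightly more careful.
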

\begin{proof}
We obtain $\psi$ from $\phi$ by rewriting all parts that use the constant symbols $\minConst$ and~$\maxConst$ or any of the relation symbols $<$, $\nextPred$, or $\strEqPred$.
As we are only interested in words of the form $\ta^m\tb^n$ with $m,n\geq 1$, we can replace $\minConst$ with $\minConst_\ta$ and $\maxConst$ with $\maxConst_\tb$. 
We now replace the relations as follows:
\begin{itemize}
	\item Every occurrence of $x<y$ is replaced with the formula
	\[(x\mathbin{<_{\ta}}y )\lor( x\mathbin{<_{\tb}}y) \lor \Big( \big(  x\mathbin{<_{\ta}}\maxConst_\ta \lor x\foeq \maxConst_\ta \big) \land \big(  \minConst_\tb\mathbin{<_{\tb}}y \lor  \minConst_\tb\foeq y \big)\Big),\]
	which covers the cases that both variables are in the $\ta$-part, both are in the $\tb$-part, or the remaining case that $x$ is in the $\ta$-part and $y$ in the $\tb$-part.
	\item Every occurrence of $\nextPred(x,y)$ is replaced with
		\[
		\nextPred_\ta(x,y) \lor \nextPred_\tb(x,y) \lor \big(x\foeq\maxConst_\ta\land y\foeq\minConst_\tb\big), 
		\]
	where first two cases have both variables in the same part (as above), and the last describes that $x$ is the last $\ta$ and $y$ the first $\tb$.
	\item To simplify the explanation of the last case, we describe it in two steps. Every $\strEqPred(x_1,y_1,x_2,y_2)$ is first replaced with
\begin{multline*}
	\big(x_1\foeq y_1 \land x_2\foeq y_2\big)
	 \lor \exists z_1,z_2\colon \big( \nextPred(z_1,y_1)\land \nextPred(z_2,y_2) \land \NEstrEqPred_\ta(x_1,z_1,x_2,z_2)    \big)\\
	 \lor \exists z_1,z_2\colon \big( \nextPred_\tb(z_1,y_1)\land \nextPred_\tb(z_2,y_2) \land \NEstrEqPred_\tb(x_1,z_1,x_2,z_2)    \big)
\end{multline*}
The first conjunct describes the case where we have two occurrences of the empty word (which can be in any part of $w$). In all other cases, equal words must both be in the $\ta$-part or the $\tb$-part, which means that they are covered by the respective $\NEstrEqPred_c$. As these relations bound words with their last position (unlike $\strEqPred$), we use the $z_i$ to obtain these positions.
After this, we replace each of the two $\nextPred(z_i,y_i)$
with 
\[
\nextPred_\ta(z_i,y_i) \lor \big(z_i\foeq\maxConst_\ta\land y_i\foeq\minConst_\tb\big),
\]
to account for cases where $x_i$ is in the $\ta$-part and $y_i$ in the $\tb$-part.
\end{itemize}
Cases where these subformulas involve constants are handled accordingly.
On words from $\ta^+\tb^+$, each of the new subformulas acts exactly like the one it replaces. 
Hence, for every $w = \ta^m \tb^n$ with $m,n \ge 1$, we have
$\struct{w} \models \phi$ if and only if $\newstruct{w}{a} \sqcup  \newstruct{w}{b} \models \psi$.
\end{proof}
\subsubsection{Actual proof of Lemma~\ref{lem:anbn}}\label{sec:lem:anbn:actualProof}
\begin{proof}
Assume that there is some $\phi \in \FOstr$ with  $\Lang(\phi)=\{\ta^n \tb^n  \mid n\ge 1\}$
(by  \cref{lem:FClanguagesFO}, this is the same as assuming that this is an $\FC$-language). 
We apply Lemma~\ref{lem:anbnrew} 
to $\phi$ and obtain a formula $\psi\in\FOne$  
such that for all $w = \ta^m \tb^n$,  we have
$\newstruct{w}{a} \sqcup  \newstruct{w}{b} \models \psi$ if and only if $m=n$.
We now invoke the Feferman-Vaught theorem for $\psi$ and obtain a sequence of first order formulas
\[{\psi}_1^A,\ldots, {\psi}_{k}^A, {\psi}^B_1, \ldots, {\psi}^B_{k}\]
and a Boolean function $B_{\phi}:\{0,1\}^{2k} \rightarrow \{ 0,1\}$ such that for any word $w\in \{\ta^m \tb^n \mid m,n\geq 1 \}$, we have
$\newstruct{w}{a} \sqcup   \newstruct{w}{b} \models \psi$ if and only if 
$B_{\phi }({b}_1^A,\ldots, {b}_{k}^A,{b}_1^B,\ldots, {b}_{k}^B)=1$; where
${b}_j^A = 1$ if and only if  $\newstruct{w}{a} \models {\psi}_j^A$ and 
${b}_j^B = 1$ if and only if  $\newstruct{w}{b} \models {\psi}_j^B$.

Since $\{\ta^n \tb^n  \mid n\ge 1\}$ contains infinitely many words, 
we can use the pigeonhole principle to conclude that there exists  $m \ne n$
such that for $w_m\df \ta^m\tb^m$ and $w_n\df\ta^n\tb^n$, 
we have for every $j$
\begin{itemize}
	\item $\tilde{\mathcal{A}}_{w_m,\ta} \models {\psi}_j^A$ if and only if  $\tilde{\mathcal{A}}_{w_n,\ta} \models {\psi}_j^A$, and
\item $\tilde{\mathcal{A}}_{w_m,\tb} \models {\psi}_j^B$ if and only if  $\tilde{\mathcal{A}}_{w_n,\tb} \models {\psi}_j^B$.
\end{itemize}
In other words, $w_m$ and $w_n$ produce the same vectors of $2k$ bits, which means that the Boolean function $B$ has the same result.
Therefore, we can conclude that $\tilde{\mathcal{A}}_{w_m,\ta} \sqcup  \tilde{\mathcal{A}}_{w_n,\tb} \models \psi$.
Together with
Lemma~\ref{lem:anbnrew}, 
this gives us $\ta^m \tb^n \models \phi$, which is  a contradiction.
\end{proof}
\subsection{Proof of Theorem~\ref{thm:equalLength}}\label{app:thm:equalLength}
\restateThmEqualLength*
\begin{proof}
Recall that we assume $|\Sigma|\geq 2$. 
We first prove that we can extend Lemma~\ref{lem:anbn} to show that there is no 
$\FCreg$-formula that defines  the language $L_{el}\df \{\ta^n\tb^n\mid n\geq 1\}$.
Assume or the sake of a contradiction that there is a sentence $\phi\in\FCreg$ such that $\Lang(\phi)=L_{el}$. 
Our goal is now to prove that there exists a sentence $\psi\in\FC$ such that $\Lang(\phi)=\Lang(\psi)$.
 
To construct $\psi$, we first obtain $\psi'\in\FCreg$ by replacing every constraint $\constr{x}{\alpha}$ in $\phi$ with a constraint for the language $\Lang(\alpha)\cap \ta^*\tb^*$. This is possible, as each of these languages $\Lang(\alpha)\cap \ta^*\tb^*$ is regular, due to the fact that the class of regular languages is closed under intersection.

Then  for all $w\in\ta^*\tb^*$, we have $w\in \Lang(\psi')$ if and only if $w\in\Lang(\phi)$; as on these words, all variables in $\phi$ can only be mapped to elements of $\ta^*\tb^*$ as well. 
Next, we use Lemma~6.1 from~\cite{fre:splog}, which states (for core spanners) that every bounded regular language is an $\fragExPos{\FC}$-language; where a language $L$ is bounded if it is subset of a language $w_1^*\cdots w_k^*$ with $k\geq 1$ and $w_1,\ldots,w_k\in\Sigma^*$. 
Clearly, $\ta^*\tb^*$ is bounded, which means that all constraints in $\psi'$ use bounded regular languages. 

Thus,  we can obtain $\psi\in\FC$ from $\psi'$ by replacing every constraint in $\psi'$ with an equivalent $\fragExPos{\FC}$-formula.
Then we have $\psi\equiv \psi'$, which gives us $w\models\psi$ if and only if $w\models\phi$  for all $w\in\ta^*\tb^*$.
We conclude that
\[
\phi\equiv \constr{\sv}{\ta^*\tb^*}\land\psi.
\]
Now $\psi$ is an $\FC$-formula; and as  $\ta^*\tb^*$ is a simple regular expression, we can rewrite it into an equivalent  $\FC$-formula, using  Lemma~\ref{lem:simple}. 
Hence, $\phi$ is equivalent to an $\FC$-formula; which means that $\Lang(\phi)=L_{el}$ is an $\FC$-language.
This contradicts Lemma~\ref{lem:anbn}; hence, $\phi$ cannot exist.

Now assume that there the equal length relation is expressible in $\FCreg$, that is, assume there is some $\phi_{el}(x,y)$ such that $\sigma\models\phi_{el}$ if and only if $|\sigma(x)|=|\sigma(y)|$.
Then we have $\Lang(\phi_{el})=L_{el}$ for
\[
\phi_{el}\df \sv\weqeq xy \land \phi_{el}(x,y) \land \constr{x}{\ta^*}\land \constr{y}{\tb^*}.
\]
This contradicts the previous paragraph, which means that the equal length relation is not expressible in $\FCreg$.
\end{proof}
By Theorem~\ref{thm:FCvsSpanners}, this inexpressibility also translates to $\RGXcored$.
While Fagin et al.~\cite{fag:spa} showed that this relation cannot be expressed in $\RGXcore$ (see~\cite{fre:splog} for a simpler proof), this is the first inexpressibility result for generalized core spanners.

\end{document}